\newcommand\abs[1]{\left|#1\right|}
\numberwithin{equation}{subsection}
\newtheorem{thm}{Theorem}[subsection]
\newtheorem{lem}[thm]{Lemma}
\newtheorem{prop}[thm]{Proposition}
\begin{document}

\nocite{*}

\title{An indefinite system of equations governing the fractional quantum Hall effect}

\author{Joseph Esposito\footnote{Email address: esposito@cims.nyu.edu}\\ 
Courant Institute of Mathematical Sciences\\
New York University\\
New York, NY 10012}

\maketitle
\providecommand{\keywords}[1]{{{Keywords:}} #1}
\providecommand{\MSC}[1]{{{MSC numbers:}} #1}
\begin{abstract}
In this paper, we establish existence of solutions to an indefinite coupled non-linear system.  We use partial coercivity to establish existence of a critical point to an indefinite functional and thus the existence of solutions on both bounded domains and doubly-periodic domains.  We then take the limit of the bounded domain to extend solutions on a bounded domain to the full-plane.  We also provide asymptotic decay analysis for solutions over the full-plane.  Necessary and sufficient conditions are also provided for the doubly periodic domain.
\end{abstract}
\keywords{calculus of variations, fractional quantum Hall effect, Chern-Simons theory, indefinite functional, partial coercivity}\\\\
\MSC{35J50, 35J60, 81V70}



\section{Introduction}


The quantum Hall effect is a physical phenomenon that is observed in condensed-matter physics.  This phenomenon occurs when a two-dimensional gas, such as that found in GaAs and AlGaAs semiconductors \cite{chakraborty2013quantum}, is subjected to a very low temperature and a magnetic field orthogonal to the plane the gas lies in.  Consequently, a quantity called the Hall conductance is quantized in the form of a filling factor, which comes from the interaction of the electrons \cite{laugh,klitz}.  This filling factor may take on integer values or fractional values and is described  by both the electron density and the quantization of magnetic flux.  When the filling factor takes on fractional values, this phenomenon is called the fractional quantum Hall effect, or, FQHE  \cite{levkivskyi2012mesoscopic,bondsling3,PhysRevB.95.125302,chakraborty2013quantum,taylor2002quantum}. The model describing the FQHE in a two-dimensional double-layered system was described with the use of Chern-Simons theory \cite{hans,hany,lieby,lin,bt,taran,wangy,yangy1,yangy2,yangy3,yang2013solitons,frol1,jackiw,Thoules1,Thoules2}.   

The goal of this paper is to study the coupled nonlinear elliptic system described in \cite{ichinose1997topological,medina}, but this time allow for the coupling matrix to be indefinite. That is, we study the system over $\mathbb{R}^2$ with $x=(x_1,x_2)$,
\begin{equation}\label{maineq}
\begin{cases}
\Delta u=4k_{11}e^u+4k_{12}e^v-4+4\pi\displaystyle\sum\limits_{j=1}^{N_1}\delta_{p_j}(x)\\
\Delta v=4k_{21}e^u+4k_{22}e^v-4+4\pi\displaystyle\sum\limits_{j=1}^{N_2}\delta_{q_j}(x)\\
u,v\rightarrow-\ln2\qquad\text{as}\qquad\abs{x}\rightarrow\infty
\end{cases}
\end{equation}
where
\begin{equation}\label{kmat}
K=(k_{ij})=\frac{1}{p}
\begin{pmatrix}
p+q&p-q\\p-q&p+q
\end{pmatrix},
\end{equation}
and the point vortices are given by the Dirac distributions centered at $p_j$ and $q_j$ in $\mathbb{R}^2$.  The parameters $p,q\neq0$ are coupling parameters and are taken to be real numbers.   Finally, the vortex numbers of the upper and lower layers are given by $N_1$ and $N_2$ respectively. 

For such a system, Ichinose and Sekiguchi discussed radially symmetric solutions for the $(m,m,n)$ Halperin state \cite{ichinose1997topological}.  Later, in \cite{medina}, Medina established doubly-periodic and full-plane solutions when the coupling matrix, $K$, is positive definite.   This restricted the parameters $p$ and $q$ to satisfy $1+q/p>0$ and $pq>0$, which stringently limits the applicability of the results in \cite{medina}.  In this present work, we study a matrix $K$ of the form given by (\ref{kmat}) satisfying $\det{(K)}=4q/p<0$ and so the system becomes indefinite.  Since the value of $p$ is positive in the derivation of the system \cite{ichinose1997topological,medina}, an indefinite matrix, $K$, will impose the condition that $q<0$.  We then see that the parameter term $1+q/p$ may no longer be positive.  That is, if $q<-p$, then $1+q/p<0$.  One important consequence of this work is that, when combined with \cite{medina}, the full spectrum of $K$ will be covered for doubly-periodic domains.  

When provided with such an indefinite coupling matrix, the difficulty in studying such a system arises after establishing a variational principle.  Normally, one would hope to find a minimizer to the functional, or even a saddle point.  However, such a functional will not have a minimizer and establishing a coercive lower bound becomes impossible.  We are able to modify the approach of solving a constrained minimization problem and then establish partial coercivity.  That is, we select an appropriate admissible class based upon the equation that contributes to the indefinite nature of the functional so that the remaining part of the functional will be coercive.  We are able to use this method of partial coercivity to establish existence of solutions to a regularized system on bounded domains.  Then, we are able to take the limit of the domain to the full-plane.  Prior to doing so, we consider the single equation on a bounded domain and then take that limit.  These solutions will be of finite energy. That is, they satisfy the topological boundary condition  
\begin{equation}
\nonumber u(x)\rightarrow-\ln(2)\qquad v(x)\rightarrow-\ln(2)\qquad\text{ as }\abs{x}\rightarrow\infty.
\end{equation}

Since $K$ is symmetric, in the rest of this paper we will write $k_{11}=k_{22}$ and $k_{12}=k_{21}$.  We will also follow the convention in \cite{medina} by writing $\det{(K)}=\abs{K}$ and by denoting the size of a domain, $\Omega$, by $\abs{\Omega}$.  In the work that follows, it will be necessary for us to separate the possibilities for and indefinite matrix $K$ into the two cases:
\begin{enumerate}
\item If $-4\leq\abs{K}<0$, then $1<k_{12}\leq2$ and $0\leq k_{11}<1.$
\item If $\abs{K}<-4$, then  $k_{12}>2$ and $ k_{11}<0$.
\end{enumerate} 

In the case when $\abs{K}<-4$, we will establish existence of finite energy solutions of (\ref{maineq}) over bounded domains and doubly period domains.  When $-4\leq \abs{K}<0$, we are able to establish the existence of finite energy solutions over the full-plane.  Not only will we establish existence of solutions, but we will also provide an asymptotic analysis of solutions to (\ref{maineq}) for this range of $-4\leq \abs{K}<0$.  


The main results of this paper are presented below.
\begin{thm}\label{asym}
Let $K$ be such that $-4\leq\abs{K}<0$ and let the pair $(\tilde{u},\tilde{v})=(u+\ln2,v+\ln2)\in L^1(\mathbb{R}^2)\times L^1(\mathbb{R}^2)$ be a solution to 
\begin{equation}\label{maineq2}
\begin{cases}
\Delta u=4k_{11}e^u+4k_{12}e^v-4+4\pi\sum\limits_{j=1}^{N_1}\delta_{p_j}(x)\qquad\text{in }\mathbb{R}^2\\
\Delta v=4k_{12}e^u+4k_{11}e^v-4+4\pi\sum\limits_{j=1}^{N_2}\delta_{q_j}(x)\qquad\text{in }\mathbb{R}^2.
\end{cases}
\end{equation}
Then we have the following estimates,
\begin{equation}
\abs{u(x)+\ln2}+\abs{v(x)+\ln2}\leq C\frac{e^{-2\abs{x}}}{\abs{x}^{\frac{1}{2}}}\qquad\text{and}\qquad
\abs{\nabla u(x)}+\abs{\nabla v(x)}\leq C\frac{e^{-2\abs{x}}}{\abs{x}^{\frac{1}{2}}}.
\end{equation}
\end{thm}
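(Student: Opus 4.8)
The plan is to establish the exponential decay by linearizing the system at the value $(u,v) = (-\ln 2, -\ln 2)$ and then exploiting the spectral gap of the resulting linear operator. Writing $\tilde u = u + \ln 2$ and $\tilde v = v + \ln 2$, one has $e^u = \tfrac12 e^{\tilde u} = \tfrac12(1 + \tilde u + O(\tilde u^2))$, and similarly for $v$, so that away from the vortex points the system becomes
\begin{equation}
\nonumber
\begin{cases}
\Delta \tilde u = 2k_{11}\tilde u + 2k_{12}\tilde v + (\text{quadratic}),\\
\Delta \tilde v = 2k_{12}\tilde u + 2k_{11}\tilde v + (\text{quadratic}),
\end{cases}
\end{equation}
since $2k_{11} + 2k_{12} = 2(k_{11}+k_{12}) = 2\cdot 2 = 4$ cancels the constant $-4$ and the delta terms are supported at finitely many points. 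The coefficient matrix $2K$ has eigenvalues $2(k_{11}+k_{12}) = 4$ and $2(k_{11}-k_{12}) = 2\,\frac{2q}{p}\cdot\frac{p}{?}$; the key point is that in the range $-4 \le \abs K < 0$ both relevant combinations diagonalize the system into two scalar equations $\Delta w_{\pm} = \lambda_{\pm} w_{\pm} + \cdots$ with $\lambda_+ = 4$ governing the slowest-decaying mode, which yields the decay rate $e^{-2\abs x}$ with the stated $\abs x^{-1/2}$ correction characteristic of the two-dimensional Bessel kernel.

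The key steps, in order, would be: (i) first establish that $\tilde u, \tilde v \to 0$ as $\abs x \to \infty$ — i.e.\ upgrade the $L^1$ hypothesis to pointwise decay — using the integral representation via the Green's function of $\Delta$ on $\mathbb R^2$, a standard elliptic regularity/potential estimate bootstrap, together with the fact that $e^{\tilde u}, e^{\tilde v} - 1$ are integrable; (ii) diagonalize by introducing $s = \tilde u + \tilde v$ and $t = \tilde u - \tilde v$, reducing to two decoupled scalar semilinear equations; (iii) derive a differential inequality of the form $\Delta f \ge (4 - \varepsilon) f$ outside a large ball, valid for $f = \abs{\tilde u} + \abs{\tilde v}$ once the nonlinear remainder is absorbed (legitimate because $\tilde u, \tilde v$ are already small there); (iv) construct a radial supersolution comparison function, e.g.\ a multiple of the modified Bessel function $K_0(2\abs x)$ or more simply $C\abs x^{-1/2} e^{-2\abs x}$, and invoke the maximum principle on the exterior domain to get the pointwise bound on $\abs{\tilde u} + \abs{\tilde v}$; (v) obtain the gradient estimate by interior elliptic (Schauder or $W^{2,p}$) estimates applied on unit balls centered at $x$ with $\abs x$ large, using that the right-hand side is now known to be $O(e^{-2\abs x}\abs x^{-1/2})$ there.

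The main obstacle I expect is step (iii)–(iv): getting the decay \emph{rate} sharp, with the correct exponent $2$ (not merely \emph{some} exponential rate) and the correct polynomial prefactor $\abs x^{-1/2}$, rather than just $e^{-\mu\abs x}$ for some unspecified $\mu < 2$. This requires care with the comparison function — one cannot use a bare exponential $e^{-2\abs x}$ as a supersolution of $\Delta f = 4f$ in $\mathbb R^2$ because of the dimensional Laplacian term $\tfrac{1}{r}\partial_r$, which is exactly why the Bessel-type correction $r^{-1/2}$ appears; one must check that $C r^{-1/2} e^{-2r}$ is a genuine supersolution of the linearized inequality for $r$ large, i.e.\ that $\Delta(r^{-1/2}e^{-2r}) \le 4\, r^{-1/2} e^{-2r}(1 + o(1))$, and then handle the $\varepsilon$ loss by choosing the radius of the exterior domain large enough. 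A secondary subtlety is that the sharp rate should really be dictated by the eigenvalue $4$ of $2K$ associated with the eigenvector $(1,1)$; one should verify that the \emph{other} mode (eigenvalue $2(k_{11}-k_{12}) = \frac{4q}{p} \in [-4,0)$, which is \emph{negative}) does not cause problems — indeed a negative coefficient would make that linearized scalar equation non-elliptic-coercive, so one must argue that the relevant decaying solution of $\Delta w = \lambda w$ with $\lambda < 0$ still decays (here one likely falls back on the already-established boundedness plus the $L^1$ control plus a separate argument, or restricts attention to showing the combined quantity is controlled by the dominant mode); resolving this cleanly is where I would expect the proof to spend most of its effort.
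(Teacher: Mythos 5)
There is a genuine gap, and it is exactly the one you flag at the end but do not resolve: the diagonalization in your step (ii) produces, for the difference mode $t=\tilde u-\tilde v$, the linearized equation $\Delta t = 2(k_{11}-k_{12})\,t+\cdots$ with coefficient $2(k_{11}-k_{12})=\abs{K}\in[-4,0)$, i.e.\ a \emph{negative} zeroth-order coefficient. For such an equation there is no supersolution/maximum-principle comparison of the kind you invoke in steps (iii)--(iv) (solutions of $\Delta w=\lambda w$ with $\lambda<0$ are oscillatory Bessel-type and need not decay), so the claimed differential inequality $\Delta f\ge(4-\varepsilon)f$ for $f=\abs{\tilde u}+\abs{\tilde v}$ cannot be derived from the decoupled system; the indefiniteness of $K$ is precisely what makes this route break down, and saying one would ``fall back on boundedness plus $L^1$ control plus a separate argument'' leaves the central difficulty of the theorem unaddressed. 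The paper avoids diagonalization altogether: it applies Kato's inequality to $\abs{\tilde u}$ and $\abs{\tilde v}$ \emph{separately in the original coupled system}, and uses the sign structure forced by $-4\le\abs{K}<0$ (namely $k_{11}\ge0$, $k_{12}=2-k_{11}>0$) to get
$\Delta\abs{\tilde u}\ \ge\ 2k_{12}\abs{e^{\tilde v}-1}+2k_{11}\abs{e^{\tilde u}-1}\ \ge 0$
outside a ball containing the vortices (and symmetrically for $\abs{\tilde v}$), where the nonnegative coefficients sum to $4$. This both yields the preliminary decay (subharmonicity plus the sub-mean-value property is how the $L^1$ hypothesis gives $\abs{\tilde u}+\abs{\tilde v}\le C/\abs{x}^2$ directly, rather than your vaguer Green's-function bootstrap on $\mathbb{R}^2$, where the kernel is logarithmic) and sets up a comparison of \emph{both} $\abs{\tilde u}$ and $\abs{\tilde v}$ against a single radial function $w_0$ solving $w''+w'/t-(4+\Phi)w=0$, with $\Phi(r)$ built from the quotients $\abs{e^{\tilde u}-1}/\abs{\tilde u}$, $\abs{e^{\tilde v}-1}/\abs{\tilde v}$; Bellman's ODE result then gives $w_0\asymp e^{-2r}r^{-1/2}$, which is how the sharp rate and the $\abs{x}^{-1/2}$ prefactor are obtained without ever facing the negative eigenvalue.

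Your remaining steps are consistent with the paper where they overlap: the $\varepsilon$-loss worry in (iv) is handled in the paper by absorbing the nonlinear quotient into the coefficient $\Phi$ of the comparison ODE (together with a Kato-type inequality for $Z=\abs{\tilde u}-w_0-\varepsilon$ without compact support), and your step (v) is essentially the paper's gradient argument via the interior estimate $\|\nabla u\|_{L^{\infty}(B_{1/2})}^2\le C\left(\|u\|_{L^{\infty}(B_1)}+\|f\|_{L^{\infty}(B_1)}\right)\|u\|_{L^{\infty}(B_1)}$ on unit balls. But as written, the proposal does not prove the theorem: the key mechanism that makes the hypothesis $-4\le\abs{K}<0$ usable — the sign of $k_{11}$ feeding a Kato-inequality bound on $\abs{\tilde u}$, $\abs{\tilde v}$ themselves — is missing, and without it the decay rate $e^{-2\abs{x}}$ is not reachable by your mode decomposition.
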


\begin{thm}\label{necessary}
The system of equations given by (\ref{maineq}) has a solution over a doubly-periodic domain, $\Omega$, if and only if
\begin{equation}
\abs{\Omega}>\frac{\pi}{2}\left(\abs{\frac{p}{q}}\abs{N_-}+N_+\right),
\end{equation}
where $N_-=N_1-N_2$ and $N_+=N_1+N_2.$
\end{thm}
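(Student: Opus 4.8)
My plan is to treat the two implications separately: the forward implication (necessity of the inequality) is a direct integral computation, while the reverse implication (sufficiency) relies on the variational construction with partial coercivity.

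For necessity, suppose $(u,v)$ solves (\ref{maineq}) on the doubly-periodic cell $\Omega$. I would integrate both equations over $\Omega$; periodicity kills the Laplacian terms and each Dirac mass contributes $1$, so setting $a:=\int_\Omega e^{u}>0$ and $b:=\int_\Omega e^{v}>0$ one obtains the linear system
\begin{equation}
k_{11}a+k_{12}b=\abs{\Omega}-\pi N_1,\qquad k_{12}a+k_{11}b=\abs{\Omega}-\pi N_2 .
\end{equation}
Since $k_{11}+k_{12}=2$ and $k_{11}-k_{12}=2q/p$ for the matrix (\ref{kmat}), adding and subtracting these identities gives $a+b=\abs{\Omega}-\frac{\pi}{2}N_+$ and $a-b=-\frac{\pi p}{2q}N_-$, so $\abs{a-b}=\frac{\pi}{2}\abs{p/q}\abs{N_-}$. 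The positivity of $a$ and $b$ is equivalent to $a+b>\abs{a-b}$, and substituting the two expressions turns this into exactly $\abs{\Omega}>\frac{\pi}{2}\bigl(\abs{p/q}\abs{N_-}+N_+\bigr)$ (which a fortiori forces $a+b>0$). This shows the inequality is necessary.

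For sufficiency I would reverse the computation: assuming the inequality, the numbers $a$ and $b$ reconstructed from $a+b$ and $a-b$ are both strictly positive, so the natural constraints are attainable. I would desingularize via $u=u_0^1+\phi$, $v=u_0^2+\psi$ using the doubly-periodic Green's function for the vortex sources, obtaining a regular elliptic system for $(\phi,\psi)$ with exponential nonlinearities, and pair it against $K^{-1}$ to produce an action functional. Because $\abs{K}<0$ this functional is indefinite, and here the partial coercivity method developed earlier applies: diagonalizing the linear part along $\phi\pm\psi$ (the eigendirections of $K$, with eigenvalues $2$ and $2q/p$), I would fix the constraint tied to the negative eigenvalue to define an admissible class and minimize the remaining, now coercive, part of the functional over that class, the coercivity following from a Moser--Trudinger inequality on $\Omega$ whose applicability is guaranteed precisely by the strict size condition on $\abs{\Omega}$. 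Weak lower semicontinuity gives a minimizer, and testing the Euler--Lagrange equation against the constraint forces the Lagrange multiplier to vanish, so the minimizer solves the regularized system weakly; elliptic regularity then upgrades it to a classical finite-energy solution of (\ref{maineq}). For $\abs{K}>0$ the matching existence statement is already in \cite{medina}, so the two cases together cover every $K$ of the form (\ref{kmat}).

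The main obstacle is entirely on the sufficiency side; the necessity half is a one-line integration. The delicate points are: (i) checking that the admissible class obtained by freezing the ``bad'' constraint is nonempty under exactly $\abs{\Omega}>\frac{\pi}{2}(\abs{p/q}\abs{N_-}+N_+)$ and not under a strictly stronger hypothesis, which is what makes the equivalence sharp; (ii) proving the partial coercivity estimate, i.e.\ bounding the indefinite direction by the imposed constraint while retaining a genuine lower bound on the complementary part; and (iii) showing the Lagrange multiplier is zero, so that a critical point of the constrained problem really does solve the original unconstrained system. The desingularization and the passage to a classical solution are routine once these are settled.
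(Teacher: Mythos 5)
Your necessity half is correct and is essentially the paper's own argument: integrating the two equations over $\Omega$, using $k_{11}+k_{12}=2$ and $k_{11}-k_{12}=2q/p$, and demanding positivity of $\int_\Omega e^{u}\,dx$ and $\int_\Omega e^{v}\,dx$ reproduces (\ref{eu1})--(\ref{ev1}) and hence the stated size condition; your sum/difference bookkeeping ($a+b>\abs{a-b}$) is just a repackaging of the paper's requirement that both integrals in (\ref{eu1})--(\ref{ev1}) be positive.

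The gap is in the sufficiency half, and it is concrete: you propose to freeze "the constraint tied to the negative eigenvalue" of $K$, i.e.\ the equation along the difference direction $\zeta=u_1-v_1$ (eigenvalue $2q/p<0$ in your $K^{-1}$ pairing), and then minimize in the remaining variable. That step fails. At fixed $\xi$, the $\zeta$-equation in (\ref{doubleEQQ}) is the Euler--Lagrange equation of $\zeta\mapsto\int_\Omega\left[\frac12\abs{\nabla\zeta}^2+4\abs{K}\left(e^{u_0+\frac12(\xi+\zeta)}+e^{v_0+\frac12(\xi-\zeta)}\right)+\frac{4\pi N_-}{\abs{\Omega}}\zeta\right]dx$, and since $4\abs{K}<0$ this is a positive Dirichlet term \emph{minus} exponential terms: it is neither convex nor bounded below (Trudinger--Moser only bounds $\int e^{\zeta/2}$ by $Ce^{C\|\nabla\zeta\|^2}$, which dominates the quadratic part), so the frozen solve has no well-defined unique solution, the admissible class is not well posed, and no coercivity of the "remaining part" follows. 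The workable choice — the paper's — is the opposite one: freeze the sum ($\xi$) equation, condition (ii) in (\ref{doubleadmis}). After multiplying that equation by $\abs{K}/4<0$, the functional $J_\zeta(\xi)$ in (\ref{jzeta}) has positive coefficients $-\abs{K}/8$ and $-4\abs{K}$, so it is strictly convex and coercive in $\xi$, the map $\zeta\mapsto\xi(\zeta)$ is single-valued (Lemma \ref{jminlem}), and on the constraint set $J_\zeta(\xi)\le J_\zeta(0)\le C$ yields the partial coercivity in $\zeta$. Two further inaccuracies in your sketch: the size condition does not enter through "applicability of Moser--Trudinger" (which holds on $\Omega$ unconditionally) but through positivity of $\alpha,\beta$ in (\ref{doublecon1})--(\ref{doublecon2}), which is what allows one to solve for the averages $\bar{\xi},\bar{\zeta}$ and obtain the bounds (\ref{alphabetabar}) used in the coercivity estimates; and the critical-point identification is not a "Lagrange multiplier vanishes by testing against the constraint" argument — the paper differentiates $\tau\mapsto I(\xi_\tau,\zeta+\tau\hat{\zeta})$, where $\xi_\tau$ is the frozen minimizer for $\zeta+\tau\hat{\zeta}$, and the induced-variation terms cancel precisely because condition (ii) holds with the test function $\hat{\xi}=\frac{d}{d\tau}\xi_\tau\big|_{\tau=0}$ (Lemma \ref{doublesolution}); this again requires that the frozen problem be the convex one, so the misidentified constraint would break this step as well.
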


The structure of the paper is as follows.  In section 2, we present the BPS and vortex equations developed in \cite{medina,ichinose1997topological}.  We also present a discussion on the importance and impact of the parameter, $q$, on the system through the lens of the ``pseudospin'' operator \cite{ichinose1997topological,ichi1, ghosh1998bimerons}.   We will see that this parameter has several effects on the behavior of the system.

In section 3, we will regularize the system (\ref{maineq}) and consider a version of the system with homogenous boundary terms via a shift.  We then establish solutions over a  bounded domain, denoted by $\Omega$, via a variational method. Due to the indefinite nature of the problem, we will use partial coerciveness by ``freezing'' one of the equations in (\ref{maineq}).  We will then take the limit of a single equation version of (\ref{maineq}) to extend it from a bounded domain, $\Omega$, to the full-plane, $\mathbb{R}^2$.  We also take the large domain limit of the entire system from a bounded domain to the full-plane and provide an asymptotic analysis of solutions.

In section 4, we establish the existence of solutions over a doubly-periodic domain, more specifically a torus.  We accomplish this via a minimization of an indefinite functional which satisfies partial coercivity.  We also provide necessary and sufficient conditions on the size of the torus relative to the vortex numbers $N_1$ and $N_2$ for solutions to exist.

\section{BPS and Vortex Equations}\label{BPS}
Here, we provide a brief overview of the system of equations governing the FQHE.   The Lagrangian describing the FQHE in two-dimensional electron systems was well discussed in \cite{ichinose1997topological,medina} and is given by
\begin{equation}
\nonumber\mathcal{L}=\mathcal{L}_{\phi}+\mathcal{L}_{CS},
\end{equation}
which is a sum of the matter term,
\begin{equation}
\nonumber\mathcal{L}_{\phi}=i\bar{\psi}_{\uparrow}(\partial_0-i a_0^+-ia_0^-)\psi_{\uparrow}+i\bar{\psi}_{\downarrow}(\partial_0-ia_0^++ia_0^-)\psi_{\downarrow}
-\frac{1}{2M}\sum_{\sigma=\uparrow,\downarrow}\abs{D_j^{\sigma}\psi_{\sigma}}^2-V(\psi_{\uparrow},\psi_{\downarrow}),
\end{equation}
and the Chern-Simons term,
\begin{equation}
\nonumber\mathcal{L}_{CS}=\mathcal{L}_{CS}(a_{\mu}^+)+\mathcal{L}_{CS}(a_{\mu}^-)=-\frac{1}{4}\epsilon_{\mu\nu\lambda}\left(\frac{1}{p}a_{\mu}^+\partial_{\nu}a_{\lambda}^++\frac{1}{q}a_{\mu}^-\partial_{\nu}a_{\lambda}^-\right).
\end{equation}


The bosonized fields are represented in terms of the upper layer, $\psi_{\uparrow}$, and the lower layer, $\psi_{\downarrow}$.  The mass of the electrons is given by $M$, while the parameters $p$ and $q$ are real numbers.  The scalar potential fields are represented by $a_{\mu}^+$ and $a_{\mu}^-$.  Lastly, the potential between the bosonized electrons is given by,
\begin{equation}\label{potential}
V(\psi_{\uparrow},\psi_{\downarrow})=\frac{p}{M}(\bar{\psi}_{\uparrow}\psi_{\uparrow}+\bar{\psi}_{\downarrow}\psi_{\downarrow})^2
+\frac{q}{M}(\bar{\psi}_{\uparrow}\psi_{\uparrow}-\bar{\psi}_{\downarrow}\psi_{\downarrow})^2+W(\psi_{\uparrow},\psi_{\downarrow}),
\end{equation}
where $W$ represents the long range inter-layer and intra-layer Coulomb repulsions.  In the model discussed by \cite{medina,ichinose1997topological}, this force is assumed to be negligible and will continue to be assumed as such. 

The first integral of the system of BPS type was obtained by Medina in \cite{medina} and it is given by
\begin{align}\label{bps}
(D_1^{\uparrow}&-iD_2^{\uparrow})\psi_{\uparrow}=0\\\label{d1}
(D_1^{\downarrow}&-iD_2^{\downarrow})\psi_{\downarrow}=0\\\label{d2}
B_{12}&=2(p+q)\abs{\psi_{\uparrow}}^2+2(p-q)\abs{\psi_{\downarrow}}^2-eB\\
\tilde{B}_{12}&=2(p-q)\abs{\psi_{\uparrow}}^2+2(p+q)\abs{\psi_{\downarrow}}^2-eB\\
b_0&=\frac{1}{M}(p+q)\abs{\psi_{\uparrow}}^2+\frac{1}{M}(p-q)\abs{\psi_{\downarrow}}^2+\frac{eB}{M}\\\label{d6}
\tilde{b}_0&=\frac{1}{M}(p-q)\abs{\psi_{\uparrow}}^2+\frac{1}{M}(p+q)\abs{\psi_{\downarrow}}^2+\frac{eB}{M},
\end{align}
where equations (\ref{bps}) and (\ref{d1}) are the self dual equations being solved.  There are many solutions to these self dual equations, but this degeneracy is removed by the Chern-Simons constraints \cite{ichinose1997topological}.

 The ground state configurations for the fractional quantum Hall effect are represented in terms of the average electron density, $\bar{\rho}$, $\psi_{\uparrow,0}=\psi_{\downarrow,0}=\sqrt{{\bar{\rho}}/{2}}$.  This yields the following integral representing the total energy of the system \cite{medina},
\begin{equation}\label{energy}
E=\frac{1}{2M}\int \sum \abs{(D_1^{\sigma}-iD_2^{\sigma})\psi_{\sigma}}^2+eB(\abs{\psi_{\uparrow}}^2+\abs{\psi_{\downarrow}}^2)-eB(\abs{\psi_{\uparrow,0}}^2+\abs{\psi_{\downarrow,0}}^2)dx\geq\frac{eB}{4Mp}\Phi_{CS}.
\end{equation}

where the total Chern-Simons flux, $\Phi_{CS}$ is given by
\begin{equation}
\Phi_{CS}=\frac{1}{2}\int(B_{12}+\tilde{B}_{12})dx=\int2p\left(\abs{\psi_{\uparrow}}^2+\abs{\psi_{\downarrow}}^2\right)-eBdx
\end{equation}

Furthermore, in \cite{medina}, solutions to (\ref{bps}) and (\ref{d1}) were considered over the full plane under the condition that they were of finite energy.  As a result, the following boundary conditions were imposed on the complex fields, $\psi_{\uparrow}$ and $\psi_{\downarrow}$,
\begin{equation}\label{gstate}
\abs{\psi_{\uparrow}}^2\rightarrow\abs{\psi_{\uparrow,0}}^2=\frac{\bar{\rho}}{2}\qquad\text{ and }\qquad\abs{\psi_{\downarrow}}^2\rightarrow\abs{\psi_{\downarrow,0}}^2=\frac{\bar{\rho}}{2}\qquad\text{ as }\abs{x}\rightarrow\infty.
\end{equation}

An important observation to make here is that the following terms are affected by $q$: $V,B_{12},\tilde{B}_{12},b_0,\tilde{b}_0$.  Remarkably, the Chern-Simons flux is independent of $q$.  That is, for a given value of $p$, any pair of $p$ and $q$ have the same total Chern-Simons flux.  Moreover, the filling factor, $\nu$, is also independent of $q$ and thus a single filling factor gives rise to a fixed value of $p$, but a wide range of possibilities for $q$.  Since all possible finite energy solutions with $q>0$ was covered in \cite{medina}, we focus solely on $q<0$.  We see that the symmetry of the system implies the approach for existence should be similar to that done in \cite{medina} for $q>0$.  As a matter of fact, we observe that if one were to negate $q$ in (\ref{bps})-(\ref{d6}), the only differences would be that $B_{12}$ and $\tilde{B}_{12}$ switch as well as $b_0$ and $\tilde{b}_0$.  While this may seem like a minor detail, we will observe that the approach taken in \cite{medina} will not suffice.  When establishing a variational principle for $q<0$, the functional becomes indefinite and thus, establishing existence of solutions to (\ref{maineq}) will require a much different approach.

It is also important to note that in \cite{medina}, the following quantization of the magnetic flux integrals was established and will hold regardless of the sign of $q$.  These integrals are
\begin{equation}
\int B_{12}dx=-2\pi p N_1\qquad\text{and}\qquad\int\tilde{B}_{12}dx=-2\pi p N_2
\end{equation}
with the integration begin evaluated over either the doubly-periodic domain or the full-plane.  With this quantization, we obtain
\begin{equation}\label{bsum}
\int \left(B_{12}+\tilde{B}_{12}\right)dx=-2\pi p N_+\qquad\text{and}\qquad \int \left(B_{12}-\tilde{B}_{12}\right)dx=-2\pi p N_-,
\end{equation}
where $N_+=N_1+N_2$ and $N_-=N_1-N_2$.  Using the notation $\Psi=(\psi_{\uparrow},\psi_{\downarrow})^t$, we obtain from (\ref{bsum}) 
\begin{equation}\label{quantq}
\int\left(\bar{\Psi}\Psi-\frac{eB}{2p}\right)dx=-\frac{\pi}{2}N_+\qquad\text{and}\qquad\int\bar{\Psi}\sigma_3\Psi dx=-\frac{\pi p}{2q}N_-.
\end{equation}

In this system, the magnetic field was strong enough to polarize the electrons \cite{medina,ichinose1997topological}.  As a result, a degree of freedom, which was the quantization of spin, is lost.  However,  we regain that degree of freedom and treat the double layer problem as a mono layer problem with a quantization of spin that is dubbed the ``pseudospin'' \cite{ichinose1997topological,ichi1,ghosh1998bimerons}.

Upon further investigation, for the ground state configuration $\Psi_0=\left(\psi_{\uparrow,0},\psi_{\downarrow,0}\right)^t$ given by (\ref{gstate}), we have
\begin{equation}\label{sig3}
\bar{\Psi}_0\sigma_3\Psi_0=0.
\end{equation}
In (\ref{sig3}), $\sigma_3$ is the $z-$component of the pseudospin which represents the difference between the upper and lower layer densities \cite{ghosh1998bimerons} and is given by

\begin{equation*}
\sigma_3=\begin{pmatrix} 
1 & 0 \\
0 & -1 
\end{pmatrix}.
\end{equation*}
As a result, we see that the electric charge of the lower layer, given by $\tilde{Q}$ in \cite{ichinose1997topological,ichi1} satisfies
\begin{equation}
\tilde{Q}=\int\left(\bar{\Psi}\sigma_3\Psi-\bar{\Psi}_0\sigma_3\Psi_0 \right)dx=-\frac{\pi p}{2q}N_-.
\end{equation}
We see that in the indefinite setting, that is when $q<0$, there are three possibilities.  First, we note that if $N_1=N_2$, that is each layer contains the same amount of electrons, then the quantization of the charge in the lower layer, $\tilde{Q}$, satisfies $\tilde{Q}=0$.  Furthermore, when $\tilde{Q}=0$, the Chern-Simons flux, $\tilde{\Phi}_{CS} =2q\tilde{Q}=0$.  
Now we discuss when the number of elections differs in the upper and lower layer.  If $N_1>N_2$ then $N_->0$ and $\tilde{Q}>0$ while $\tilde{\Phi}_{CS}<0$.   If $N_1<N_2$, then $N_-<0$ and the charge satisfies $\tilde{Q}<0$ while the Chern-Simons flux satisfies $\tilde{\Phi}_{CS}>0.$

\section{Topological solutions over the full-plane}
In this section, we establish existence of solutions to (\ref{maineq}) when $\abs{K}<0$ over a bounded domain.  The first step towards establishing existence, is to regularize the system given by (\ref{maineq}) and establish a variational principle as in \cite{lin2007system}.  We will show that the solutions of this regularized system will be critical points of an indefinite action functional and ultimately solutions to (\ref{maineq}) over $\Omega$, which is connected, bounded,  and smooth.  Moreover, $\Omega$ contains the point vortices $p_j$ for $j=1,\ldots, N_1$ and $q_j$ for $j=1,\ldots, N_2$.  

Before proceeding, we recall that the boundary requirement for topological solutions is that $u,v\rightarrow-\ln2$ \cite{ichinose1997topological,medina}.  We consider the substitutions
\begin{equation*}
u_1=u+\ln2\qquad\text{and}\qquad v_1=v+\ln2,
\end{equation*}
to obtain
\begin{equation}\label{s1}
\begin{cases}
\Delta u_1=2k_{11}e^{u_1}+2k_{12}e^{v_1}-4+4\pi\displaystyle\sum\limits_{j=1}^{N_1}\delta_{p_{j}}&\text{in }\Omega\\
\Delta v_1=2k_{12}e^{u_1}+2k_{11}e^{v_1}-4+4\pi\displaystyle\sum\limits_{j=1}^{N_2}\delta_{q_{j}}&\text{in }\Omega.
\end{cases}
\end{equation}
To regularize our system, we use the fact that as $\varepsilon\rightarrow 0$, 
\begin{equation}\label{reg1}
\frac{4\varepsilon}{\left(\varepsilon+\abs{x-p_j}^2\right)^2}\xrightharpoonup{*}4\pi\delta p_j\qquad\text{ and }\qquad \frac{4\varepsilon}{\left(\varepsilon+\abs{x-q_j}^2\right)^2}\xrightharpoonup{*}4\pi\delta q_j.
\end{equation}
The convergence above is in sense of distibution.
In view of (\ref{reg1}), we consider the regularized form of (\ref{s1}),
\begin{equation}\label{s2}
\begin{cases}
\Delta u_1=2k_{11}e^{u_1}+2k_{12}e^{v_1}-4+\sum\limits_{j=1}^{N_1}\frac{4\varepsilon}{\left(\varepsilon+\abs{x-p_j}^2\right)^2}&\text{in }\Omega\\
\Delta v_1=2k_{12}e^{u_1}+2k_{11}e^{v_1}-4+\sum\limits_{j=1}^{N_2}\frac{4\varepsilon}{\left(\varepsilon+\abs{x-q_j}^2\right)^2}&\text{in }\Omega,
\end{cases}
\end{equation}
and introduce the subtractive background functions,
\begin{equation}\label{backuv}
u_0^{\varepsilon}=\sum\limits_{j=1}^{N_1}\ln\left(\frac{\varepsilon+\abs{x-p_j}^2}{1+\abs{x-p_j}^2} \right)\qquad
v_0^{\varepsilon}=\sum\limits_{j=1}^{N_2}\ln\left(\frac{\varepsilon+\abs{x-q_j}^2}{1+\abs{x-q_j}^2} \right),
\end{equation}
such that for some $h_1, h_2\in W^{1,2}(\Omega)$, independent of $\varepsilon>0$, we have,
\begin{equation}
\Delta u_0^{\varepsilon}=-h_1+\sum\limits_{j=1}^{N_1}\frac{4\varepsilon}{\left(\varepsilon+\abs{x-p_j}^2\right)^2}\qquad\text{and}\qquad
\Delta v_0^{\varepsilon}=-h_2+\sum\limits_{j=1}^{N_2}\frac{4\varepsilon}{\left(\varepsilon+\abs{x-q_j}^2\right)^2}.
\end{equation}
With these background functions constructed, we are able to set
\begin{equation}\label{u2v2}
u_1=u_0^{\varepsilon}+u_2\qquad\text{ and }\qquad v_1=v_0^{\varepsilon}+v_2,
\end{equation}
then rewrite (\ref{s2}) in terms of $u_2$ and $v_2$.  In view of (\ref{u2v2}), the equations in (\ref{s2}) become
\begin{equation}\label{s3}
\begin{cases}
\Delta u_2=2k_{11}e^{u_0^{\varepsilon}+u_2}+2k_{12}e^{v_0^{\varepsilon}+v_2}-4+h_1& \text{in }\Omega\\
\Delta v_2=2k_{12}e^{u_0^{\varepsilon}+u_2}+2k_{11}e^{v_0^{\varepsilon}+v_2}-4+h_2& \text{in }\Omega\\
u_2=v_2=0\qquad\text{on }\partial\Omega
\end{cases}
\end{equation}
We will find it useful to decompose $u_2$ and $v_2$ into a sum.  To this end, we write $u_2$ as a sum of a harmonic function and another function.  We repeat this for  $v_2$ as well.  We let $U_0^{\varepsilon}$ and $V_0^{\varepsilon}$ be harmonic in $\Omega$ and satisfy the following boundary conditions
\begin{equation}
U_0^{\varepsilon}=-u_0^{\varepsilon}\qquad\text{and}\qquad V_0^{\varepsilon}=-v_0^{\varepsilon}\qquad\text{on}\qquad \partial\Omega.  
\end{equation}
Then, for some functions $u_3$ and $v_3$, we write
\begin{equation}\label{2harm}
u_2=U_0^{\varepsilon}+u_3\qquad
\text{and}\qquad v_2=V_0^{\varepsilon}+v_3.
\end{equation}
In view of these transformations, the system in (\ref{s3}) can now be written in terms of $u_3$ and $v_3$ as the following homogenous boundary value problem,
\begin{equation}\label{s41}
\begin{cases}
\Delta u_3=2k_{11}e^{u_0^{\varepsilon}+U_0^{\varepsilon}+u_3}+2k_{12}e^{v_0^{\varepsilon}+V_0^{\varepsilon}+v_3}-4+h_1& \text{in }\Omega\\
\Delta v_3=2k_{12}e^{u_0^{\varepsilon}+U_0^{\varepsilon}+u_3}+2k_{11}e^{v_0^{\varepsilon}+V_0^{\varepsilon}+v_3}-4+h_2& \text{in }\Omega\\
u_3=v_3=0\qquad \text { on }\partial\Omega.
\end{cases}
\end{equation}
In the final substitution to transform this system into one that has a variational principle, we use
\begin{equation*}
f_0^{\varepsilon}=u_0^{\varepsilon}+U_0^{\varepsilon},\quad
g_0^{\varepsilon}=v_0^{\varepsilon}+V_0^{\varepsilon}
\xi=u_3+v_3, \text{ and }
\zeta=u_3-v_3.
\end{equation*}
This substitution yields the following system,
\begin{equation}\label{s4}
\begin{cases}
\Delta\xi=4e^{f_0^{\varepsilon}+\frac{1}{2}(\xi+\zeta)}+4e^{g_0^{\varepsilon}+\frac{1}{2}(\xi-\zeta)}-8+(h_1+h_2)& \text{in }\Omega\\
\Delta\zeta=\abs{K}e^{f_0^{\varepsilon}+\frac{1}{2}(\xi+\zeta)}-\abs{K}e^{g_0^{\varepsilon}+\frac{1}{2}(\xi-\zeta)}+(h_1-h_2)&\text{in }\Omega\\
\xi=\zeta=0\qquad\text{on }\partial\Omega,
\end{cases}
\end{equation} 
where we have used properties of the coupling matrix to obtain $k_{11}+k_{12}=2$ and $k_{11}-k_{12}={\abs{K}}/{2}.$  We then multiply the first equation in (\ref{s4}) by ${\abs{K}}/{4}$ to obtain,
\begin{equation}\label{s5}
\begin{cases}
\frac{\abs{K}}{4}\Delta\xi=\abs{K}e^{f_0^{\varepsilon}+\frac{1}{2}(\xi+\zeta)}+\abs{K}e^{g_0^{\varepsilon}+\frac{1}{2}(\xi-\zeta)}-2\abs{K}+\frac{\abs{K}}{4}(h_1+h_2)\quad\text{in }\Omega\\
\Delta\zeta=\abs{K}e^{f_0^{\varepsilon}+\frac{1}{2}(\xi+\zeta)}-\abs{K}e^{g_0^{\varepsilon}+\frac{1}{2}(\xi-\zeta)}+(h_1-h_2)\qquad\text{in }\Omega\\
\xi=\zeta=0\qquad\text{on }\partial\Omega
\end{cases}
\end{equation} 
which allows us to see that these are the Euler-Lagrange equations of the indefinite functional,
\begin{multline}\label{f5}
I(\xi,\zeta)=\int\displaylimits_{\Omega}\bigg[\frac{\abs{K}}{8}\abs{\nabla\xi}^2+\frac{1}{2}\abs{\nabla\zeta}^2+2\abs{K}e^{f_0^{\varepsilon}+\frac{1}{2}(\xi+\zeta)}+2\abs{K}e^{g_0^{\varepsilon}+\frac{1}{2}(\xi-\zeta)}\\-2\abs{K}\xi+\frac{\abs{K}}{4}(h_1+h_2)\xi+(h_1-h_2)\zeta\bigg] dx.
\end{multline}

In this scenario, a direct minimization approach will not work because we are unable to establish coercivity.  That is, the negative coefficient of $\abs{\nabla\xi}$ creates an issue.  Also, we can not establish a compactness condition that would allow for saddle point solutions via a mountain pass theorem.  Therefore, we take an approach similar to \cite{benci1979critical,lin2007system,yang2013solitons} and separate the functional (\ref{f5}) into two parts, the negative and positive and then ``freeze'' the negative part by introducing a constraint which is the weak formulation of the first equation in (\ref{s4}).  With this in mind, we will consider the following minimization problem,
\begin{equation}\label{indefmin}
\min\left\{I(\xi,\zeta)|(\xi,\zeta)\in\mathcal{C}\right\}
\end{equation}
where the admissible class, $\mathcal{C}$, is defined by
\begin{equation}\label{admis}
\mathcal{C}=\left\{(\xi,\zeta)|\xi,\zeta\in W^{1,2}_0(\Omega)\text{ and }\xi,\zeta\text{ satisfy (i)}\right\}.
\end{equation}
\begin{align*}
\text{(i)}\quad&\int\displaylimits_{\Omega}\bigg[\nabla\xi\cdot\nabla w+4we^{f_0^{\varepsilon}+
\frac{1}{2}(\xi+\zeta)}+4we^{g_0^{\varepsilon}+\frac{1}{2}(\xi-\zeta)}-8w+(h_1+h_2)w\bigg]dx=0,\qquad\forall w\in W^{1,2}_0(\Omega).
\end{align*}
We remark here that definition (i) is nothing other than the weak formulation of the first equation in (\ref{s4}) and now state the main theorem of this section.  

\begin{thm}\label{tm21}
There exists a pair $(\xi,\zeta)\in\mathcal{C}$ that is a solution to the minimization problem (\ref{indefmin}).
\end{thm}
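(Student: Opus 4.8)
The plan is to run the direct method on the constrained problem \eqref{indefmin}, using the constraint (i) to convert the indefinite part of $I$ into a coercive one. The first step is to observe that $\mathcal{C}$ is nonempty and is in fact a graph over the $\zeta$-variable: for any fixed $\zeta\in W_0^{1,2}(\Omega)$, condition (i) is precisely the Euler--Lagrange equation of the auxiliary functional
\[
J(\xi;\zeta)=\int_\Omega\Big[\tfrac12\abs{\nabla\xi}^2+8e^{f_0^{\varepsilon}+\frac12(\xi+\zeta)}+8e^{g_0^{\varepsilon}+\frac12(\xi-\zeta)}-8\xi+(h_1+h_2)\xi\Big]dx,
\]
which is strictly convex and coercive on $W_0^{1,2}(\Omega)$ (the exponential terms are nonnegative and convex, and the linear terms are absorbed by $\tfrac14\|\nabla\xi\|_{L^2}^2$ via the Poincar\'e inequality). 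Hence $J(\cdot;\zeta)$ has a unique minimizer $\xi=\xi(\zeta)$, and $(\xi(\zeta),\zeta)\in\mathcal{C}$; since (i) determines $\xi$ uniquely from $\zeta$, these are the only elements of $\mathcal{C}$. A short algebraic computation, using $k_{11}+k_{12}=2$, gives the identity
\[
I(\xi,\zeta)=\frac{\abs{K}}{4}\,J(\xi;\zeta)+\int_\Omega\Big[\tfrac12\abs{\nabla\zeta}^2+(h_1-h_2)\zeta\Big]dx,
\]
so that the restriction of $I$ to $\mathcal{C}$ is the reduced functional $\widehat{I}(\zeta):=\frac{\abs{K}}{4}m(\zeta)+\int_\Omega[\tfrac12\abs{\nabla\zeta}^2+(h_1-h_2)\zeta]\,dx$, where $m(\zeta):=\min_\xi J(\xi;\zeta)$.

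The second, and I expect main, step is to show that $\widehat{I}$ is bounded below and coercive on $W_0^{1,2}(\Omega)$. This is where the indefiniteness really bites: because $\abs{K}<0$, the term $\tfrac{\abs{K}}{4}m(\zeta)$ is negative, so one must bound $m(\zeta)$ \emph{from above} by a quantity controlled by $\tfrac12\|\nabla\zeta\|_{L^2}^2$ up to lower order. I would first record two a priori bounds on the fibre minimizer $\xi(\zeta)$ that are independent of $\zeta$ and $\varepsilon$: testing (i) with $w=-\xi_-$ and using positivity of the exponential terms yields $\|\nabla\xi_-\|_{L^2}\le C(\Omega,h_1,h_2)$; and the maximum principle applied to the $\xi$-equation (again using that the exponentials are positive) gives $\xi(\zeta)\le C(\Omega,h_1,h_2)$ pointwise. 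Then, estimating $m(\zeta)\le J(-\abs{\zeta};\zeta)$ and noting that both $-\abs{\zeta}+\zeta$ and $-\abs{\zeta}-\zeta$ are $\le0$, the exponential contributions to $J(-\abs{\zeta};\zeta)$ are bounded by a constant, so $m(\zeta)\le\tfrac12\|\nabla\zeta\|_{L^2}^2+C\|\nabla\zeta\|_{L^2}+C$. Combining these with the Cauchy--Schwarz/Poincar\'e bound on $\int(h_1-h_2)\zeta$ gives $\widehat{I}(\zeta)\ge c\,\|\nabla\zeta\|_{L^2}^2-C$ with $c>0$ in the admissible range of $\abs{K}$; I would flag that the size of $\abs{K}$ enters the argument precisely through the positivity of this constant $c$.

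The third step is to extract a minimizer. Take a minimizing sequence $(\xi_n,\zeta_n)=(\xi(\zeta_n),\zeta_n)\in\mathcal{C}$; coercivity of $\widehat{I}$ forces $\|\nabla\zeta_n\|_{L^2}$ bounded, hence by the Trudinger--Moser inequality $e^{\pm\zeta_n/2}$ is bounded in every $L^p(\Omega)$. Feeding this and the pointwise bound on $\xi_n$ back into (i) tested with $w=\xi_n$ then bounds $\|\nabla\xi_n\|_{L^2}$ as well. Passing to a subsequence, $\xi_n\rightharpoonup\xi_\ast$ and $\zeta_n\rightharpoonup\zeta_\ast$ in $W_0^{1,2}(\Omega)$, strongly in every $L^p(\Omega)$ and a.e.; by the $L^p$-bounds on $e^{\pm\xi_n/2},e^{\pm\zeta_n/2}$ and Vitali's theorem, all four exponential nonlinearities converge strongly in $L^q(\Omega)$ for every $q<\infty$, so passing to the limit in (i) shows $(\xi_\ast,\zeta_\ast)\in\mathcal{C}$.

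Finally, I would check that $\widehat{I}$ is weakly sequentially lower semicontinuous. The map $\zeta\mapsto\tfrac12\|\nabla\zeta\|_{L^2}^2$ is weakly l.s.c.\ and $\zeta\mapsto\int(h_1-h_2)\zeta$ is weakly continuous. For $m$, note that for each fixed $\xi$ the map $\zeta\mapsto J(\xi;\zeta)$ is weakly continuous (Trudinger--Moser again), so $\limsup_n m(\zeta_n)\le\limsup_n J(\xi;\zeta_n)=J(\xi;\zeta_\ast)$ for every $\xi$, whence $\limsup_n m(\zeta_n)\le\inf_\xi J(\xi;\zeta_\ast)=m(\zeta_\ast)$, i.e.\ $m$ is weakly sequentially upper semicontinuous; since $\abs{K}<0$, this makes $\tfrac{\abs{K}}{4}m$ weakly sequentially lower semicontinuous. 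Therefore $\widehat{I}(\zeta_\ast)\le\liminf_n\widehat{I}(\zeta_n)=\inf_{\mathcal{C}}I$, and together with $(\xi_\ast,\zeta_\ast)\in\mathcal{C}$ this shows $(\xi_\ast,\zeta_\ast)$ solves the minimization problem \eqref{indefmin}. The one genuinely delicate point throughout is the coercivity estimate of Step two; everything else is the standard constrained-minimization machinery of \cite{benci1979critical,lin2007system,yang2013solitons}.
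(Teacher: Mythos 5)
Your overall scheme is the paper's scheme: you show the constraint (i) is the Euler--Lagrange equation of a strictly convex, coercive fibre functional (your $J(\cdot\,;\zeta)$ is exactly $-\tfrac{4}{\abs{K}}$ times the paper's $J_{\zeta}$ in (\ref{jfun})), you reduce $I$ on $\mathcal{C}$ to a functional of $\zeta$ alone via the decomposition (\ref{ij}), and you run the direct method; your observation that $m(\zeta)=\min_{\xi}J(\xi;\zeta)$ is weakly upper semicontinuous as an infimum of weakly continuous functions is in fact a tidy way to bypass the paper's separate lemma on strong $W^{1,2}_0$-convergence of $\xi_n$. The genuine gap is in your Step two, and you only half-acknowledge it. Bounding $m(\zeta)\le J(-\abs{\zeta};\zeta)$ necessarily spends a full $\tfrac12\|\nabla\zeta\|_{L^2}^2$, because your competitor must follow $\zeta$ to tame the exponentials; after multiplying by $\tfrac{\abs{K}}{4}<0$ you only get
\begin{equation*}
\widehat{I}(\zeta)\;\ge\;\Big(\tfrac12+\tfrac{\abs{K}}{8}\Big)\|\nabla\zeta\|_{L^2(\Omega)}^2-C\|\nabla\zeta\|_{L^2(\Omega)}-C,
\end{equation*}
whose leading coefficient is positive only when $-4<\abs{K}<0$; at $\abs{K}=-4$ even boundedness below is not established, and for $\abs{K}<-4$ the estimate is vacuous. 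Theorem \ref{tm21} is asserted for every $\abs{K}<0$: the endpoint $\abs{K}=-4$ is needed for all the subsequent full-plane results (stated for $-4\le\abs{K}<0$), and the remark after Proposition \ref{epsilonbound} emphasizes that the bounded-domain constrained minimization is precisely what survives when $\abs{K}<-4$. The paper's proof takes the other route at exactly this point: it compares with the fixed competitor $\xi=0$, so $J_{\zeta}(\xi)\le J_{\zeta}(0)$, and then controls $J_{\zeta}(0)$ through $C\int_{\Omega}\cosh(\zeta/2)\,dx$ by the maximal-growth (Trudinger--Moser) inequality, so that the coefficient in front of $\|\nabla\zeta\|_{L^2}^2$ is $\tfrac12-C\varepsilon$, independent of $\abs{K}$. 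That uniform control of the fibre minimum, rather than a $\zeta$-dependent test function, is the missing idea; as written your argument proves the theorem only on the strictly smaller range $-4<\abs{K}<0$.

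A secondary, non-fatal inaccuracy: your claim that testing (i) with $w=-\xi_-$ yields $\|\nabla\xi_-\|_{L^2}\le C(\Omega,h_1,h_2)$ \emph{independently of} $\zeta$ is not correct. The resulting identity is $\|\nabla\xi_-\|_{L^2}^2=\int_{\Omega}\big[4\xi_-\big(e^{f_0^{\varepsilon}+\frac12(\xi+\zeta)}+e^{g_0^{\varepsilon}+\frac12(\xi-\zeta)}\big)-8\xi_-+(h_1+h_2)\xi_-\big]dx$, in which the exponential terms enter with a plus sign multiplied by $\xi_-\ge0$ and, on $\{\xi\le0\}$, are only bounded by $Ce^{\pm\zeta/2}$; so the bound inherits a Trudinger--Moser factor depending on $\|\nabla\zeta\|_{L^2}$. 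This does not damage your Step three, where you only use the pointwise upper bound on $\xi_n$ (which does follow from the comparison/maximum-principle argument) together with the exponential integrability of $\zeta_n$ along the already-bounded minimizing sequence, but it should not be recorded as a $\zeta$-independent a priori estimate.
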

Prior to proving Theorem \ref{tm21}, we establish some necessary lemmas.

\begin{lem}
Definition (i) is well posed.  In other words, for any $\zeta\in W^{1,2}_0(\Omega)$ there exists a unique $\xi\in W^{1,2}_0(\Omega)$ satisfying (i).  Furthermore, $\xi$ is the global minimizer of the functional,
\begin{multline}\label{jfun}
J_{\zeta}(\xi)=\int\displaylimits_{\Omega}\bigg[-\frac{\abs{K}}{8}\abs{\nabla\xi}^2-2\abs{K}e^{f_0^{\varepsilon}+
\frac{1}{2}(\xi+\zeta)}-2\abs{K}e^{g_0^{\varepsilon}+\frac{1}{2}(\xi-\zeta)}+2\abs{K}\xi-\frac{\abs{K}}{4}(h_1+h_2)\xi \bigg]dx,
\end{multline}
in $W^{1,2}_0(\Omega)$.
\end{lem}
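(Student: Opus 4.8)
The plan is to recognize condition (i) as the Euler--Lagrange equation of $J_\zeta$ and then run the direct method of the calculus of variations for $J_\zeta$ over $W^{1,2}_0(\Omega)$. The decisive observation is that, although $J_\zeta$ superficially resembles a concave functional, the standing hypothesis $\abs{K}=\det(K)<0$ makes all of its terms convex: the coefficient $-\abs{K}/8$ of $\abs{\nabla\xi}^2$ is the positive number $\abs{\det K}/8$, and $-2\abs{K}e^{(\cdot)}$ is a positive multiple of a convex exponential. So $J_\zeta$ should be strictly convex, coercive, and weakly lower semicontinuous, from which the existence and uniqueness of the minimizer follow at once.

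First I would verify that $J_\zeta$ is finite on all of $W^{1,2}_0(\Omega)$. For each fixed $\varepsilon>0$ the functions $u_0^\varepsilon,v_0^\varepsilon$ in (\ref{backuv}) are bounded on the bounded set $\overline{\Omega}$ by inspection, and $U_0^\varepsilon,V_0^\varepsilon$ are harmonic with continuous boundary data, hence bounded by the maximum principle; therefore $f_0^\varepsilon,g_0^\varepsilon\in L^\infty(\Omega)$. This gives $e^{f_0^\varepsilon+\frac12(\xi+\zeta)}\le C\,e^{\frac12(\xi+\zeta)}$, and since $\Omega$ is bounded the Moser--Trudinger inequality yields $e^{\frac12(\xi\pm\zeta)}\in L^1(\Omega)$ for every $\xi,\zeta\in W^{1,2}_0(\Omega)$, so each term of $J_\zeta$ is integrable.

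Next I would establish coercivity and weak lower semicontinuity. Since the two exponential terms are nonnegative they may be discarded from below, and applying the Poincar\'e inequality to the affine remaining terms yields $J_\zeta(\xi)\ge \frac{\abs{\det K}}{8}\,\|\nabla\xi\|_{L^2(\Omega)}^2-C\,\|\nabla\xi\|_{L^2(\Omega)}\to+\infty$ as $\|\nabla\xi\|_{L^2}\to\infty$. For weak lower semicontinuity: the Dirichlet term is weakly lower semicontinuous by convexity, the affine terms are weakly continuous because $W^{1,2}_0(\Omega)$ embeds continuously (indeed compactly) into $L^2(\Omega)$, and for the exponential terms one exploits the compactness of the Moser--Trudinger embedding, so that along $\xi_n\rightharpoonup\xi$ in $W^{1,2}_0(\Omega)$ one has $e^{\frac12(\xi_n\pm\zeta)}\to e^{\frac12(\xi\pm\zeta)}$ in $L^1(\Omega)$ (a.e.\ convergence along a subsequence, combined with a uniform higher-integrability bound coming from Moser--Trudinger, and Vitali's convergence theorem). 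By the direct method on the reflexive space $W^{1,2}_0(\Omega)$, $J_\zeta$ therefore attains a global minimizer $\xi$.

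Finally, uniqueness and the link with (i): the Dirichlet term is strictly convex on $W^{1,2}_0(\Omega)$ because $\|\nabla\,\cdot\,\|_{L^2}$ is a norm there by Poincar\'e, the two exponential terms are convex, and the rest is affine, so $J_\zeta$ is strictly convex and its minimizer is unique. Computing the first variation $\langle J_\zeta'(\xi),w\rangle$, setting it to zero for all $w\in W^{1,2}_0(\Omega)$, and multiplying through by the nonzero constant $-4/\abs{K}$ reproduces identity (i) exactly; conversely, any $\xi$ satisfying (i) is a critical point of the strictly convex $J_\zeta$ and hence its unique global minimizer. This proves both the existence and uniqueness of $\xi$ for each prescribed $\zeta$ and the stated variational characterization. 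I expect the only genuinely delicate point to be the sign bookkeeping forced by $\abs{K}<0$ together with the Moser--Trudinger estimates needed to make sense of, and to pass to the limit in, the exponential terms; everything else is the routine direct method.
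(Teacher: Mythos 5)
Your proposal is correct and follows essentially the same direct-method argument as the paper: coercivity obtained by discarding the nonnegative exponential terms (note $-\abs{K}>0$) and controlling the affine terms via Poincar\'e, weak lower semicontinuity of each piece, and convexity of $J_{\zeta}$ for uniqueness. Your explicit computation of the first variation showing that (i) is exactly the Euler--Lagrange equation of $J_{\zeta}$ (after multiplying by $-4/\abs{K}$) is left implicit in the paper, but this is a minor refinement rather than a different route.
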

\begin{proof}
We prove this lemma in three steps.  The first is to show that the functional, $J_{\zeta}(\cdot)$ is weakly lower semicontinuous.  Then, we must show that $J_{\zeta}(\cdot)$ is coercive.  Finally, we must determine uniqueness.

First, we note that the functional (\ref{jfun}) can be rewritten using the $L^2$ norm.  This can be seen below
\begin{multline}\label{jfun2}
J_{\zeta}(\xi)=-\frac{\abs{K}}{8}\|\nabla\xi\|_{L^2(\Omega)}^2-\int\displaylimits_{\Omega}\bigg[2\abs{K}e^{f_0^{\varepsilon}+
\frac{1}{2}(\xi+\zeta)}-2\abs{K}e^{g_0^{\varepsilon}+\frac{1}{2}(\xi-\zeta)}+2\abs{K}\xi-\frac{\abs{K}}{4}(h_1+h_2)\xi \bigg]dx
\end{multline}
We then recall that the $L^p$ norm is weakly lower semicontinuous.  Therefore, in order to establish that the functional $J_{\zeta}(\cdot)$ is weakly lower semicontinuous, we only need to worry about the last four terms in (\ref{jfun2}).  We will first establish continuity of the last term.  To this end, we make the observation that if $\xi_n\rightarrow\xi$ in $L^2(\Omega)$, then by way of H\"older's inequality  we obtain,
\begin{align}\label{contin1}
\begin{split}
\abs{\int\displaylimits_{\Omega}\bigg[(h_1+h_2)\xi-(h_1+h_2)\xi_n\bigg]dx}&\leq\|h_1+h_2\|_{L^2(\Omega)}\|\xi-\xi_n\|_{L^2(\Omega)}.
\end{split}
\end{align}

Since $h_1,h_2$ are in $W^{1,2}(\Omega)$, we see that they are bounded in $L^2(\Omega)$.  We also have $\xi_n\rightarrow\xi$ in $L^2(\Omega)$ by our assumption.  Therefore, we obtain that the right hand side of (\ref{contin1}) vanishes as $n\rightarrow\infty$ and consequently, the last term in (\ref{jfun2}) is continuous.  By a similar argument, we obtain continuity for $\int\displaylimits2\abs{K}\xi dx$.  It is obvious that the remaining terms of $J_{\zeta}(\xi)$ are continuous and thus $J_{\zeta}$ is weakly lower semicontinuous.

We now look to establish coercivity of the functional $J_{\zeta}(\xi)$.   Here, we recall that $-2\abs{K}>0$ and therefore, the integrals involving the exponential terms are positive.  This means we have
\begin{equation}\label{jbound1}
J_{\zeta}(\xi)\geq-\frac{\abs{K}}{8}\|\nabla\xi\|_{L^2(\Omega)}^2+2\abs{K}\int\displaylimits_{\Omega}\xi dx-\frac{\abs{K}}{4}\int\displaylimits_{\Omega}(h_1+h_2)dx,
\end{equation}
and now recall the Poincar\'e inequality on $W^{1,2}_0(\Omega)$,
\begin{equation}\label{poin11}
\|\xi\|_{L^2(\Omega)}\leq C\|\nabla\xi\|_{L^2(\Omega)}.
\end{equation}
We use Cauchy's inequality and (\ref{poin11}) to obtain the following upper bound for the last term in (\ref{jbound1}),
\begin{align}\label{bound11}
\abs{\int\displaylimits_{\Omega}(h_1+h_2)\xi dx}&
=\frac{1}{4\varepsilon}\|h_1+h_2\|_{L^2(\Omega)}^2+\varepsilon\|\xi\|_{L^2(\Omega)}^2\leq C_1(\varepsilon,h_1,h_2)+C\varepsilon\|\nabla\xi\|_{L^2(\Omega)}^2.
\end{align}
Similarly, we obtain
\begin{equation}\label{bound11a}
\abs{\int\displaylimits_{\Omega}\xi dx}\leq C_2(\varepsilon,\abs{\Omega})+C\varepsilon\|\nabla\xi\|_{L^2(\Omega)}^2.
\end{equation}
The upper bounds obtained in (\ref{bound11}) and (\ref{bound11a}) allow us to obtain the following lower bound for $J_{\zeta}(\xi),$
\begin{equation}
J_{\zeta}(\xi)\geq\left(-\frac{\abs{K}}{8}+\frac{9\abs{K}}{4}\varepsilon C\right)\|\nabla\xi\|_{L^2(\Omega)}^2-C_3(\varepsilon,h_1,h_2,\abs{\Omega}).
\end{equation}
We recall that $-\abs{K}>0$ and so we can choose $\varepsilon>0$ small enough so that
\begin{equation}
-\frac{\abs{K}}{8}+\frac{9\abs{K}}{4}\varepsilon C>0,
\end{equation}
we see that $J_{\zeta}(\xi)$ is coercive and bounded below.  This, together with the fact that $J_{\zeta}(\cdot)$ is weakly lower semicontinuous, implies that the minimizer exists.  The convexity of $J_{\zeta}(\xi)$ in terms of $\xi$ implies that that the minimizer is unique.
\end{proof}

For the next lemma, it is useful to write the functional in (\ref{f5}) as
\begin{equation}\label{ij}
I(\xi,\zeta)=\frac{1}{2}\|\nabla\zeta\|_{L^2(\Omega)}^2+\int\displaylimits_{\Omega}(h_1-h_2)\zeta dx-J_{\zeta}(\xi),
\end{equation}
so that we may establish partial coerciveness below.
\begin{lem}
The functional, $I(\xi,\zeta)$ given by (\ref{f5}) and ultimately (\ref{ij}) is partially coercive.
\end{lem}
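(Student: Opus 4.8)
The plan is to work from the representation (\ref{ij}),
\[
I(\xi,\zeta)=\tfrac12\|\nabla\zeta\|_{L^2(\Omega)}^2+\int_{\Omega}(h_1-h_2)\zeta\,dx-J_{\zeta}(\xi),
\]
and to exploit the defining property of the admissible class: by the preceding lemma, whenever $(\xi,\zeta)\in\mathcal{C}$ the function $\xi$ is the \emph{global minimizer} of $J_{\zeta}$ over $W^{1,2}_0(\Omega)$, hence $J_{\zeta}(\xi)\le J_{\zeta}(\eta)$, i.e. $-J_{\zeta}(\xi)\ge-J_{\zeta}(\eta)$, for every $\eta\in W^{1,2}_0(\Omega)$. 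Thus it suffices to insert one convenient test function $\eta$ to bound $-J_{\zeta}(\xi)$ from below by a coercive expression in $\zeta$ alone. This is precisely the mechanism that neutralizes the sign-indefinite term $\tfrac{\abs K}{8}\abs{\nabla\xi}^2$ present in $I$: we never estimate $\|\nabla\xi\|_{L^2(\Omega)}$ directly.

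I would take $\eta=-\abs{\zeta}\in W^{1,2}_0(\Omega)$, for which $\|\nabla\eta\|_{L^2(\Omega)}=\|\nabla\zeta\|_{L^2(\Omega)}$. The virtue of this choice is that $\zeta-\abs\zeta\le0$ and $-(\zeta+\abs\zeta)\le0$ pointwise, so in $J_{\zeta}(-\abs\zeta)$ the two exponential integrands are dominated by $e^{f_0^{\varepsilon}}$ and $e^{g_0^{\varepsilon}}$, which are bounded on $\Omega$ (the logarithmic functions $u_0^{\varepsilon},v_0^{\varepsilon}$ are bounded for fixed $\varepsilon>0$, and $U_0^{\varepsilon},V_0^{\varepsilon}$ are harmonic with bounded boundary data), so those terms contribute at most a constant $C(\varepsilon,\abs\Omega)$ independent of $\zeta$. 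The two remaining terms of $J_{\zeta}(-\abs\zeta)$ are multiples of $\int_{\Omega}\abs\zeta\,dx$ and $\int_{\Omega}(h_1+h_2)\abs\zeta\,dx$, which by H\"older's inequality, the Poincar\'e inequality (\ref{poin11}) and Cauchy's inequality — exactly as in (\ref{bound11})--(\ref{bound11a}) — are bounded by $\delta\|\nabla\zeta\|_{L^2(\Omega)}^2+C(\delta,\varepsilon,\abs\Omega,h_1,h_2)$ for any $\delta>0$. Since $-\abs K>0$, collecting these gives
\[
J_{\zeta}(-\abs\zeta)\le\Big(-\tfrac{\abs K}{8}+\delta\Big)\|\nabla\zeta\|_{L^2(\Omega)}^2+C(\delta,\varepsilon,\abs\Omega,h_1,h_2),
\]
and hence, for every $(\xi,\zeta)\in\mathcal{C}$, that $-J_{\zeta}(\xi)\ge-J_{\zeta}(-\abs\zeta)\ge\big(\tfrac{\abs K}{8}-\delta\big)\|\nabla\zeta\|_{L^2(\Omega)}^2-C(\delta,\varepsilon,\abs\Omega,h_1,h_2)$.

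Inserting this into (\ref{ij}) and bounding $\int_{\Omega}(h_1-h_2)\zeta\,dx\ge-\delta\|\nabla\zeta\|_{L^2(\Omega)}^2-C(\delta,h_1,h_2)$ (again H\"older and Poincar\'e) yields
\[
I(\xi,\zeta)\ \ge\ \Big(\tfrac12+\tfrac{\abs K}{8}-2\delta\Big)\|\nabla\zeta\|_{L^2(\Omega)}^2-C(\delta,\varepsilon,\abs\Omega,h_1,h_2)\qquad\text{for }(\xi,\zeta)\in\mathcal{C}.
\]
Because $\tfrac12+\tfrac{\abs K}{8}>0$ in the range of $\abs K$ under consideration (equivalently $k_{11}>0$), a sufficiently small $\delta$ makes the coefficient strictly positive, so $I$ is coercive in the $\zeta$-variable along $\mathcal{C}$ and bounded below there; this is the asserted partial coercivity. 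I expect the crux to be precisely this balance: freezing the first equation through the constraint in $\mathcal{C}$ replaces the bad term $\tfrac{\abs K}{8}\abs{\nabla\xi}^2$ by the controllable quantity $-J_{\zeta}(-\abs\zeta)$, after which the size of $\abs K$ alone decides the sign of the surviving coefficient of $\|\nabla\zeta\|_{L^2(\Omega)}^2$. Combined with weak lower semicontinuity and the $\zeta$-dependent bound on $\xi$ coming from the preceding lemma (which controls $\|\nabla\xi\|_{L^2(\Omega)}$ once $\|\nabla\zeta\|_{L^2(\Omega)}$ is bounded), this estimate is what will deliver the minimizer in Theorem~\ref{tm21}.
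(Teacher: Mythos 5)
Your mechanism is the same one the paper uses: on $\mathcal{C}$ the component $\xi$ is the global minimizer of $J_{\zeta}$, so an upper bound on $J_{\zeta}$ at any competitor yields a lower bound for $-J_{\zeta}(\xi)$ in (\ref{ij}). The difference is the competitor, and this is where a genuine gap appears. The paper inserts $\eta=0$, whose $J_{\zeta}(0)$ contains no gradient term at all, and controls the two exponentials by the maximal-growth (Trudinger--Moser) inequality; the resulting bound (\ref{imin1}) has coefficient $\tfrac12-C\varepsilon$, independent of $\abs{K}$, so the lemma is asserted for every $\abs{K}<0$. You insert $\eta=-\abs{\zeta}$, which indeed makes the exponents $\tfrac12(\zeta-\abs{\zeta})$ and $-\tfrac12(\zeta+\abs{\zeta})$ nonpositive and the exponential integrals harmless, but it imports the term $-\tfrac{\abs{K}}{8}\,\|\nabla\zeta\|_{L^2(\Omega)}^2$ (positive, since $\abs{K}<0$) into $J_{\zeta}(-\abs{\zeta})$, and after transposition your final coefficient is $\tfrac12+\tfrac{\abs{K}}{8}-2\delta$. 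Positivity of that coefficient forces $\abs{K}>-4$, i.e.\ $k_{11}>0$ strictly. At this point of the paper the only standing hypothesis is $\abs{K}<0$: the remark following Proposition \ref{epsilonbound} states explicitly that everything up to there, this lemma included, must hold for $\abs{K}<-4$ (that is how the bounded-domain existence for $\abs{K}<-4$ announced in the introduction is obtained), and even the borderline value $\abs{K}=-4$, which belongs to the paper's case $-4\leq\abs{K}<0$ and hence to the full-plane theory, makes your coefficient nonpositive. So your argument establishes partial coercivity only on the subrange $-4<\abs{K}<0$ and does not prove the lemma in the generality in which it is stated and used.

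To be fair, your device has a real merit: with $\eta=-\abs{\zeta}$ the exponential terms are bounded by constants depending only on $\varepsilon$ and $\Omega$, whereas the paper's route needs the bound $J_{\zeta}(0)\leq C\int_{\Omega}\cosh(\zeta/2)\,dx\leq C'$, whose uniformity in $\zeta$ is the delicate step of its proof. But the price you pay, the loss $\tfrac{\abs{K}}{8}\|\nabla\zeta\|_{L^2(\Omega)}^2$, is exactly what destroys the cases $\abs{K}\leq-4$; it is not a removable technicality of your estimate but the decisive obstruction. If you want to keep your competitor you must choose one whose Dirichlet energy is negligible relative to $\|\nabla\zeta\|_{L^2(\Omega)}^2$ (for instance a suitable truncation of $-\abs{\zeta}$), or else revert to $\eta=0$ as in the paper and supply an honest control of the exponential terms there; as written, the proposal proves a weaker statement than the lemma.
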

\begin{proof}
First, we note that since $J_{\zeta}(\xi)$ is minimized (uniquely) by $\xi$, then by the maximal growth function for $W_0^{1,2}(\Omega)$ \cite{trudinger1967imbeddings}, we have for any $\zeta\in W_0^{1,2}(\Omega)$
\begin{equation}\label{jmin}
J_{\zeta}(\xi)\leq J_{\zeta}(0)\nonumber=-2\abs{K}\int\displaylimits_{\Omega}\bigg[e^{f_0^{\varepsilon}+\frac{1}{2}\zeta}+e^{g_0^{\varepsilon}-\frac{1}{2}\zeta}\bigg]dx
\leq C\int\displaylimits_{\Omega}\bigg[e^{\frac{1}{2}\zeta}+e^{-\frac{1}{2}\zeta}\bigg]dx
\leq C\int\displaylimits_{\Omega}\cosh\left(\frac{\zeta}{2}\right)dx
\leq C'.
\end{equation}

In view of (\ref{bound11}) and (\ref{jmin}),  we are able to obtain the following lower bound for $I(\xi,\zeta)$,
\begin{equation}\label{imin1}
I(\xi,\zeta)\geq\left(\frac{1}{2}-C\varepsilon\right)\|\nabla\zeta\|_{L^2(\Omega)}^2-C''
\end{equation}
where $C, C''>0$.  We again choose $\varepsilon$ small enough so that the coefficient of $\|\nabla\zeta\|_{L^2(\Omega)}^2$ is positive and so we may now conclude that the functional $I(\xi,\zeta)$ is both partially coercive and bounded below.  Meaning, since we are able to ``freeze'' the negative part of the functional, we are then able to establish coercivity for $I(\xi,\zeta)$.
\end{proof}

Now that partial coercivity of the functional $I(\xi,\zeta)$ has been established, we establish convergence of a minimizing sequence of (\ref{indefmin}).
\begin{lem}\label{weakmin}
A minimizing sequence $\left\{(\xi_n,\zeta_n)\right\}_{n=1}^{\infty}$ of (\ref{indefmin}) satisfies $\xi_n\rightharpoonup\xi$ weakly in $W^{1,2}_0(\Omega)$ and $\zeta_n\rightharpoonup\zeta$ weakly in $W^{1,2}_0(\Omega)$ as $n\rightarrow\infty$.
\end{lem}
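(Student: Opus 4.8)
The plan is to produce the two weak limits in turn: first the limit of $\zeta_n$, using the partial coercivity already established, and then the limit of $\xi_n$, by exploiting that the constraint (i) pins down each $\xi_n$ as the global minimizer of $J_{\zeta_n}$.

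First I would note that since $(\xi_n,\zeta_n)\in\mathcal{C}$, the well-posedness lemma for (i) guarantees that $\xi_n$ is the unique global minimizer of $J_{\zeta_n}$ over $W_0^{1,2}(\Omega)$, which is exactly the hypothesis under which the bound (\ref{imin1}) was derived. Hence along the sequence
\[
I(\xi_n,\zeta_n)\geq\left(\tfrac12-C\varepsilon\right)\|\nabla\zeta_n\|_{L^2(\Omega)}^2-C''.
\]
Because $\{(\xi_n,\zeta_n)\}$ is minimizing for (\ref{indefmin}) and $I$ is bounded below on $\mathcal{C}$, the left-hand side is bounded above uniformly in $n$; choosing $\varepsilon$ as before so that $\tfrac12-C\varepsilon>0$ yields a uniform bound on $\|\nabla\zeta_n\|_{L^2(\Omega)}$, and therefore on $\|\zeta_n\|_{W_0^{1,2}(\Omega)}$ via the Poincar\'e inequality (\ref{poin11}). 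Passing to a subsequence, which we do not relabel, we obtain $\zeta_n\rightharpoonup\zeta$ weakly in $W_0^{1,2}(\Omega)$ for some $\zeta\in W_0^{1,2}(\Omega)$.

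Next I would bound $\{\xi_n\}$ in $W_0^{1,2}(\Omega)$. Using the minimality $J_{\zeta_n}(\xi_n)\leq J_{\zeta_n}(0)$ together with the chain of estimates in (\ref{jmin}) gives $J_{\zeta_n}(\xi_n)\leq J_{\zeta_n}(0)\leq C\int_\Omega\cosh(\zeta_n/2)\,dx$. Since $\|\nabla\zeta_n\|_{L^2(\Omega)}$ is bounded uniformly, the Trudinger--Moser (maximal growth) inequality on $W_0^{1,2}(\Omega)$ \cite{trudinger1967imbeddings} makes $\int_\Omega e^{\pm\zeta_n/2}\,dx$ bounded uniformly in $n$, so $J_{\zeta_n}(\xi_n)\leq C_0$ with $C_0$ independent of $n$. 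On the other hand, the coercivity estimate for $J_\zeta$ from the well-posedness lemma has the form $J_\zeta(\xi)\geq c_0\|\nabla\xi\|_{L^2(\Omega)}^2-C_3$ with $c_0>0$ and $C_3=C_3(\varepsilon,h_1,h_2,\abs{\Omega})$ \emph{independent of} $\zeta$, since the positive exponential integrals were simply discarded there. Combining the two inequalities gives $c_0\|\nabla\xi_n\|_{L^2(\Omega)}^2\leq C_0+C_3$, so $\{\xi_n\}$ is bounded in $W_0^{1,2}(\Omega)$ by Poincar\'e, and passing to a further subsequence we obtain $\xi_n\rightharpoonup\xi$ weakly in $W_0^{1,2}(\Omega)$.

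The step I expect to be the main obstacle is making both bounds genuinely uniform in $n$: the bound on $J_{\zeta_n}(0)$ must depend on $\zeta_n$ only through $\|\nabla\zeta_n\|_{L^2(\Omega)}$, and the coercive lower bound for $J_{\zeta_n}$ must have constants that do not degrade as $\zeta_n$ varies over the minimizing sequence. Both hold thanks to the structure of $J_\zeta$ — the entire $\zeta$-dependence sits inside positive exponential integrands — so once the uniform $W_0^{1,2}$-bound on $\zeta_n$ is in hand, the rest is routine bookkeeping.
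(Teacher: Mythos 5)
Your proposal is correct and follows essentially the same route as the paper: boundedness of $\{\zeta_n\}$ from the partial coercivity estimate (\ref{imin1}) together with the upper bound along the minimizing sequence, and boundedness of $\{\xi_n\}$ from $J_{\zeta_n}(\xi_n)\leq J_{\zeta_n}(0)$ combined with the coercive lower bound for $J_{\zeta}$ from the well-posedness lemma, then weak compactness. Your extra remark that the bound on $J_{\zeta_n}(0)$ is uniform only because $\|\nabla\zeta_n\|_{L^2(\Omega)}$ is already controlled (via Trudinger--Moser) is a welcome precision that the paper's statement of (\ref{jmin}) glosses over, but it is the same argument.
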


\begin{proof}
We let $\left\{(\xi_n,\zeta_n)\right\}_{n=1}^{\infty}$ be any minimizing sequence of (\ref{indefmin}) such that,
\begin{equation*}
I(\xi_1,\zeta_1)\geq I(\xi_2,\zeta_2)\geq\cdots\geq I(\xi_n,\zeta_n)\geq\cdots.
\end{equation*}
We then define
\begin{equation}
\eta_0:=\inf\left\{I(\xi,\zeta)\mid (\xi,\zeta)\in\mathcal{C}\right\}=\lim\limits_{n\rightarrow\infty}I(\xi_n,\zeta_n).
\end{equation}
It is easy to see that by (\ref{imin1}), the sequence $\left\{\zeta_n\right\}_{n=1}^{\infty}$ satisfies,
\begin{equation}\label{bound3}
M_0\geq I(\xi_n,\zeta_n)\geq \left(\frac{1}{2}-\varepsilon C\right)\|\nabla\zeta_n\|_{L^2(\Omega)}^2-C'',
\end{equation}
where the upper bound, $M_0$, comes from the fact that a minimizing sequence is bounded.  By (\ref{bound3}), we see that $\nabla\zeta_n$ is bounded in $L^2(\Omega)$.  Furthermore, the Poincar\'e inequality on $W^{1,2}_0(\Omega)$ gives
\begin{equation}\label{poin1}
\|\zeta_n\|_{L^2(\Omega)}^2\leq C\|\nabla\zeta_n\|_{L^2(\Omega)}^2.
\end{equation}
When we insert (\ref{poin1}) into (\ref{imin1}), we establish that $\left\{\zeta_n\right\}_{n=1}^{\infty}$ is bounded in $L^2(\Omega)$.  Notice that we get
\begin{equation}\label{in46}
M_0\geq C_3\|\zeta_n\|_{L^2(\Omega)}^2-C_2
\end{equation}
for some $C_3>0$.  Inequality (\ref{in46}) implies that $\left\{\zeta_n\right\}_{n=1}^{\infty}$ is bounded in $L^2(\Omega)$, and ultimately bounded in  $W^{1,2}_0(\Omega)$. 

 In a similar fashion, we use (\ref{jmin}) to obtain
\begin{align}\label{in47}
M_0'\geq C_4\|\nabla\xi_n\|_{L^2(\Omega)}^2-C_1\geq C_5\|\xi_n\|_{L^2(\Omega)}^2-C_1
\end{align}
where $C_4,C_5>0$.  Inequality (\ref{in47}) establishes the boundedness of $\left\{\xi_n\right\}_{n=1}^{\infty}$ in $W^{1,2}_0(\Omega)$.  We then conclude that (passing to a subsequence if necessary) $\xi_n\rightharpoonup \xi$ weakly in $W^{1,2}_0(\Omega)$.

\end{proof}
We are now ready show that the pair of functions $(\xi,\zeta)$ found in Lemma \ref{weakmin} belongs to the admissible class given by (\ref{admis}).
\begin{lem}
The functions, $\xi$ and $\zeta$ defined in Lemma \ref{weakmin} satisfy $\left(\xi,\zeta\right)\in\mathcal{C}$.  
\end{lem}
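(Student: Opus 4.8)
The plan is to push each term of condition (i) to the limit along the minimizing sequence $\{(\xi_n,\zeta_n)\}$. Since $\xi_n\rightharpoonup\xi$ and $\zeta_n\rightharpoonup\zeta$ weakly in $W^{1,2}_0(\Omega)$, the membership $\xi,\zeta\in W^{1,2}_0(\Omega)$ is automatic, so only the weak formulation (i) must be checked. Fix an arbitrary $w\in W^{1,2}_0(\Omega)$. Each $(\xi_n,\zeta_n)\in\mathcal{C}$ satisfies
\[
\int_\Omega\Big[\nabla\xi_n\cdot\nabla w+4we^{f_0^{\varepsilon}+\frac12(\xi_n+\zeta_n)}+4we^{g_0^{\varepsilon}+\frac12(\xi_n-\zeta_n)}-8w+(h_1+h_2)w\Big]\,dx=0 .
\]
The terms $-8w$ and $(h_1+h_2)w$ do not depend on $n$, and $\int_\Omega\nabla\xi_n\cdot\nabla w\,dx\to\int_\Omega\nabla\xi\cdot\nabla w\,dx$ because $\nabla\xi_n\rightharpoonup\nabla\xi$ weakly in $L^2(\Omega)$ while $\nabla w\in L^2(\Omega)$. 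Hence everything reduces to passing to the limit in the two exponential integrals.

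First I would record that $e^{f_0^{\varepsilon}},e^{g_0^{\varepsilon}}\in L^\infty(\Omega)$: for fixed $\varepsilon>0$ the functions $u_0^{\varepsilon},v_0^{\varepsilon}$ of (\ref{backuv}) are bounded on the bounded set $\Omega$, and $U_0^{\varepsilon},V_0^{\varepsilon}$ are harmonic with bounded boundary data, hence bounded by the maximum principle; therefore $f_0^{\varepsilon}=u_0^{\varepsilon}+U_0^{\varepsilon}$ and $g_0^{\varepsilon}=v_0^{\varepsilon}+V_0^{\varepsilon}$ are bounded. Next, since in dimension two the embedding $W^{1,2}_0(\Omega)\hookrightarrow L^p(\Omega)$ is compact for every $p<\infty$, we get $\xi_n\to\xi$ and $\zeta_n\to\zeta$ strongly in every $L^p(\Omega)$; in particular $\xi_n\pm\zeta_n\to\xi\pm\zeta$ in $L^2(\Omega)$ and, after passing to a subsequence, almost everywhere in $\Omega$. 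Thus $e^{\frac12(\xi_n+\zeta_n)}\to e^{\frac12(\xi+\zeta)}$ and $e^{\frac12(\xi_n-\zeta_n)}\to e^{\frac12(\xi-\zeta)}$ almost everywhere.

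To upgrade a.e. convergence to strong $L^2$ convergence I would invoke the Moser--Trudinger inequality (the maximal growth estimate for $W^{1,2}_0(\Omega)$ used already in the preceding lemma \cite{trudinger1967imbeddings}): since $\{\xi_n+\zeta_n\}$ is bounded in $W^{1,2}_0(\Omega)$, the family $\{e^{\frac12(\xi_n+\zeta_n)}\}$ is bounded in $L^r(\Omega)$ for every $r<\infty$, so its squares form a bounded family in $L^2(\Omega)$ and are therefore equi-integrable. Vitali's convergence theorem then gives $e^{\frac12(\xi_n+\zeta_n)}\to e^{\frac12(\xi+\zeta)}$ strongly in $L^2(\Omega)$, and likewise $e^{\frac12(\xi_n-\zeta_n)}\to e^{\frac12(\xi-\zeta)}$ in $L^2(\Omega)$. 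Since $we^{f_0^{\varepsilon}}\in L^2(\Omega)$ and $we^{g_0^{\varepsilon}}\in L^2(\Omega)$ (each a product of an $L^2$ and an $L^\infty$ function), the weak--strong pairing yields
\[
\int_\Omega we^{f_0^{\varepsilon}+\frac12(\xi_n+\zeta_n)}\,dx\to\int_\Omega we^{f_0^{\varepsilon}+\frac12(\xi+\zeta)}\,dx ,
\]
and the analogous statement with $g_0^{\varepsilon}$ in place of $f_0^{\varepsilon}$ and $\xi_n-\zeta_n$ in place of $\xi_n+\zeta_n$. Passing to the limit in the identity above shows that $(\xi,\zeta)$ satisfies (i); as the limiting identity is independent of the extracted subsequence, it holds for the full sequence, and hence $(\xi,\zeta)\in\mathcal{C}$.

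I expect the main obstacle to be precisely the handling of the two nonlinear exponential terms: weak $W^{1,2}_0$ convergence alone is too weak to pass to the limit, so one must extract strong convergence of the exponentials, and this is exactly where the two-dimensional Moser--Trudinger embedding is indispensable, since it supplies the uniform higher integrability needed to combine the a.e. convergence with equi-integrability in Vitali's theorem. The remaining ingredients — the $L^\infty$ bound on $e^{f_0^{\varepsilon}},e^{g_0^{\varepsilon}}$, the weak--strong pairings, and the subsequence bookkeeping — are routine.
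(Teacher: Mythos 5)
Your proposal is correct, and it follows the same overall strategy as the paper (pass to the limit in condition (i); the gradient and linear terms are immediate from weak convergence; the whole difficulty is the two exponential terms, which are controlled through the Trudinger--Moser inequality together with the compactness of the embedding $W^{1,2}_0(\Omega)\hookrightarrow L^2(\Omega)$). The difference is in how you upgrade the convergence of the exponentials: the paper argues quantitatively, using the mean value theorem and the generalized H\"older inequality to bound the difference of the exponential integrals by $C\left(\|\xi_n-\xi\|_{L^2(\Omega)}+\|\zeta_n-\zeta\|_{L^2(\Omega)}\right)$, with the Trudinger--Moser inequality supplying uniform bounds on the factors $\left(\int_\Omega e^{4|\xi_n|}dx\right)^{1/8}$, etc.; you instead argue softly, extracting a.e.\ convergence from the compact embedding, using Trudinger--Moser to get uniform $L^r$ bounds on $e^{\frac12(\xi_n\pm\zeta_n)}$ for every finite $r$, and concluding strong $L^2$ convergence of the exponentials via Vitali's theorem, followed by a weak--strong pairing against $we^{f_0^{\varepsilon}},we^{g_0^{\varepsilon}}\in L^2(\Omega)$. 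The paper's route yields an explicit Lipschitz-type rate in the $L^2$ distances (which is in the same spirit as the computation reused in the subsequent strong-convergence lemma), while your route avoids the pointwise mean-value manipulation and has the virtue of carrying the test function $w$ explicitly through the limit (the paper's displayed estimates (\ref{e1})--(\ref{e2}) omit $w$), at the mild cost of an extra subsequence extraction for the a.e.\ convergence; your observation that $e^{f_0^{\varepsilon}},e^{g_0^{\varepsilon}}\in L^\infty(\Omega)$ for fixed $\varepsilon>0$ is also correct and is implicitly used by the paper as well.
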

\begin{proof}
Since $\xi_n,\zeta_n\in\mathcal{C}$, we know that these satisfy condition (i) in (\ref{admis}). We can write this condition as the following
\begin{multline}\label{18}
\int\displaylimits_{\Omega}\bigg[\nabla\xi_n\cdot\nabla w+4we^{f_0^{\varepsilon}+
\frac{1}{2}(\xi_n+\zeta_n)}+4we^{g_0^{\varepsilon}+\frac{1}{2}(\xi_n-\zeta_n)}-8w+(h_1-h_2)w\bigg]dx=0,
\quad\forall w\in W^{1,2}_0(\Omega).
\end{multline}
Now, we recall that $\xi_n\rightharpoonup\xi$ and $\zeta_n\rightharpoonup\zeta$ weakly in $W^{1,2}_0(\Omega)$.  From this weak convergence, we know that we can find subsequences, denoted by $\left\{\xi_{n_j}\right\}_{j=1}^{\infty}$ and $\left\{\zeta_{n_j}\right\}_{j=1}^{\infty}$, such that they converge strongly to $\xi$ and $\zeta$ in $L^2(\Omega)$.  Without loss of generality (passing to a subsequence if necessary), we will assume that $\xi_n\rightarrow\xi$, and $\zeta_n\rightarrow\zeta$ strongly in $L^2(\Omega)$ as $n\rightarrow\infty$. 

We recall the Trudinger-Moser inequality of the form,
\begin{equation}\label{doubleTM}
\int\displaylimits_{\Omega}e^fdx\leq C_1'e^{C_2'\|f\|_{W^{1,2}(\Omega)}^2},\qquad f\in W^{1,2}(\Omega).
\end{equation}
We will use the inequality given by (\ref{doubleTM}) to show that, as $n\rightarrow\infty$, we obtain the convergence of,
\begin{equation}\label{e1}
\int\displaylimits_{\Omega}e^{f_0^{\varepsilon}+\frac{1}{2}(\xi_n+\zeta_n)}dx\rightarrow\int\displaylimits_{\Omega}e^{f_0^{\varepsilon}+\frac{1}{2}(\xi+\zeta)}dx,
\end{equation}
and
\begin{equation}\label{e2}
\int\displaylimits_{\Omega}e^{g_0^{\varepsilon}+\frac{1}{2}(\xi_n-\zeta_n)}dx\rightarrow\int\displaylimits_{\Omega}e^{g_0^{\varepsilon}+\frac{1}{2}(\xi-\zeta)}dx.
\end{equation}
In order to establish both (\ref{e1}) and (\ref{e2}), we will also use the mean value theorem and the general H\"older inequality.  With these tools, we are able to obtain the following estimates,
\begin{align}
&\nonumber\abs{\int\displaylimits_{\Omega}e^{f_0^{\varepsilon}+\frac{1}{2}(\xi_n+\zeta_n)}dx-\int\displaylimits_{\Omega}e^{f_0^{\varepsilon}+\frac{1}{2}(\xi+\zeta)}dx}\leq C\int\displaylimits_{\Omega}e^{\frac{\abs{\xi_n}+\abs{\xi}}{2}}e^{\frac{\abs{\zeta_n}+\abs{\zeta}}{2}}\abs{\zeta_n-\zeta+\xi_n-\xi}dx\\
\nonumber&\leq C\left(\int\displaylimits_{\Omega}  e^{4\abs{\xi_n}} dx\right)^{\frac{1}{8}}\left(\int\displaylimits_{\Omega}e^{4\abs{\xi}}   dx\right)^{\frac{1}{8}}\left(\int\displaylimits_{\Omega}  e^{4\abs{\zeta_n}} dx\right)^{\frac{1}{8}}\left(\int\displaylimits_{\Omega} e^{4\abs{\zeta}}  dx\right)^{\frac{1}{8}}\|\zeta_n-\zeta+\xi_n-\xi\|_{L^2(\Omega)}\\
&\leq C'\left(\|\zeta_n-\zeta\|_{L^2(\Omega)}+\|\xi_n-\xi\|_{L^2(\Omega)} \right)\rightarrow 0\text{ as }n\rightarrow\infty.
\end{align}

One can see that we have now established the convergence given by (\ref{e1}).  In a similar fashion, we establish (\ref{e2}). The details are left to the reader as they are nearly a mirror image of the verification of (\ref{e1}).  We take $n\rightarrow\infty$ in (\ref{18}) and see that $(\xi,\zeta)\in\mathcal{C}$. 
\end{proof}
\begin{lem}
The convergence of $\xi_n\rightarrow\xi$ is strong convergence in $W^{1,2}_0(\Omega)$.
\end{lem}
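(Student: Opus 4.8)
The plan is to turn the weak equation (i) into an identity for $\|\nabla(\xi_n-\xi)\|_{L^2(\Omega)}^2$ and then show its right-hand side is $o(1)$. Since $(\xi_n,\zeta_n)\in\mathcal{C}$ by hypothesis and $(\xi,\zeta)\in\mathcal{C}$ by the previous lemma, both pairs satisfy condition (i) of (\ref{admis}). Subtracting the weak identity for $(\xi,\zeta)$ from the one for $(\xi_n,\zeta_n)$ and inserting the admissible test function $w=\xi_n-\xi\in W_0^{1,2}(\Omega)$, the linear terms $-8w$ and $(h_1+h_2)w$ drop out and we are left with
\begin{multline*}
\|\nabla(\xi_n-\xi)\|_{L^2(\Omega)}^2 = -4\int_{\Omega}(\xi_n-\xi)\left(e^{f_0^{\varepsilon}+\frac12(\xi_n+\zeta_n)}-e^{f_0^{\varepsilon}+\frac12(\xi+\zeta)}\right)dx\\
-4\int_{\Omega}(\xi_n-\xi)\left(e^{g_0^{\varepsilon}+\frac12(\xi_n-\zeta_n)}-e^{g_0^{\varepsilon}+\frac12(\xi-\zeta)}\right)dx.
\end{multline*}

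It then suffices to prove that each integral on the right tends to $0$ as $n\to\infty$. The background functions $f_0^{\varepsilon}$ and $g_0^{\varepsilon}$ are fixed and bounded on $\Omega$ (being sums of bounded harmonic functions and logarithms of ratios bounded above and below by positive constants), so $e^{f_0^{\varepsilon}},e^{g_0^{\varepsilon}}$ are bounded. Because $\{\xi_n\}$ and $\{\zeta_n\}$ are bounded in $W_0^{1,2}(\Omega)$ by Lemma \ref{weakmin}, the Trudinger-Moser inequality (\ref{doubleTM}) gives, for every $p<\infty$, a bound on $\|e^{f_0^{\varepsilon}+\frac12(\xi_n\pm\zeta_n)}\|_{L^p(\Omega)}$ (and on the analogous quantities for $\xi,\zeta$) that is uniform in $n$; in particular the two bracketed differences are bounded in $L^2(\Omega)$ uniformly in $n$. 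Hölder's inequality then controls each right-hand integral by $C\,\|\xi_n-\xi\|_{L^2(\Omega)}$, and since $\xi_n\to\xi$ strongly in $L^2(\Omega)$ by the Rellich--Kondrachov embedding (as already used in the two preceding lemmas), the right-hand side vanishes.

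Hence $\|\nabla(\xi_n-\xi)\|_{L^2(\Omega)}\to 0$, and since the Poincar\'e inequality makes $\|\nabla\cdot\|_{L^2(\Omega)}$ an equivalent norm on $W_0^{1,2}(\Omega)$, this is precisely the strong convergence $\xi_n\to\xi$ in $W_0^{1,2}(\Omega)$ claimed. The only step requiring genuine care is the uniform $L^2(\Omega)$ bound on the exponential nonlinearities, for which the a priori $W_0^{1,2}$ bounds of Lemma \ref{weakmin} and the Trudinger-Moser estimate are both essential; the rest is a direct manipulation of the difference of the two weak formulations, entirely parallel to the convergence argument already carried out for (\ref{e1}) and (\ref{e2}).
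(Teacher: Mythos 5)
Your argument is correct and follows essentially the same route as the paper: insert $w=\xi_n-\xi$ into condition (i) for both $(\xi_n,\zeta_n)$ and $(\xi,\zeta)$, subtract to isolate $\|\nabla(\xi_n-\xi)\|_{L^2(\Omega)}^2$, and show the exponential terms vanish using the uniform Trudinger--Moser bounds together with $\xi_n\rightarrow\xi$ in $L^2(\Omega)$, finishing with Poincar\'e. Your H\"older estimate on the difference of exponentials simply makes explicit the convergence the paper asserts, so no further changes are needed.
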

\begin{proof}
First, we let $w=\xi_n-\xi$ in (i) and in (\ref{18}).  With this substitution, we end up with both
\begin{multline}\label{z1}
\int\displaylimits_{\Omega}\bigg[\nabla\xi\cdot\nabla  (\xi_n-\xi)+4 (\xi_n-\xi)e^{f_0^{\varepsilon}+
\frac{1}{2}(\xi+\zeta)}+4 (\xi_n-\xi)e^{g_0^{\varepsilon}+\frac{1}{2}(\xi-\zeta)}\\-8 (\xi_n-\xi)+(h_1+h_2) (\xi_n-\xi)\bigg]dx=0
\end{multline}
and
\begin{multline}\label{z2}
\int\displaylimits_{\Omega}\bigg[\nabla\xi_n\cdot\nabla (\xi_n-\xi)+4(\xi_n-\xi)e^{f_0^{\varepsilon}+
\frac{1}{2}(\xi_n+\zeta_n)}+4(\xi_n-\xi)e^{g_0^{\varepsilon}+\frac{1}{2}(\xi_n-\zeta_n)}\\-8(\xi_n-\xi)+(h_1-h_2)(\xi_n-\xi)\bigg]dx=0.
\end{multline}
We now subtract (\ref{z2}) from (\ref{z1}) and take the limit as $n\rightarrow\infty$ to observe that
\begin{multline}
\int\displaylimits_{\Omega}\abs{\nabla\xi_n-\nabla\xi}^2dx=\int\displaylimits_{\Omega}\bigg[4(\xi_n-\xi)e^{f_0^{\varepsilon}}\left(e^{\frac{1}{2}(\xi+\zeta)}-e^{\frac{1}{2}(\xi_n+\zeta_n)}\right)\\+4(\zeta_n-\zeta)e^{g_0^{\varepsilon}}\left(e^{\frac{1}{2}(\xi-\zeta)}-e^{\frac{1}{2}(\xi_n-\zeta_n)} \right)\bigg]dx\rightarrow 0.
\end{multline}
Therefore, we have shown that $\nabla\xi_n\rightarrow\nabla\xi$ in $L^2(\Omega)$.  We may conclude now that $\xi_n\rightarrow\xi$ strongly in $W^{1,2}_0(\Omega)$.
\end{proof}
It is at this point that we are able to provide the proof of Theorem \ref{tm21}. 

\begin{proof}
We recall that $\zeta_n\rightharpoonup\zeta$ weakly in $W^{1,2}_0(\Omega)$ and $\xi_n\rightarrow\xi$ strongly in $W^{1,2}_0(\Omega).$ As a result, we have
\begin{equation}\label{419}
\lim\limits_{n\rightarrow\infty}J_{\zeta_n}(\xi_n)=J_{\zeta}(\xi).
\end{equation}
We use (\ref{ij}), (\ref{419}), and recall that $\displaystyle\eta_0=\lim\limits_{n\rightarrow\infty}I(\xi_n,\zeta_n)$ to obtain the estimate
\begin{align}
\eta_0&=\lim\limits_{n\rightarrow\infty}I(\xi_n,\zeta_n)\geq \frac{1}{2}\|\nabla\zeta\|_{L^2(\Omega)}^2+\int\displaylimits_{\Omega}(h_1-h_2)\zeta dx-J_{\zeta}(\xi)\geq I(\xi,\zeta).
\end{align}
Since we have already established that $(\xi,\zeta)\in\mathcal{C}$, we see that the pair solves the minimization problem given by (\ref{indefmin}).
\end{proof}

Next, we establish that the functions we found are solutions to the system (\ref{s4}).
\begin{lem}\label{chainrule}
The pair $(\xi,\zeta)$ defined in Lemma \ref{weakmin} is a solution to the system given by (\ref{s4}).  
\end{lem}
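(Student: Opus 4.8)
The plan is to reinterpret the constrained problem (\ref{indefmin}) as an unconstrained minimization in the single variable $\zeta$, after which the conclusion reduces to a first-variation computation. By the first lemma of this section, constraint (i) is exactly the Euler--Lagrange equation of $J_\zeta$, so it defines a single-valued map $\zeta\mapsto\xi=\xi(\zeta)$, namely the unique global minimizer of $J_\zeta$ in $W^{1,2}_0(\Omega)$; hence $\mathcal{C}$ is precisely the graph of this map. Using the identity (\ref{ij}), minimizing $I$ over $\mathcal{C}$ is then equivalent to minimizing the reduced functional
\begin{equation*}
\tilde{I}(\zeta):=I(\xi(\zeta),\zeta)=\tfrac12\|\nabla\zeta\|_{L^2(\Omega)}^2+\int_{\Omega}(h_1-h_2)\zeta\,dx-J_{\zeta}(\xi(\zeta))
\end{equation*}
over all of $W^{1,2}_0(\Omega)$, and the pair $(\xi,\zeta)$ from Lemma \ref{weakmin} makes $\zeta$ an unconstrained minimizer of $\tilde{I}$. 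The first equation of (\ref{s4}) then holds automatically, being just the weak formulation of (i), which is satisfied because $(\xi,\zeta)\in\mathcal{C}$; so the whole lemma comes down to showing that the first variation of $\tilde{I}$ at $\zeta$ vanishes, the resulting weak equation being precisely the second equation of (\ref{s4}).

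For $\psi\in W^{1,2}_0(\Omega)$ and small $t$, set $\xi_t:=\xi(\zeta+t\psi)$ and $\Phi(t):=\tilde{I}(\zeta+t\psi)$. The quadratic and linear terms of $\Phi$ are smooth in $t$, contributing $\int_{\Omega}\nabla\zeta\cdot\nabla\psi\,dx+\int_{\Omega}(h_1-h_2)\psi\,dx$ to $\Phi'(0)$, so the only delicate point is $\frac{d}{dt}\big|_{t=0}J_{\zeta+t\psi}(\xi_t)$, which I would handle by an envelope/sandwich argument. Since $\xi_t$ minimizes $J_{\zeta+t\psi}$ and $\xi$ minimizes $J_\zeta$, one has $J_{\zeta+t\psi}(\xi_t)\le J_{\zeta+t\psi}(\xi)$ and $J_\zeta(\xi)\le J_\zeta(\xi_t)$, so for $t>0$ the quotient $t^{-1}\big(J_{\zeta+t\psi}(\xi_t)-J_\zeta(\xi)\big)$ is squeezed between $t^{-1}\big(J_{\zeta+t\psi}(\xi_t)-J_\zeta(\xi_t)\big)$ and $t^{-1}\big(J_{\zeta+t\psi}(\xi)-J_\zeta(\xi)\big)$ (with reversed inequalities for $t<0$). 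As $J$ depends on $\zeta$ only through its two exponential terms, each of these quotients equals $-2\abs{K}\int_{\Omega}e^{f_0^{\varepsilon}+\frac{1}{2}(\eta+\zeta)}\tfrac{e^{t\psi/2}-1}{t}\,dx-2\abs{K}\int_{\Omega}e^{g_0^{\varepsilon}+\frac{1}{2}(\eta-\zeta)}\tfrac{e^{-t\psi/2}-1}{t}\,dx$ with $\eta\in\{\xi,\xi_t\}$; letting $t\to0$, using the Trudinger--Moser inequality (\ref{doubleTM}) to dominate $\tfrac{e^{\pm t\psi/2}-1}{t}$ in $L^2(\Omega)$ and the continuity $\xi_t\to\xi$ in $W^{1,2}_0(\Omega)$ (recorded below) to replace $\xi_t$ by $\xi$, exactly as in the proof that $(\xi,\zeta)\in\mathcal{C}$, both bounds converge to the common value $-\abs{K}\int_{\Omega}e^{f_0^{\varepsilon}+\frac{1}{2}(\xi+\zeta)}\psi\,dx+\abs{K}\int_{\Omega}e^{g_0^{\varepsilon}+\frac{1}{2}(\xi-\zeta)}\psi\,dx$. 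Thus $\Phi$ is differentiable at $0$, and $\Phi'(0)=0$ gives
\begin{multline*}
\int_{\Omega}\nabla\zeta\cdot\nabla\psi\,dx+\int_{\Omega}(h_1-h_2)\psi\,dx\\
+\abs{K}\int_{\Omega}e^{f_0^{\varepsilon}+\frac{1}{2}(\xi+\zeta)}\psi\,dx-\abs{K}\int_{\Omega}e^{g_0^{\varepsilon}+\frac{1}{2}(\xi-\zeta)}\psi\,dx=0
\end{multline*}
for every $\psi\in W^{1,2}_0(\Omega)$, which is the weak form of the second equation in (\ref{s4}); combined with the first equation this completes the proof.

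It remains to record the continuity of $\zeta\mapsto\xi(\zeta)$ invoked above, proved by the same machinery as the earlier lemmas: if $\zeta_k\to\zeta$ in $W^{1,2}_0(\Omega)$, then $J_{\zeta_k}(\xi(\zeta_k))\le J_{\zeta_k}(0)\le C$ (uniform, since the $\zeta_k$ are bounded) and the coercivity estimate of the first lemma, whose constants depend only on $\abs{K},\varepsilon,\Omega,h_1,h_2$, bounds $\{\xi(\zeta_k)\}$ in $W^{1,2}_0(\Omega)$; a weak subsequential limit satisfies (i) with $\zeta$ (pass to the limit using (\ref{doubleTM})) and minimizes $J_\zeta$ by weak lower semicontinuity, hence equals $\xi(\zeta)$ by uniqueness, so $\xi(\zeta_k)\rightharpoonup\xi(\zeta)$, and the test-function argument used above for strong $W^{1,2}_0$-convergence upgrades this to strong convergence. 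I expect the main obstacle to be the bookkeeping in the envelope argument---specifically, passing the $J$-difference quotients to the limit \emph{uniformly in the moving minimizer $\xi_t$}, which is exactly where the Trudinger--Moser control and the continuity of $\xi(\cdot)$ are indispensable---rather than any single estimate. Finally, if classical solutions are desired, elliptic regularity applied to (\ref{s4}), whose exponential right-hand sides lie in every $L^p(\Omega)$ by (\ref{doubleTM}), upgrades $(\xi,\zeta)$ to a smooth solution.
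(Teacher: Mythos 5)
Your proposal is correct, and it reaches the same conclusion as the paper by the same overall frame (reduce to the graph map $\zeta\mapsto\xi(\zeta)$, perturb $\zeta_\tau=\zeta+\tau\hat{\zeta}$, let $\xi_\tau$ minimize $J_{\zeta_\tau}$, and use that $\tau\mapsto I(\xi_\tau,\zeta_\tau)$ is minimized at $\tau=0$), but the crucial differentiation step is handled differently. The paper simply asserts that $\xi_\tau$ depends smoothly on $\tau$, differentiates $I(\xi_\tau,\zeta_\tau)$ by the chain rule, and then kills all terms involving $\hat{\xi}=\frac{d}{d\tau}\xi_\tau|_{\tau=0}$ by testing condition (i) with $w=\hat{\xi}$; the smooth dependence of the constrained minimizer on the parameter is never justified (it would normally require an implicit-function-theorem argument). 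You instead compute $\frac{d}{dt}\big|_{t=0}J_{\zeta+t\psi}(\xi_t)$ by a Danskin/envelope squeeze, sandwiching the difference quotient between the quotients evaluated at the frozen arguments $\xi$ and $\xi_t$, and passing to the limit with Trudinger--Moser control plus the continuity of $\zeta\mapsto\xi(\zeta)$, which you prove by the same coercivity--uniqueness--strong-convergence machinery already used in the section. This buys you a proof that needs only continuity (not differentiability) of the slaved minimizer, so it actually closes the one soft spot in the paper's argument, at the cost of the extra bookkeeping you flag; the paper's route is shorter but rests on the unproven smoothness of $\tau\mapsto\xi_\tau$. Two minor remarks: in your continuity lemma, the cleanest way to identify the weak limit is to note that it satisfies (i) for the limiting $\zeta$ and that (i) is the Euler--Lagrange equation of the strictly convex functional $J_\zeta$, so it must coincide with the unique global minimizer (the appeal to weak lower semicontinuity alone is a bit loose because the functional $J_{\zeta_k}$ is itself moving); and your reading of which equation is which in (\ref{s4}) is the correct one --- condition (i) is the weak form of the $\xi$-equation and the first variation in $\zeta$ yields the $\zeta$-equation --- despite the paper's own wording in this lemma swapping the labels.
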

\begin{proof}
Before anything else, we point out that the weak form of the second equation in (\ref{s4}) is indeed the condition (i) in (\ref{admis}).  Therefore, in the present, we only need to verify the first equation in (\ref{s4}).  To this end, we let $\hat{\zeta}$ be any test function in $W^{1,2}_0(\Omega)$.  We then set $\zeta_{\tau}=\zeta+\tau\hat{\zeta}$.  We have shown that for any $\zeta$ there is a unique minimizer $\xi$ of $J_{\zeta}(\xi)$, we denote the unique minimizer of $J_{\zeta_{\tau}}(\cdot)$ by $\xi_{\tau}$.  We remark here on the dependence of $\xi$ on $\zeta$ and thus $\xi_{\tau}$ on $\zeta_{\tau}$.  That is, we have $\xi_{\tau}=\xi_{\tau}(\zeta+\tau\hat{\zeta})$ and $\xi_{\tau}$ depends smoothly upon $\tau$.  We then set
\begin{equation}\label{xizeta}
\hat{\xi}=\left(\frac{d}{d\tau}\xi_{\tau}\right)\Bigg|_{\tau=0}.
\end{equation}
  Since the pair $(\xi,\zeta)$ minimizes $I(\cdot,\cdot)$, we see that $I(\xi_{\tau},\zeta_{\tau})$ attains its minimum at $\tau=0$.  Therefore, we have
\begin{equation}\label{id}
\frac{d}{d\tau}I(\xi_{\tau},\zeta_{\tau})\Big|_{\tau=0}=0.
\end{equation}
Then, in view of the formulation of $I(\xi,\zeta)$ given by (\ref{f5}), we rewrite (\ref{id}) as
\begin{multline}\label{672}
\int\displaylimits_{\Omega}\bigg[\frac{\abs{K}}{4}\nabla\xi\cdot\nabla\left(\frac{d}{d\tau}\xi_{\tau}\right)\Bigg|_{\tau=0}+\nabla\zeta\cdot\nabla\hat{\zeta}+\abs{K}e^{f_0^{\varepsilon}+\frac{1}{2}(\xi+\zeta)}\hat{\xi}\left(\left(\frac{d}{d\tau}\xi_{\tau}\right)\Bigg|_{\tau=0}+\hat{\zeta}\right)\\+\abs{K}e^{g_0^{\varepsilon}+\frac{1}{2}(\xi-\zeta)}\hat{\xi}\left(\left(\frac{d}{d\tau}\xi_{\tau}\right)\Bigg|_{\tau=0}-\hat{\zeta}\right)-\left(2\abs{K}-\frac{\abs{K}}{4}(h_1+h_2)\right)\left(\frac{d}{d\tau}\xi_{\tau}\right)\Bigg|_{\tau=0}\\+(h_1-h_2)\hat{\zeta} \bigg]dx=0.
\end{multline}
We may use (\ref{xizeta}) to rewrite (\ref{672}) in the more convenient form,

\begin{multline}
\int\displaylimits_{\Omega}\bigg[\nabla\zeta\cdot\nabla\hat{\zeta}+\abs{K}e^{f_0^{\varepsilon}+\frac{1}{2}(\xi+\zeta)}\hat{\zeta}-\abs{K}e^{g_0^{\varepsilon}+\frac{1}{2}(\xi-\zeta)}\hat{\zeta}+(h_1-h_2)\hat{\zeta} \bigg]dx\\=\int\displaylimits_{\Omega}\bigg[-\frac{\abs{K}}{4}\nabla\xi\cdot\nabla\hat{\xi}-\abs{K}e^{f_0^{\varepsilon}+\frac{1}{2}(\xi+\zeta)}\hat{\xi}-\abs{K}e^{g_0^{\varepsilon}+\frac{1}{2}(\xi-\zeta)}\hat{\xi}+2\abs{K}\hat{\xi}-\frac{\abs{K}}{4}(h_1+h_2)\hat{\xi}\bigg]dx
\end{multline}
By the condition (i) in (\ref{admis}), we see that the right hand side is zero.  Therefore, we have obtained
\begin{equation}
\int\displaylimits_{\Omega}\nabla\zeta\cdot\nabla\hat{\zeta}+\abs{K}e^{f_0^{\varepsilon}+\frac{1}{2}(\xi+\zeta)}\hat{\zeta}-\abs{K}e^{g_0^{\varepsilon}+\frac{1}{2}(\xi-\zeta)}\hat{\zeta}+(h_1-h_2)\hat{\zeta} dx=0
\end{equation}
for an arbitrary test function $\hat{\zeta}\in W^{1,2}_0(\Omega)$.  It is clear that this is the weak formulation of the first equation in (\ref{s4}) and we have fully verified the system (\ref{s4}).  
\end{proof}

Now that we have established existence of solutions to the regularized system in (\ref{s3}), we must go back in terms of the original variables.  To do so, we first see that we have found a solution pair to the regularized system (\ref{s2}) which we can denote by $(u_{\varepsilon},v_{\varepsilon})$.  We may now discuss the convergence of such solutions.
\begin{prop}\label{epsilonbound}
When $-4\leq\abs{K}<0$, the solution pair $(u_{\varepsilon},v_{\varepsilon})$ satisfies $u_{\varepsilon}, v_{\varepsilon}<-\ln 2$ in $\Omega$.
\end{prop}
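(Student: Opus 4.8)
Undoing the shift $u_1=u+\ln 2$, $v_1=v+\ln 2$ that produced (\ref{s1})--(\ref{s2}) from (\ref{maineq}), the assertion amounts to showing that the solution pair of (\ref{s2}) obtained in Lemma~\ref{chainrule}---which I continue to write $(u_\varepsilon,v_\varepsilon)$---is strictly negative throughout $\Omega$. The plan is a maximum-principle argument in two stages: first control the sum $u_\varepsilon+v_\varepsilon$, then the pointwise maximum $\max(u_\varepsilon,v_\varepsilon)$. Two routine facts come first. For fixed $\varepsilon$ the right-hand sides of (\ref{s2}) are smooth and bounded on $\overline\Omega$, so $u_\varepsilon,v_\varepsilon\in C^2(\Omega)\cap C(\overline\Omega)$ by standard elliptic estimates. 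And on $\partial\Omega$ we have $u_\varepsilon=u_0^\varepsilon$, $v_\varepsilon=v_0^\varepsilon$, and for $0<\varepsilon<1$ (the only range of interest, since $\varepsilon\to0$) each summand of $u_0^\varepsilon,v_0^\varepsilon$ is $\ln$ of a number in $(0,1)$, hence $u_\varepsilon,v_\varepsilon<0$ on $\partial\Omega$. It is convenient to use $k_{11}+k_{12}=2$ to rewrite (\ref{s2}) as $\Delta u_\varepsilon=2k_{11}(e^{u_\varepsilon}-1)+2k_{12}(e^{v_\varepsilon}-1)+R_1^\varepsilon$ and $\Delta v_\varepsilon=2k_{12}(e^{u_\varepsilon}-1)+2k_{11}(e^{v_\varepsilon}-1)+R_2^\varepsilon$, with $R_1^\varepsilon,R_2^\varepsilon\ge0$ the regularizing terms from (\ref{reg1}).

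Adding the two equations and using $e^{a}+e^{b}\ge 2e^{(a+b)/2}$ gives
\begin{equation*}
\Delta(u_\varepsilon+v_\varepsilon)=4\big(e^{u_\varepsilon}+e^{v_\varepsilon}-2\big)+R_1^\varepsilon+R_2^\varepsilon\ \ge\ 8\big(e^{(u_\varepsilon+v_\varepsilon)/2}-1\big).
\end{equation*}
At an interior positive maximum of $u_\varepsilon+v_\varepsilon$ the left side would be $\le0$ while the right side would be $>0$, a contradiction; together with $u_\varepsilon+v_\varepsilon<0$ on $\partial\Omega$ this forces $u_\varepsilon+v_\varepsilon\le0$ on $\overline\Omega$, with strict inequality in $\Omega$ by the strong maximum principle.

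Next I would try to upgrade this to $u_\varepsilon<0$ and $v_\varepsilon<0$ separately. Put $M:=\max_{\overline\Omega}\max(u_\varepsilon,v_\varepsilon)$ and suppose $M>0$. Since both functions are negative on $\partial\Omega$, this maximum is attained at an interior point $x_0$; relabelling, assume $u_\varepsilon(x_0)=M$, so $x_0$ is an interior maximum of $u_\varepsilon$ and $\Delta u_\varepsilon(x_0)\le0$, while $v_\varepsilon(x_0)\le u_\varepsilon(x_0)=M$ and, by the sum bound just proved, $v_\varepsilon(x_0)\le -M<0$. Evaluating the $u_\varepsilon$-equation at $x_0$ gives $0\ge 2k_{11}(e^{M}-1)+2k_{12}\big(e^{v_\varepsilon(x_0)}-1\big)+R_1^\varepsilon(x_0)$. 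Here the self-term $2k_{11}(e^{M}-1)$ and $R_1^\varepsilon(x_0)$ are $\ge0$, so everything rides on the cross-term $2k_{12}\big(e^{v_\varepsilon(x_0)}-1\big)$; the goal is to show that the information $v_\varepsilon(x_0)\le-M$ together with the identities of the coupling matrix ($k_{11}+k_{12}=2$), and the fact that $u_\varepsilon$ is large-negative near each $p_j$ so that $x_0$ stays away from those spikes and $R_1^\varepsilon(x_0)$ is harmless, forces $e^{M}\le1$, i.e.\ $M\le0$. Once $M\le0$ is established, the strong maximum principle gives $u_\varepsilon,v_\varepsilon<0$ in $\Omega$, which is the claim ($u,v<-\ln 2$ in the original variables).

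I expect this last step to be the main obstacle, and it is precisely where the indefiniteness of $K$ bites. When $\abs{K}$ is close to $-4$ one has $k_{11}$ close to $0$, so the diagonal term $2k_{11}(e^{M}-1)$ is too weak to absorb the (possibly negative) cross-term by itself, and the coupled system does not admit $(0,0)$ as a comparison supersolution in the usual cooperative-system sense: the matrix governing the linearized difference has determinant proportional to $\abs{K}<0$, hence a negative eigenvalue, so the vector-valued maximum principle is unavailable. The argument must therefore be carried out entirely on the scalar quantities $u_\varepsilon+v_\varepsilon$ and $\max(u_\varepsilon,v_\varepsilon)$, squeezing the contradiction out at the maximum point from the algebraic constraints on $(k_{ij})$; making the constants close there is the heart of the matter. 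The remaining ingredients---regularity, the boundary values, and the sum estimate---are routine.
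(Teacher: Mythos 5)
Your Stage 1 is correct: the inequality $\Delta(u_\varepsilon+v_\varepsilon)\ge 8\bigl(e^{(u_\varepsilon+v_\varepsilon)/2}-1\bigr)$ and the scalar maximum principle do give $u_\varepsilon+v_\varepsilon\le 0$, and essentially this computation appears in the paper (in Section 3.1, when the limit $\varepsilon\to0$ is taken). But Stage 2 is a genuine gap, and you have diagnosed it yourself: the pointwise relation at an interior maximum point $x_0$ of $\max(u_\varepsilon,v_\varepsilon)$,
\[
0\;\ge\;2k_{11}\bigl(e^{M}-1\bigr)+2k_{12}\bigl(e^{v_\varepsilon(x_0)}-1\bigr)+R_1^\varepsilon(x_0),
\]
cannot force $M\le 0$. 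In the admissible case $k_{11}=0$, $k_{12}=2$ (i.e.\ $\abs{K}=-4$) it reduces to $0\ge 4\bigl(e^{v_\varepsilon(x_0)}-1\bigr)+R_1^\varepsilon(x_0)$, which is perfectly consistent with any $M>0$ once $v_\varepsilon(x_0)$ is sufficiently negative and $x_0$ is away from the vortex cores where $R_1^\varepsilon$ is small; the same is true for $k_{11}$ small but positive. Worse, the only extra information you carry from Stage 1, namely $v_\varepsilon(x_0)\le -M$, pushes the cross term further negative, i.e.\ in the wrong direction. So no algebraic squeezing of $(k_{ij})$ at the maximum point can close the argument: the data you retain (the sum bound plus the equation at one point) is compatible with a positive maximum. (A minor slip besides this: in the paper's normalization (\ref{s2'}) the boundary datum is exactly $-\ln 2$, i.e.\ $0$ after the shift, not strictly negative; this is harmless for the argument.)

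The paper's proof avoids the cross term altogether by never subtracting the vacuum. Working with (\ref{s2'}) in the original variables, it uses that in the regime $-4\le\abs{K}<0$ one has $0\le k_{11}<1$ and $k_{12}>1$, so the \emph{entire} nonlinear terms $4k_{11}e^{u_\varepsilon}+4k_{12}e^{v_\varepsilon}$ and $4k_{12}e^{u_\varepsilon}+4k_{11}e^{v_\varepsilon}$ are nonnegative and can simply be discarded. Each equation then yields, separately, the decoupled differential inequality $\Delta u_\varepsilon\ge -4+\sum_j 4\varepsilon/(\varepsilon+\abs{x-p_j}^2)^2$ (and likewise for $v_\varepsilon$); the paper argues that for small $\varepsilon$ the right-hand side is nonnegative, so each component is subharmonic and the scalar maximum principle with boundary value $-\ln2$ gives $u_\varepsilon,v_\varepsilon<-\ln2$ in $\Omega$. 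This is precisely where the hypothesis $-4\le\abs{K}<0$ enters (for $\abs{K}<-4$ one has $k_{11}<0$ and the trick fails, which is why the paper goes no further in that case), and your hardest case $k_{11}\approx 0$ costs nothing, since the discarded term is merely closer to zero but still nonnegative. One caveat: the paper's justification of (\ref{delta4}), that the regularizing sums exceed $4$ throughout $\Omega$, is literally valid only near the vortex points, so the treatment of the constant $-4$ there requires more care than the paper gives it; but the structural idea you are missing is the wholesale dropping of the nonnegative exponential terms, which removes the coupling and with it the indefiniteness obstruction you ran into.
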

\begin{proof}
Let $(u_{\varepsilon},v_{\varepsilon})$ be a solution to the regularized system (\ref{s2}) where $\varepsilon>0$ is a small positive number. We now write (\ref{s2}) in terms of this pair,
\begin{equation}\label{s2'}
\begin{cases}
\Delta u_{\varepsilon}=4k_{11}e^{u_{\varepsilon}}+4k_{12}e^{v_{\varepsilon}}-4+\sum\limits_{j=1}^{N_1}\frac{4\varepsilon}{\left(\varepsilon+\abs{x-p_j}^2\right)^2}&\text{in }\Omega\\
\Delta v_{\varepsilon}=4k_{12}e^{u_{\varepsilon}}+4k_{11}e^{v_{\varepsilon}}-4+\sum\limits_{j=1}^{N_2}\frac{4\varepsilon}{\left(\varepsilon+\abs{x-q_j}^2\right)^2}&\text{in }\Omega\\\
 u_{\varepsilon}=v_{\varepsilon}=-\ln 2\qquad \text{on }\partial\Omega
\end{cases}
\end{equation}
and observe that we can take $\varepsilon$ small enough so that 
\begin{equation}\label{delta4}
\sum\limits_{j=1}^{N_1}\frac{4\varepsilon}{\left(\varepsilon+\abs{x-p_j}^2\right)^2}>4\qquad\text{and}\qquad\sum\limits_{j=1}^{N_2}\frac{4\varepsilon}{\left(\varepsilon+\abs{x-q_j}^2\right)^2}>4.
\end{equation}
  To see this, observe that when $x=p_j$ or $x=q_j$, the fractions in the summation become ${4}/{\varepsilon}$.  So, as long as $\varepsilon<1$, we are guaranteed to satisfy (\ref{delta4}) for any values of $N_1,N_2\geq 1$. We also see that 
\begin{align}
4k_{11}e^{u_{\varepsilon}}+4k_{12}e^{v_{\varepsilon}}\geq0\qquad\text{and}\qquad
4k_{12}e^{u_{\varepsilon}}+4k_{11}e^{v_{\varepsilon}}\geq0.
\end{align}
This allows us to estimate (\ref{s2'}) as
\begin{equation}\label{s2''}
\begin{cases}
\Delta u_{\varepsilon}\geq-4+\sum\limits_{j=1}^{N_1}\frac{4\varepsilon}{\left(\varepsilon+\abs{x-p_j}^2\right)^2}\geq 0&\text{in }\Omega\\
\Delta v_{\varepsilon}\geq -4+\sum\limits_{j=1}^{N_2}\frac{4\varepsilon}{\left(\varepsilon+\abs{x-q_j}^2\right)^2}\geq 0&\text{in }\Omega\\
u_{\varepsilon}=v_{\varepsilon}=-\ln 2\qquad \text{on }\partial\Omega.
\end{cases}
\end{equation}
By the maximum principle, we conclude that
\begin{equation}\label{maxepsilon}
u_{\varepsilon}<-\ln2\qquad\text{and}\qquad v_{\varepsilon}<-\ln 2\qquad \text{in }\Omega.
\end{equation}
\end{proof}

\textbf{Remark:} Up until Proposition \ref{epsilonbound}, the existence theory holds for $\abs{K}<-4$ as well.  As a matter of fact, we see that when $\abs{K}<-4$, we are able to obtain existence of solutions to the regularized system on a bounded domain.  However, it is here that we are unable to say more when $\abs{K}<-4$ as we are not able to obtain any estimates for $u_{\varepsilon}$ or $v_{\varepsilon}$ which will impede our ability to take a limit as $\varepsilon\rightarrow 0$.  Therefore, in the rest of this section, we restrict ourselves to studying only $-4\leq\abs{K}<0$.

\subsection{Return to original system}
In order to establish the solutions to the original system (\ref{maineq}), prior to regularization, we need to take the limit as $\varepsilon\rightarrow 0$ for $u_{\varepsilon}$ and $v_{\varepsilon}$.  To this end, we add the first two equations in (\ref{s2'}) to get,
\begin{equation}\label{s2in}
\Delta(u_{\varepsilon}+v_{\varepsilon})\geq8\left(e^{u_{\varepsilon}}+e^{v_{\varepsilon}}\right)-8+\sum\limits_{j=1}^{N_1}\frac{4\varepsilon}{\left(\varepsilon+\abs{x-p_j}^2\right)^2}+\sum\limits_{j=1}^{N_2}\frac{4\varepsilon}{\left(\varepsilon+\abs{x-q_j}^2\right)^2},
\end{equation}
and recall the basic inequality
\begin{equation*}
a^2+b^2\geq2ab,\text{ for }a,b>0.
\end{equation*}
We let $a=e^{u_{\varepsilon}/2},b=e^{v_{\varepsilon}/2}$ to see that
\begin{equation}\label{ineq1}
e^{u_{\varepsilon}}+e^{v_{\varepsilon}}\geq 2e^{(u_{\varepsilon}+v_{\varepsilon})/2}.
\end{equation}
In view of (\ref{ineq1}), we are able to find the following estimate for (\ref{s2in}),
\begin{equation}
\Delta(u_{\varepsilon}+v_{\varepsilon})\geq16e^{\frac{1}{2}(u_{\varepsilon}+v_{\varepsilon})}-8+\sum\limits_{j=1}^{N_1}\frac{4\varepsilon}{\left(\varepsilon+\abs{x-p_j}^2\right)^2}+\sum\limits_{j=1}^{N_2}\frac{4\varepsilon}{\left(\varepsilon+\abs{x-q_j}^2\right)^2}.
\end{equation}
Now we see that the average of $u_{\varepsilon}$ and $v_{\varepsilon}$ is a subsolution of
\begin{equation}\label{wsys}
\begin{cases}
\Delta w_{\varepsilon}=8e^{w_{\varepsilon}}-4+\sum\limits_{j=1}^{N_1}\frac{2\varepsilon}{\left(\varepsilon+\abs{x-p_j}^2\right)^2}+\sum\limits_{j=1}^{N_2}\frac{2\varepsilon}{\left(\varepsilon+\abs{x-q_j}^2\right)^2}\qquad\text{in }\Omega\\
w_{\varepsilon}=-\ln2\qquad\text{on }\partial\Omega.
\end{cases}
\end{equation}

Again, by the maximum principle, we see that $w_{\varepsilon}<-\ln2$ in $\Omega$.  Furthermore, since $\frac{1}{2}(u_{\varepsilon}+v_{\varepsilon})$ is a subsolution of (\ref{wsys}), we can see that $w_{\varepsilon}^+=-\ln2$ is a supersolution.  Therefore, by the method of super and subsolutions, we can create a monotone increasing iterative scheme from $\frac{1}{2}(u_{\varepsilon}+v_{\varepsilon})$ to get a solution of (\ref{wsys}).  We note here that,
\begin{equation}\label{wineq}
w_{\varepsilon}\geq\frac{1}{2}(u_{\varepsilon}+v_{\varepsilon}) \qquad\text{in }\Omega,
\end{equation}
and now let
\begin{equation}\label{wzero}
w_0^{\varepsilon}=u_0^{\varepsilon}+v_0^{\varepsilon}+W_0^{\varepsilon},
\end{equation}
where we have defined $u_0^{\varepsilon}$ and $v_0^{\varepsilon}$ in (\ref{backuv}).  As in (\ref{2harm}), with $U_0^{\varepsilon}$ and $V_0^{\varepsilon}$, we let $W_0^{\varepsilon}$ be a harmonic function chosen so that $w_0^{\varepsilon}=-\ln 2$ on $\partial\Omega$.  We will use this construction to show that $\left\{u_{\varepsilon}\right\}$ and $\left\{v_{\varepsilon}\right\}$ are uniformly bounded in $L^2(\Omega)$.  We now let
\begin{equation}\label{wsub}
w_{\varepsilon}=w_0^{\varepsilon}+\tilde{w}_{\varepsilon}.
\end{equation}
So, in view of (\ref{wsub}), we are able to rewrite equation (\ref{wsys}) as,
\begin{equation}\label{wsys2a}
\begin{cases}
\Delta \tilde{w}_{\varepsilon}=8e^{w_0^{\varepsilon}+\tilde{w}_{\varepsilon}}-4+\frac{1}{2}\left(h_1+h_2\right)\qquad\text{in }\Omega\\
\tilde{w}_{\varepsilon}=0\qquad\text{on }\partial\Omega.
\end{cases}
\end{equation}

Since $w_{\varepsilon}<0$, we see that $w_0^{\varepsilon}+\tilde{w}_{\varepsilon}<0$.  Now we multiply the equation in (\ref{wsys2a}) by $\tilde{w}_{\varepsilon}$ and integrate both sides over $\Omega$ to obtain
\begin{equation}\label{wint}
-\int\displaylimits_{\Omega}\abs{\nabla \tilde{w}_{\varepsilon}}^2dx=\int\displaylimits_{\Omega}8e^{w_0^{\varepsilon}+\tilde{w}_{\varepsilon}}\tilde{w}_{\varepsilon}-4\tilde{w}_{\varepsilon}+\frac{1}{2}\left(h_1+h_2\right)\tilde{w}_{\varepsilon}dx.
\end{equation}
We rewrite (\ref{wint}) as
\begin{equation}\label{wtild1}
\|\nabla\tilde{w}_{\varepsilon}\|^2_{L^2(\Omega)}=-8\int\displaylimits_{\Omega}e^{w_0^{\varepsilon}+\tilde{w}_{\varepsilon}}\tilde{w}_{\varepsilon}dx+4\int\displaylimits_{\Omega}\tilde{w}_{\varepsilon}dx-\frac{1}{2}\int\displaylimits_{\Omega}(h_1+h_2)\tilde{w}_{\varepsilon}dx.
\end{equation}
We will use Cauchy's inequality with $\epsilon>0$ as well as the Poincar\'e inequality in (\ref{wtild1}) to obtain,
\begin{equation}\label{wtild}
\|\nabla\tilde{w}_{\varepsilon}\|^2_{L^2(\Omega)}\leq-8\int\displaylimits_{\Omega}e^{w_0^{\varepsilon}+\tilde{w}_{\varepsilon}}\tilde{w}_{\varepsilon}dx+C_1\epsilon^{-1}+{C\epsilon}\|\nabla\tilde{w}_{\varepsilon}\|_{L^2(\Omega)}^2.
\end{equation}

In order to find an upper bound for the first term on the right hand side of (\ref{wtild}), we recall that $w_0^{\varepsilon}+\tilde{w}_{\varepsilon}<0$.  Therefore, using Cauchy's inequality and the Poincar\'e inequality we obtain,
\begin{equation}
\abs{\int\displaylimits_{\Omega}e^{w_0^{\varepsilon}+\tilde{w}_{\varepsilon}}\tilde{w}_{\varepsilon}dx}\leq\int\displaylimits_{\Omega}\abs{\tilde{w}_{\varepsilon}}dx\leq4\abs{\Omega}+\frac{1}{16}\|\tilde{w}_{\varepsilon}\|_{L^2(\Omega)}^2\leq C_2+\frac{1}{2}C\|\nabla\tilde{w}_{\varepsilon}\|_{L^2(\Omega)}^2.
\end{equation}

Using this bound, and inserting into (\ref{wtild}), we see that we have obtained the bound for $\nabla \tilde{w}_{\varepsilon}$ in $L^2(\Omega)$. Moreover, by the Poincar\'e inequality, we have obtained the bound for $\tilde{w}_{\varepsilon}$.  We note that this bound is independent of $\varepsilon>0$.  Furthermore, by (\ref{maxepsilon}), (\ref{wineq}), (\ref{wzero}), and (\ref{wtild}) we see that $\left\{u_{\varepsilon}\right\}$ and $\left\{v_{\varepsilon}\right\}$ are bounded uniformly.

We now use the interior elliptic estimates described in \cite{gilbarg2015elliptic} to assume that there are functions $u$ and $v$
\begin{equation*}
u,v\in C^0\left(\Omega \setminus \left\{p_1,\ldots,p_{N_1},q_1,\ldots,q_{N_2}\right\}\bigcap L^2(\Omega)\right),
\end{equation*}
such that for any compact subset $E\subset\Omega \setminus \left\{p_1,\ldots,p_{N_1},q_1,\ldots,q_{N_2}\right\}$, we have
\begin{equation}\label{convc}
(u_{\varepsilon},v_{\varepsilon})\rightarrow(u,v)\text{in }C^0(E)\text{ as }\varepsilon\rightarrow 0
\end{equation}
and
\begin{equation}\label{convl}
(u_{\varepsilon},v_{\varepsilon})\rightharpoonup(u,v)\text{ weakly in }L^2(\Omega)\text{ as }\varepsilon\rightarrow 0.
\end{equation}
We then use the Green's function to represent the equations in the regularized form (\ref{s2}) with $u=u_{\varepsilon}$ and $v=v_{\varepsilon}$.  We apply the convergence derived in (\ref{convc}) and (\ref{convl}) to see that as $\varepsilon\rightarrow0$, the pair $(u,v)$ satisfies the original equations (\ref{s1}).  In the next section, we take the limit of the bounded domain, $\Omega$, to all of $\mathbb{R}^2$.


\subsection{Single equation limit as $\Omega\rightarrow\mathbb{R}^2$}
In this section, we first consider the single vortex equation subject to the boundary condition,
\begin{equation}
\begin{cases}
\Delta u=8(e^u-1)+4\pi\sum\limits\limits_{j=1}^{N_+}\delta_{\hat{p}_j}\qquad\text{in }B_R\\
u=0\qquad\text{on }\partial B_R. 
\end{cases}
\end{equation}
The ball, $B_R$, is chosen so that it contains all of the vortex points $\hat{p}_j=p_1,\ldots,p_{N_1},q_1,\ldots,q_{N_2}$ and the total vortex number is given by $N_+=N_1+N_2$.  We denote this by
\begin{equation*}
B_R=\left\{x\in\mathbb{R}^2:\abs{x}<R\right\}\quad\text{ and }\quad R>R_0=\max_{1\leq j\leq N_+}\left\{\abs{\hat{p}_j}\right\}.
\end{equation*}
We recall some well known results \cite{lin2007system} related to the equation,
\begin{equation}\label{cara1}
\begin{cases}
-\Delta u+g(x,u)=\mu&\text{in }\Omega\\
u=h\qquad\text{on }\partial\Omega
\end{cases}
\end{equation}
where $\Omega\subset\mathbb{R}^n,n\geq 2$, is a smooth bounded domain, $\mu\in\mathcal{M}(\Omega)$ and $g:\Omega\times\mathbb{R}\rightarrow\mathbb{R}$ is a Carath\'eodory function.  Also, $h\in L^1(\partial\Omega)$.  We recall that $\mathcal{M}$ is the space of finite Radon measures on on open sets $\omega\subset\mathbb{R}^2$.  We equip $\mathcal{M}(\omega)$ with the standard norm,
\begin{equation*}
\|\mu\|_{\mathcal{M}}=\abs{\mu}(\omega)=\int\displaylimits_{\omega}d\abs{\mu}.
\end{equation*}

In the work that follows, we require the use of three necessary results from appendix A of \cite{lin2007system}.  We state them here.  The reader is recommended to see \cite{lin2007system} for proofs of the following.
\begin{thm}\label{a11}
Let $u_1,u_2\in L^1(\Omega)$ be a sub and supersolution of (\ref{cara1}) such that $u_1\leq u_2$ a.e. and $\left[g(\cdot,v)]\cdot\text{dist}(x,\partial\Omega)\right]\in L^1(\Omega)$ for every $v\in L^1(\Omega)$ such that $u_1\leq v\leq u_2$ a.e.  Then (\ref{cara1}) has a solution $u$ such that $u_1\leq u\leq u_2$ a.e..
\end{thm}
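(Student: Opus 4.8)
\emph{Proof proposal (sketch of the argument one would give; \cite{lin2007system} is the standard reference).} The plan is to run the classical sub–supersolution method, but carried out at the level of very weak (Stampacchia) solutions so that the measure datum $\mu$ and the merely weighted–$L^1$ behaviour of $g$ are admissible. First I would replace $g$ by the truncated nonlinearity
\[
\tilde g(x,s)=g\bigl(x,\max\{u_1(x),\min\{s,u_2(x)\}\}\bigr),
\]
which is again Carath\'eodory and now obeys the pointwise bound $\abs{\tilde g(x,s)}\le G_0(x):=\max_{u_1(x)\le t\le u_2(x)}\abs{g(x,t)}$. A measurable–selection argument, together with the hypothesis that $[g(\cdot,v)\,\mathrm{dist}(x,\partial\Omega)]\in L^1(\Omega)$ for every $v$ with $u_1\le v\le u_2$, yields $G_0\,\mathrm{dist}(x,\partial\Omega)\in L^1(\Omega)$; hence $\mu-\tilde g(\cdot,v)\,dx$ is an admissible right–hand side for the linear Dirichlet problem with measure data for every $v\in L^1(\Omega)$, and crucially with a bound that is \emph{uniform in $v$}.

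Second, I would invoke the linear theory (Stampacchia duality / Boccardo--Gallou\"et estimates) to define the solution operator $T\colon L^1(\Omega)\to L^1(\Omega)$, $v\mapsto w$, where $w$ is the unique very weak solution of $-\Delta w=\mu-\tilde g(x,v)$ in $\Omega$, $w=h$ on $\partial\Omega$. The a priori bounds give $\|w\|_{L^1(\Omega)}+\|w\|_{W^{1,q}_0(\Omega)}\le C\bigl(\|\mu\|_{\mathcal M}+\|G_0\,\mathrm{dist}(x,\partial\Omega)\|_{L^1(\Omega)}+\|h\|_{L^1(\partial\Omega)}\bigr)$ for $q<n/(n-1)$, with $C$ independent of $v$ because $\tilde g$ is dominated by $G_0$. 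Thus $T$ maps a fixed closed ball of $L^1(\Omega)$ into itself; the $W^{1,q}_0$ estimate plus the compact Sobolev embedding $W^{1,q}_0(\Omega)\hookrightarrow L^1(\Omega)$ make $T$ compact, while the $L^1$–stability of the measure–data solution map combined with a generalized dominated convergence argument (using $\abs{\tilde g(\cdot,v_n)}\le G_0$ and $v_n\to v$ a.e.\ along a subsequence) makes $T$ continuous. Schauder's fixed point theorem then produces $w=T(w)$, i.e.\ $-\Delta w=\mu-\tilde g(x,w)$ in the very weak sense.

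Third, I would show $u_1\le w\le u_2$ a.e., which forces $\tilde g(x,w)=g(x,w)$ and hence makes $w$ a solution of (\ref{cara1}) with $u_1\le w\le u_2$. Consider $(w-u_2)^+$: on the set $\{w>u_2\}$ the truncation gives $\tilde g(x,w)=g(x,u_2)$, while the supersolution property of $u_2$ reads $-\Delta u_2-(\mu-g(x,u_2)\,dx)=\sigma\ge 0$ as a measure. A short computation gives $-\Delta(w-u_2)=(g(x,u_2)-\tilde g(x,w))\,dx-\sigma$, so that $\chi_{\{w>u_2\}}\bigl(-\Delta(w-u_2)\bigr)=-\chi_{\{w>u_2\}}\sigma\le 0$; by Kato's inequality for functions whose distributional Laplacian is a measure, $-\Delta(w-u_2)^+\le\chi_{\{w>u_2\}}\bigl(-\Delta(w-u_2)\bigr)\le 0$, and $(w-u_2)^+$ vanishes on $\partial\Omega$ since $w=h\le u_2$ there. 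The maximum principle at the very weak level then forces $(w-u_2)^+\equiv 0$, and the symmetric computation with the subsolution $u_1$ gives $(u_1-w)^+\equiv 0$. I expect the genuine difficulty to be entirely technical: setting up the Stampacchia/very weak framework with measure data so that existence, uniqueness, $L^1$–stability, Kato's inequality, and the comparison principle are all simultaneously available at this low regularity — precisely the content of appendix~A of \cite{lin2007system}. Once those tools are granted, the truncation-plus-Schauder scheme above closes routinely.
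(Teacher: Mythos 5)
The paper does not prove Theorem \ref{a11} at all: it is imported verbatim from Appendix A of \cite{lin2007system} (whose argument goes back to the Montenegro--Ponce sub--supersolution method for measure data), and the reader is referred there for the proof. Your proposal reconstructs essentially that standard argument --- truncate $g$ between $u_1$ and $u_2$, solve the truncated problem by a Schauder fixed point in $L^1(\Omega)$ using the linear theory for very weak solutions, then use Kato's inequality (in its measure version) plus the maximum principle to show the fixed point stays between $u_1$ and $u_2$, so the truncation is inactive --- so in substance you are giving the proof the paper chose to cite rather than reproduce. One technical claim in your second step is not correct as written: since the hypothesis only gives $g(\cdot,v)\,\mathrm{dist}(x,\partial\Omega)\in L^1(\Omega)$ (so $G_0$ need not be in $L^1(\Omega)$) and the boundary datum $h\in L^1(\partial\Omega)$ is nonzero, the solution $w$ is neither in $W^{1,q}_0(\Omega)$ nor does a global Boccardo--Gallou\"et gradient estimate in terms of $\|G_0\,\mathrm{dist}(\cdot,\partial\Omega)\|_{L^1(\Omega)}$ hold; compactness of the solution map must instead be obtained from the weighted linear estimate $\|w\|_{L^1(\Omega)}\leq C\left(\|\mu\|_{\mathcal{M}}+\|G_0\,\mathrm{dist}(\cdot,\partial\Omega)\|_{L^1(\Omega)}+\|h\|_{L^1(\partial\Omega)}\right)$ together with interior $W^{1,q}$ estimates on compact subsets and equi-integrability near $\partial\Omega$ coming from the Green-function bound $\int_{\{\mathrm{dist}(x,\partial\Omega)<\epsilon\}}G(x,y)\,dx\leq C\epsilon\,$ (this is exactly how the cited framework handles it). With that repair, and keeping the Brezis--Ponce caveat that Kato's inequality for measures only controls the diffuse part (harmless here because the discarded term $-\sigma$ has a sign), your scheme closes and coincides with the proof in the reference.
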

\begin{prop}\label{a3}
Suppose that $g(x,u)$ is a Carath\'eodory function that satisfies
\begin{enumerate}
\item $g(x,\cdot)$ is nondecreasing for a.e. $x\in\Omega$;
\item $g(x,0)=0$ for a.e. $x\in\Omega$.
\end{enumerate}
Let $u$ be a solution of (\ref{cara1}).  Then,
\begin{equation}
\int\displaylimits_{\Omega}\abs{g(x,u)}dx\leq\|\mu\|_{\mathcal{M}}\qquad\text{ and}\qquad \int\displaylimits_{\Omega}\abs{\Delta u}\leq 2\|\mu\|_{\mathcal{M}}
\end{equation}
In particular, $g(\cdot,u)\in L^1(\Omega)$ and $\Delta u\in\mathcal{M}(\Omega)$.
\end{prop}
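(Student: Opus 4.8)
The plan is to obtain the first estimate by a Kato-type (truncated sign) test and then to read off the second directly from the equation. Throughout I take the boundary datum to be $h\equiv 0$, which is the case relevant to \eqref{cara1} in this paper (for the single vortex equation above one has $g(x,u)=8(e^u-1)$ and $\mu=-4\pi\sum_j\delta_{\hat p_j}$, so that $\|\mu\|_{\mathcal M}=4\pi N_+$), and which is also what makes the stated constants sharp. The structural input is that hypotheses (1)--(2) force $g(x,s)$ to have the same sign as $s$ for a.e.\ $x$: if $s>0$ then $g(x,s)\ge g(x,0)=0$, and if $s<0$ then $g(x,s)\le 0$. Hence $\abs{g(x,u)}=g(x,u)\,\mathrm{sign}(u)$ a.e.\ in $\Omega$, so it is enough to bound $\int_\Omega g(x,u)\,\mathrm{sign}(u)\,dx$ by $\|\mu\|_{\mathcal M}$.

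First I would reduce to smooth data. Mollify $\mu$ to $\mu_n=\rho_n\ast\mu$ with $\|\mu_n\|_{L^1(\Omega)}\le\|\mu\|_{\mathcal M}$, and truncate $g$ to $g_n(x,s)=\max\{-n,\min\{n,g(x,s)\}\}$, which is again Carath\'eodory, nondecreasing in $s$, and zero at $s=0$, with $\abs{g_n(x,s)}\uparrow\abs{g(x,s)}$. Standard elliptic theory then produces classical solutions $u_n$ of $-\Delta u_n+g_n(x,u_n)=\mu_n$ in $\Omega$, $u_n=0$ on $\partial\Omega$, and the $L^1$-stability and uniqueness theory for problems of the form \eqref{cara1} (in the spirit of Theorem \ref{a11}; see \cite{lin2007system}) gives $u_n\to u$ in $L^1(\Omega)$ and, along a subsequence, a.e.\ in $\Omega$. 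For fixed $n$, pick an odd nondecreasing $S_\delta\in C^1(\RR)$ with $\abs{S_\delta}\le 1$, $S_\delta'\ge 0$, and $S_\delta\to\mathrm{sign}$ pointwise as $\delta\to 0^+$. Multiplying the equation for $u_n$ by $S_\delta(u_n)$, integrating over $\Omega$, and integrating by parts (the boundary term vanishing because $S_\delta(0)=0$) gives
\[
\int_\Omega S_\delta'(u_n)\,\abs{\nabla u_n}^2\,dx+\int_\Omega g_n(x,u_n)\,S_\delta(u_n)\,dx=\int_\Omega S_\delta(u_n)\,d\mu_n.
\]
Dropping the nonnegative first term and using $\abs{S_\delta}\le 1$,
\[
\int_\Omega g_n(x,u_n)\,S_\delta(u_n)\,dx\le\int_\Omega S_\delta(u_n)\,d\mu_n\le\|\mu_n\|_{L^1(\Omega)}\le\|\mu\|_{\mathcal M}.
\]
The integrand on the left is nonnegative, since $g_n(x,u_n)$ and $S_\delta(u_n)$ share the sign of $u_n$, and it converges a.e.\ to $\abs{g_n(x,u_n)}$ as $\delta\to 0$; Fatou's lemma then yields $\int_\Omega\abs{g_n(x,u_n)}\,dx\le\|\mu\|_{\mathcal M}$. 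Letting $n\to\infty$ along the a.e.-convergent subsequence, and using $g_n(x,u_n)\to g(x,u)$ a.e.\ (continuity of $g(x,\cdot)$ together with $g_n\to g$), a second application of Fatou's lemma gives $\int_\Omega\abs{g(x,u)}\,dx\le\|\mu\|_{\mathcal M}$, the first claimed bound; in particular $g(\cdot,u)\in L^1(\Omega)$.

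For the second estimate, rewrite the equation as the distributional identity $\Delta u=g(x,u)-\mu$ on $\Omega$. Since $g(\cdot,u)\,dx$ and $\mu$ are both finite measures, so is $\Delta u$, with $\abs{\Delta u}\le\abs{g(\cdot,u)}\,dx+\abs{\mu}$ as measures; integrating and invoking the first bound gives $\int_\Omega\abs{\Delta u}\le\int_\Omega\abs{g(x,u)}\,dx+\|\mu\|_{\mathcal M}\le 2\|\mu\|_{\mathcal M}$, and hence $\Delta u\in\mathcal M(\Omega)$. I expect the main obstacle to be technical rather than conceptual: because $u$ is merely $L^1$ and $\mu$ merely a measure, one cannot test the limiting equation against $S_\delta(u)$ directly, so the computation must be performed on the regularized problems and then passed to the limit, which in turn relies on the $L^1$-approximation and uniqueness machinery for \eqref{cara1} supplied by \cite{lin2007system}. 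Once that is granted, the remaining steps are the routine manipulations above.
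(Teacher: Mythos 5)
First, a caveat about the comparison itself: the paper does not prove Proposition \ref{a3} at all — it is quoted verbatim from appendix A of \cite{lin2007system} with an explicit pointer to that reference for the proof — so your proposal can only be judged on its own terms and against the literature argument. Your formal core is the right one and matches the standard computation: the monotonicity and normalization force $g(x,s)$ to share the sign of $s$, testing against an odd approximation $S_\delta$ of the sign function kills the gradient term by positivity and yields $\int_\Omega\abs{g(x,u)}\,dx\leq\|\mu\|_{\mathcal M}$, and the second bound is then just the triangle inequality applied to $\Delta u=g(\cdot,u)-\mu$ in $\mathcal M(\Omega)$. Your restriction to $h\equiv 0$ is also sensible: it is the case actually used in the paper, and the bound as stated fails for general $h\in L^1(\partial\Omega)$ (take $g(x,s)=s$, $\mu=0$, $h\equiv 1$: then $u>0$ and $\int_\Omega\abs{g(x,u)}>0=\|\mu\|_{\mathcal M}$), so you should say explicitly that you are proving the (correct) $h=0$ form of the statement.

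The genuine gap is the reduction step, i.e. the sentence asserting that the regularized solutions $u_n$ (truncated $g_n$, mollified $\mu_n$) converge in $L^1$, hence a.e., to the given solution $u$ ``by the $L^1$-stability and uniqueness theory \ldots in the spirit of Theorem \ref{a11}.'' Theorem \ref{a11} is a sub/supersolution existence result and gives no stability, and the needed stability is precisely the delicate point for measure-data semilinear problems: mollification gives only weak-$*$ convergence $\mu_n\rightharpoonup\mu$ (for a Dirac mass $\|\mu_n-\mu\|_{\mathcal M}$ does not tend to $0$), the $L^1$-contraction estimate therefore gives nothing, and for exponential-type nonlinearities — the case actually relevant here, $g(s)=8(e^s-1)$ — solutions are in general \emph{not} stable under such approximations (this is the reduced-measure/concentration phenomenon behind Vazquez's $4\pi$ threshold; the approximate solutions may converge to the solution of a problem with a different measure, and your uniform $L^1$ bound on $g_n(x,u_n)$ only yields weak-$*$ convergence to a measure, not $L^1$ convergence to $g(x,u)$). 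One can rescue the scheme by invoking the good-measure/reduced-measure theory together with uniqueness for monotone $g$, but that machinery is of the same depth as the proposition being proved and lives in the same circle of references, so as written the argument is close to assuming what it must establish. The cleaner route — and the one taken in the cited literature — avoids approximating the problem altogether: a solution of (\ref{cara1}) by definition has $g(\cdot,u)\in L^1(\Omega)$ (so $\Delta u=g(\cdot,u)-\mu\in\mathcal M(\Omega)$ from the start), and the sign inequality is derived for $u$ itself, either by testing the very weak formulation against suitable admissible test functions or via Kato's inequality up to the boundary, after which your Fatou and triangle-inequality steps go through unchanged.
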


We now state our first result of this section, which is motivated by method of super and subsolutions to the simplified equation.
\begin{prop}
The equation,
\begin{equation}\label{b4}
\begin{cases}
-\Delta u+8(e^u-1)=-4\pi\sum\limits_{j=1}^{N_+}\delta_{\hat{p}_j}\qquad\text{in }B_R\\
u=0\qquad\text{on }\partial B_R.
\end{cases}
\end{equation}
has a solution $u\in L^1(\Omega).$
\end{prop}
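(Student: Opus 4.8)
The plan is to solve \eqref{b4} by the method of sub- and supersolutions, invoking Theorem \ref{a11}. The natural supersolution is $\overline{u}=0$: since the Dirac masses appear with negative sign on the right-hand side and $8(e^{0}-1)=0$, we have $-\Delta \overline{u}+8(e^{\overline{u}}-1)=0 \ge -4\pi\sum_{j}\delta_{\hat p_j}$ in the sense of distributions, with the correct boundary data $\overline{u}=0$ on $\partial B_R$. The task is therefore to produce a subsolution $\underline{u}\le 0$ lying below $0$, with the right logarithmic singularities at the vortex points, and then check the integrability hypothesis of Theorem \ref{a11}.

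To build the subsolution, I would first absorb the Dirac masses into an explicit background. Let $u_0(x)=\sum_{j=1}^{N_+}\ln\frac{|x-\hat p_j|^2}{1+|x-\hat p_j|^2}$ (as in \eqref{backuv} with $\varepsilon$ effectively replaced by $1$ near each point), so that $\Delta u_0 = 4\pi\sum_j\delta_{\hat p_j} - h$ for some $h\in L^\infty(B_R)\subset W^{1,2}(B_R)$ that is smooth on $B_R$; note $u_0\le 0$ on $\overline{B_R}$. Writing $u=u_0+w$, equation \eqref{b4} becomes $-\Delta w + 8(e^{u_0+w}-1) = -h + (\text{lower order})$, a regular semilinear equation with no measure data. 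Now choose $\underline{w}$ to be a constant $-M$ with $M>0$ large: since $u_0\le 0$, we get $8(e^{u_0-M}-1)\le 8(e^{-M}-1)<0$, so $-\Delta(-M)+8(e^{u_0-M}-1)<0$, and one checks this is $\le -h+\dots$ once $M$ is large enough relative to $\|h\|_{L^\infty}$; alternatively, solve $-\Delta \underline{w}= -\|h\|_{L^\infty}-8$ in $B_R$ with zero boundary data, giving a bounded $\underline{w}\le 0$ that is easily seen to be a subsolution of the $w$-equation. Either way, $\underline{u}:=u_0+\underline{w}\in L^1(B_R)$, $\underline{u}\le 0=\overline{u}$, and $\underline{u}$ has the required boundary behavior (it is $\le 0$ on $\partial B_R$, and one adjusts by a harmonic corrector as in \eqref{2harm} to get exactly $0$ if one wants the ordered pair to straddle the boundary data — in fact for Theorem \ref{a11} it suffices that $\underline u \le h \le \overline u$ on $\partial\Omega$ in the appropriate trace sense, which holds here).

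With $\underline{u}\le \overline{u}=0$ in hand, I would verify the hypothesis of Theorem \ref{a11}: for $g(x,s)=8(e^{s}-1)$ (which is a Carathéodory function, nondecreasing in $s$, with $g(x,0)=0$), and for any $v$ with $\underline{u}\le v\le 0$ a.e., we have $|g(x,v)|=8(1-e^{v})\le 8(1-e^{\underline{u}})\le 8$, so $g(\cdot,v)$ is bounded, hence $g(\cdot,v)\cdot\mathrm{dist}(x,\partial B_R)\in L^1(B_R)$ trivially. Theorem \ref{a11} then yields a solution $u$ with $\underline{u}\le u\le 0$ a.e., and since $\underline u\in L^1(B_R)$ we conclude $u\in L^1(B_R)$. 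Finally, Proposition \ref{a3} applies to this $g$ and gives the a priori bounds $\int_{B_R}|g(x,u)|\,dx\le \|\mu\|_{\mathcal M}=4\pi N_+$ and $\int_{B_R}|\Delta u|\le 8\pi N_+$, confirming $\Delta u\in\mathcal M(B_R)$; this will be needed later when passing to the limit $R\to\infty$.

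The main obstacle is not the existence step itself — it is essentially a clean application of Theorem \ref{a11} — but rather the careful bookkeeping around the measure data: one must make sure the subsolution genuinely captures the $-4\pi\sum\delta_{\hat p_j}$ singularities (so the chosen $u_0$ must be exactly right near each $\hat p_j$) while still being integrable and ordered below the trivial supersolution, and one must confirm that $\left[g(\cdot,v)\,\mathrm{dist}(x,\partial B_R)\right]\in L^1$ along the whole order interval rather than just at the endpoints. Once the background $u_0$ is pinned down, everything else is routine.
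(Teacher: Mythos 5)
Your argument is correct and rests on the same pillar as the paper's: produce an ordered sub/supersolution pair and invoke Theorem \ref{a11}, with Proposition \ref{a3} supplying the measure bounds used later. The difference is only in how the barriers are built. The paper gets its supersolution from Vazquez's existence theorem applied with data $\mu^{+}$ (which is in fact $0$ here, since $\mu=-4\pi\sum_{j}\delta_{\hat p_j}\le 0$) and its subsolution by solving the linear measure-data problem $-\Delta v=-\mu^{-}$ in $B_R$ with zero boundary values; you instead note directly that $0$ is a supersolution because the measure is nonpositive and $g(0)=0$, and you build the subsolution by hand from the explicit background $u_0$ (so that $-\Delta u_0=\mu+h$ with $h$ bounded and smooth) plus a bounded corrector. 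This is more elementary and avoids the appeal to Vazquez, at the price of checking the distributional inequality for $u_0+\underline{w}$ yourself, which you do. One caveat: your first choice of corrector, the constant $\underline{w}\equiv-M$, does not work in general, because $8(e^{u_0-M}-1)\ge-8$ for every $M$, so the needed inequality $8(e^{u_0-M}-1)\le-h$ fails wherever $h(x)\ge 8$; since $\|h\|_{L^\infty}$ can be as large as $4N_+$, this occurs as soon as vortex points coincide or cluster. Your alternative, solving $-\Delta\underline{w}=-\|h\|_{L^\infty}-8$ with zero boundary data, is the one to keep: it gives a bounded $\underline{w}\le 0$ and the subsolution inequality holds unconditionally, the hypothesis of Theorem \ref{a11} is verified on the order interval exactly as you state (there $\abs{g(\cdot,v)}\le 8$), and the conclusion $u\in L^1(B_R)$ follows as in the paper.
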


\begin{proof}

First, we set $g(x,t)=8(e^t-1)$ and then write 
\begin{equation*}
\mu=-4\pi\sum\limits_{j=1}^{N_+}\delta_{p_j},
\end{equation*} 
where $\mu$ is a finite Radon measure on $\Omega$.
Here, we make use of both Theorem 2 and Prop A.1 in Vazquez \cite{vazquez1983semilinear}, where the equation 
\begin{equation*}
-\Delta u+\beta(u)=f
\end{equation*}
was studied in detail over $\mathbb{R}^2$ with $f$ being a bounded Radon measure, and $\beta$ being a continuous increasing real function with $\beta(0)=0$.  We then find, with data $\mu^{+}$, a supersolution, $U\in L^1(\mathbb{R}^2)$ such that $U\geq 0$ a.e.. 

Furthermore, this $U$ satisfies $e^U-1\in L^1(\Omega).$  As in appendix B of \cite{lin2007system}, we let $v\in L^1(\Omega)$ be the solution of
\begin{equation}
\begin{cases}
-\Delta v=-\mu^-\qquad\text{in }\Omega\\
v=0\qquad\text{on }\partial\Omega.
\end{cases}
\end{equation}

Then $U$ and $v$ are super and subsolutions of (\ref{b4}) respectively.  By Theorem \ref{a11} we see that (\ref{b4}) has a solution $u\in L^1(\Omega)$.


\end{proof}

The next two results are special cases of the work done by Lin, Yang, and Ponce \cite{lin2007system}.  The first result establishes the radial behavior of solutions to (\ref{b4}). 
\begin{lem}\label{51}
Every solution of (\ref{b4}) increases along any radial direction on $\mathbb{R}^2\setminus B_{R_0}$.
\end{lem}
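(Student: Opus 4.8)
The plan is to differentiate the equation in the radial direction and run a maximum principle on the annulus $A:=B_R\setminus\overline{B_{R_0}}$, which by the choice of $R_0$ contains none of the vortices $\hat p_j$; there the Dirac terms drop out and $u$ solves $\Delta u=8(e^{u}-1)$ classically. First I would note that $u<0$ in $B_R$: the constant $0$ is a supersolution of (\ref{b4}) because $-\Delta 0+8(e^{0}-1)=0\ge-4\pi\sum_j\delta_{\hat p_j}$, so the comparison principle for the monotone operator $u\mapsto-\Delta u+8(e^{u}-1)$ forces $u\le 0$, and the strong maximum principle together with the equation excludes $u=0$ at an interior point (we may assume $N_+\ge 1$, otherwise the statement is vacuous). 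In particular $e^{u}-1\le 0$, so $u$ is superharmonic on $A$, and the assertion concerns precisely the part of $B_R$ lying outside $B_{R_0}$.

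Next I would set $v:=x\cdot\nabla u=\abs{x}\,\partial_r u$; the lemma is exactly the assertion $v\ge 0$ on $A$. Applying $x\cdot\nabla$ to $\Delta u=8(e^{u}-1)$ and using $\Delta(x\cdot\nabla u)=x\cdot\nabla(\Delta u)+2\Delta u$ yields
\[
\Delta v-8e^{u}v=16(e^{u}-1)\le 0\qquad\text{in }A,
\]
so $v$ is a supersolution of the linear operator $L:=\Delta-8e^{u}$, whose zeroth order coefficient $-8e^{u}$ is negative. By the weak maximum principle for $L$ it then suffices to prove $v\ge 0$ on $\partial A=\partial B_R\cup\partial B_{R_0}$. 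On the outer circle this is immediate from Hopf's boundary point lemma: $u<0=u|_{\partial B_R}$ and $\Delta u\le 0$ near $\partial B_R$, hence $\partial_r u>0$, i.e.\ $v>0$, on $\partial B_R$.

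The main obstacle is the inner circle $\partial B_{R_0}$, where the boundary condition says nothing about the sign of $\partial_r u$ and which moreover may carry a vortex, namely one realizing $R_0=\max_j\abs{\hat p_j}$. Two observations reduce this to a single genuinely delicate point. First, the local structure at the vortices: since $e^{u}-1$ is bounded, we may write $u=\sum_j 2\log\abs{x-\hat p_j}+w$ with $w\in C^{1,\alpha}_{\mathrm{loc}}(B_R)$, so $x\cdot\nabla w$ is bounded, while $x\cdot\nabla\!\left(2\log\abs{x-\hat p_j}\right)=2\,x\cdot(x-\hat p_j)/\abs{x-\hat p_j}^{2}\ge 0$ on $A$ because $x\cdot(x-\hat p_j)=\abs{x}^{2}-x\cdot\hat p_j\ge\abs{x}\bigl(\abs{x}-\abs{\hat p_j}\bigr)\ge 0$ there; hence $v$ is bounded below on $A$, and the isolated boundary singularity that a vortex on $\partial B_{R_0}$ creates is removable for the maximum principle. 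Second, a connectedness argument: any connected component of $\{v<0\}\cap A$ whose closure meets only $A$ and $\partial B_R$ (where $v>0$) carries $v\equiv 0$ on its boundary and is therefore empty by the maximum principle for $L$. Together these leave only the exclusion of $\partial_r u<0$ at the regular points of $\partial B_{R_0}$ — the technical heart — which I would settle by the comparison/moving-sphere argument of Lin--Ponce--Yang \cite{lin2007system} (a convenient reformulation being to carry out the whole scheme on $B_R\setminus\overline{B_{\rho}}$ for $\rho>R_0$ and then let $\rho\downarrow R_0$, using continuity of $\nabla u$ away from the vortices).

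With $v\ge 0$ on $\partial A$ in hand, the weak maximum principle for $L$ gives $v\ge 0$ throughout $A$, that is $\partial_r u\ge 0$ for $R_0<\abs{x}<R$, which is precisely the radial monotonicity asserted. I expect every step other than the regular-point estimate on $\partial B_{R_0}$ to be routine, and that estimate to be the only real difficulty.
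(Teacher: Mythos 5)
Your reduction is set up correctly as far as it goes: with $v=x\cdot\nabla u$ one indeed gets $\Delta v-8e^{u}v=16(e^{u}-1)\le 0$ in the annulus, the outer circle is harmless (there $v\ge 0$ follows already from $u\le 0=u|_{\partial B_R}$, and Hopf — applied after rewriting the equation as $\Delta u=8c(x)u$ with $c(x)=(e^u-1)/u\ge 0$, not from superharmonicity at a boundary maximum — gives strictness), and your removable-singularity observation for a vortex sitting on $\partial B_{R_0}$ is sound. But the argument has a genuine gap exactly where you place it: the maximum principle for $L=\Delta-8e^{u}$ needs $v\ge 0$ on the inner circle $\partial B_{R_0}$ (equivalently, it must exclude components of $\{v<0\}$ whose closure meets $\partial B_{R_0}$), and nothing in the problem supplies that information — the equation and boundary data say nothing about the sign of $\partial_r u$ on any interior circle. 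Your proposed fix, ``run the scheme on $B_R\setminus\overline{B_\rho}$ for $\rho>R_0$ and let $\rho\downarrow R_0$,'' does not help: the same missing inner-boundary datum is needed on $\partial B_\rho$ for every $\rho>R_0$, so the limit produces nothing new. And the step you defer as ``the technical heart'' — the comparison/moving-plane argument of Lin--Ponce--Yang — is not a finishing estimate for your scheme; it is, by itself, the entire proof of the lemma. So as written your proposal is circular: it reduces the statement to the very argument it was meant to replace.

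For comparison, the paper's proof avoids needing any inner-boundary information precisely by using moving planes from the outside in: after noting $u<0$ in $B_R$ (maximum principle), writing $\Delta u=8c(x)u$ and getting $\partial u/\partial n>0$ on $\partial B_R$ by Hopf, it reflects across the planes $x_1=\sigma$ for $\sigma\in(R_0,R)$, compares $u$ with $u_\sigma=u(x^\sigma)$ via $w_\sigma=u-u_\sigma$, and invokes Lemma 5.1 of \cite{lin2007system} (in the spirit of \cite{gidas1979symmetry}); monotonicity in the $x_1$ direction on $\{x_1>\sigma\}$ for every $\sigma>R_0$, applied in all directions, yields the radial monotonicity outside $B_{R_0}$. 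The plane can be moved only until it reaches $|x|=R_0$ because that is where vortices obstruct the reflection — which is why the statement is on $\mathbb{R}^2\setminus B_{R_0}$. If you want a proof along your linearized route, you would have to generate the inner-boundary control yourself (in effect re-deriving the moving-plane comparison), so I recommend either carrying out the moving-plane argument in full or citing it as the proof rather than as a patch for the inner circle.
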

\begin{proof}
We first let $A_R=\left\{x| R_0\leq\abs{x}\leq R\right\}$ and see that on $A_R$, the equation (\ref{b4}) becomes
\begin{equation}\label{b44}
\begin{cases}
-\Delta u+8(e^u-1)=0\qquad\text{in }A_R\\
u=0\qquad\text{ when }\abs{x}=R.
\end{cases}
\end{equation}
If we let $c(x)=e^{u(x)+\xi(x)}$, where $u(x)+\xi(x)=\ln\left(\frac{1-e^u}{u}\right)$, and recall that by the maximum principle, $u<0$ in $B_R$, we are able to use the Hopf Boundary Lemma on $\Delta u=8c(x)u$ to determine that
\begin{equation}
\frac{\partial u}{\partial\text{n}}>0\quad\text{ if }\abs{x}=R.
\end{equation}
In other words, we see that $u$ is increasing at the boundary of $B_R$.  We want to show that $u$ is increasing everywhere outside of $B_R$.  Furthermore,  Aleksandrov, Gidas, Ni, and Nirenberg \cite{gidas1979symmetry} established that a function $u$ satisfying $\Delta u=8c(x)u$ with $u<0$ inside $B_R$ and $u=0$ on $\partial B_R$ will be radially symmetric outside of $B_R$ via the method of moving planes.  This means that without loss of generality, we may show that $u$ is increasing in the $x_1$ direction.

We then let $\sigma\in(R_0,R)$ and define the half plane,
\begin{equation}
\Sigma_{\sigma}=\left\{x\in B_R|x_1>\sigma\right\}.
\end{equation}
We consider the reflection of $x$ over the line $x_1=\sigma$ and denote this by $x^{\sigma}=(2\sigma-x_1,x_2)$.  We then set $u_{\sigma}=u(x^{\sigma})$ for $x\in\Sigma$.  We note here that upon setting $w_{\sigma}=u(x)-u_{\sigma}(x)$ for $x\in\Sigma_{\sigma}$, the proof becomes a special case of that in Lemma 5.1 of \cite{lin2007system} and the result immediately follows.
\end{proof}

We now consider (\ref{b4}) over the full space, and multiply by a negative to rewrite the equation as
\begin{align}\label{1eqr}
\Delta u+8(1-e^u)=4\pi\sum\limits_{j=1}^{N_+}\delta_{p_j}\qquad\text{in }\mathbb{R}^2,
\end{align}
and are now able to establish the following asymptotic estimate of topological solutions to (\ref{1eqr}).

\begin{lem}\label{52}
Let $u$ be a solution to (\ref{1eqr}) such that
\begin{equation}
	u(x)<0\qquad\text{ and }\qquad \int\displaylimits_{\mathbb{R}^2}\left(1-e^u\right)dx<\infty.
\end{equation}
Then we have the asymptotic estimate
\begin{equation}\label{131}
u(x)=-\lambda \ln\abs{x}+O(1)
\end{equation}
where
\begin{equation}\label{132}
\lambda=\frac{4}{\pi}\int\displaylimits_{\mathbb{R}^2}\left(1-e^u\right)dx-2N_+.  
\end{equation}
The solutions are topological if and only if $\lambda=0$.  That is, they satisfy $u(x)\rightarrow0$ as $\abs{x}\rightarrow\infty$. Otherwise, the solutions are nontopological and satisfy $u(x)\rightarrow -\infty$ as $\abs{x}\rightarrow\infty$.  
\end{lem}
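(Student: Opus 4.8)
The plan is to read the asymptotics off a logarithmic--potential representation of $u$ and then obtain the topological alternative by comparing growth rates. First I would record the elementary facts. Since $u<0$ we have $0<1-e^{u}<1$ pointwise, so on $\mathbb{R}^{2}\setminus\{p_{1},\dots,p_{N_{+}}\}$ the right-hand side of $\Delta u=8(e^{u}-1)$ is bounded and interior elliptic estimates give $u\in C^{1,\alpha}_{\mathrm{loc}}(\mathbb{R}^{2}\setminus\{p_{j}\})$; near each vortex, $u(x)-2\ln\abs{x-p_{j}}$ is bounded because $\Delta\!\left(2\ln\abs{x-p_{j}}\right)=4\pi\delta_{p_{j}}$. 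I would also keep in mind the radial monotonicity of $u$ outside $B_{R_{0}}$ supplied by (the full-plane counterpart of) Lemma~\ref{51}, which is what makes the far field tractable.

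Next I would set up the representation. Convolving the right-hand side $-8(1-e^{u})+4\pi\sum_{j}\delta_{p_{j}}$ of (\ref{1eqr}) with the fundamental solution $\tfrac{1}{2\pi}\ln\abs{x}$ gives
\[
u(x)=2\sum_{j=1}^{N_{+}}\ln\abs{x-p_{j}}-\mathcal{N}(x)+h(x),
\]
where $h$ is entire harmonic and $\mathcal{N}$ is the logarithmic potential of $8(1-e^{u})$, which I would define through the renormalized kernel $\ln\tfrac{\abs{x-y}}{1+\abs{y}}$ so that $1-e^{u}\in L^{1}(\mathbb{R}^{2})$ makes the defining integral converge and $\Delta\mathcal{N}=8(1-e^{u})$. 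The bounds above, together with $u<0$, limit the growth of $u$ and hence of $h$, and a Liouville-type argument then forces $h$ to be constant, so the only logarithmic contributions to $u$ at infinity come from $\mathcal{N}$ and from $2\sum_{j}\ln\abs{x-p_{j}}$. To extract the leading term of $\mathcal{N}$, split the $y$-integral over $\abs{y}\le\abs{x}/2$, $\abs{x}/2\le\abs{y}\le 2\abs{x}$, and $\abs{y}\ge 2\abs{x}$: on the first region use $\ln\abs{x-y}=\ln\abs{x}+O(\abs{y}/\abs{x})$, and on the other two use the smallness of $\int_{\abs{y}\ge\abs{x}/2}(1-e^{u})\,dy$. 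This yields $\mathcal{N}(x)=\tfrac{4}{\pi}\bigl(\int_{\mathbb{R}^{2}}(1-e^{u})\bigr)\ln\abs{x}+O(1)$, while $2\sum_{j}\ln\abs{x-p_{j}}=2N_{+}\ln\abs{x}+O(1)$; adding these gives exactly (\ref{131})--(\ref{132}).

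The dichotomy then follows. If $\lambda\neq 0$ then $u(x)=-\lambda\ln\abs{x}+O(1)$ is unbounded; since $u<0$ rules out $\lambda<0$, necessarily $\lambda>0$ and $u(x)\to-\infty$, the nontopological case. If $\lambda=0$ then $u$ is bounded near infinity, so $\Delta u=8(e^{u}-1)$ has bounded right-hand side there and interior elliptic estimates make $\{u(\cdot+x_{0})\}_{\abs{x_{0}}\text{ large}}$ precompact in $C^{1}_{\mathrm{loc}}$; a disjoint-balls argument using $\int_{\mathbb{R}^{2}}(1-e^{u})\,dx<\infty$ then forces $u(x)\to 0$, i.e.\ the solution is topological (if $u(x_{n})\le-\delta$ along $\abs{x_{n}}\to\infty$, the uniform $C^{1}$ bound makes $1-e^{u}$ bounded below on fixed-size disjoint balls, contradicting integrability). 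Conversely a topological solution is bounded, so the coefficient $\lambda$ in (\ref{131}) must vanish.

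The main obstacle is the middle step: making $\mathcal{N}$ rigorous and controlling the harmonic remainder $h$ using only $1-e^{u}\in L^{1}$. The delicate estimate is the far-field tail $\int_{\abs{y}\ge 2\abs{x}}\abs{\ln\tfrac{\abs{x-y}}{\abs{x}}}(1-e^{u})\,dy=o(\ln\abs{x})$, which naively appears to require $\ln\abs{y}\,(1-e^{u})\in L^{1}$ --- slightly more than the hypothesis --- along with the exclusion of a genuine $\ln\abs{x}$ term in $h$. This is circumvented by first extracting a crude decay rate for $1-e^{u}$ at infinity from the a priori radial structure (Lemma~\ref{51}) together with the comparison arguments of \cite{lin2007system}, and then bootstrapping; this is precisely the point at which the result specializes the analysis of Lin, Yang, and Ponce.
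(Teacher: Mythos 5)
Your route---the representation of $u$ as $2\sum_{j}\ln\abs{x-p_j}$ minus a renormalized logarithmic potential of $8(1-e^{u})$ plus a harmonic remainder, a Liouville argument to kill the remainder, kernel splitting to extract the coefficient $\lambda$, and the disjoint-balls argument for the topological/nontopological alternative---is essentially the argument behind the paper's own treatment: the paper does not prove Lemma \ref{52} at all, but declares it a modification of Lemma 5.2 of \cite{lin2007system} and Theorem 1.1 of \cite{cheng1997asymptotic} (with $\mathcal{K}(x)=8(e^{-u}-1)$) and omits the details, and those cited proofs proceed by exactly this potential-theoretic scheme. So you have reconstructed the cited proof rather than found a different one, and your sketch is in fact more explicit than what appears in the paper.

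One caution about the delicate step you flag yourself. The proposed rescue of the far-field tail via ``the full-plane counterpart of Lemma \ref{51}'' is not available: Lemma \ref{51} is proved for the Dirichlet problem (\ref{b4}) on $B_R$ (the Hopf/moving-plane argument uses the boundary condition $u=0$ on $\partial B_R$), and no radial monotonicity is established in the paper for an arbitrary solution of (\ref{1eqr}) subject only to $u<0$ and $1-e^{u}\in L^{1}(\mathbb{R}^2)$. In \cite{cheng1997asymptotic,lin2007system} this is handled without any such monotonicity: from the $L^1$ hypothesis alone one first gets the weaker statement $u(x)/\ln\abs{x}\rightarrow-\lambda$ (which is also the form actually invoked later in the proof of Lemma \ref{53}), and the sharper $O(1)$ remainder is then obtained by bootstrapping decay information on $1-e^{u}$ (equivalently, the two-sided bounds on $\mathcal{K}$ quoted after the lemma). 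If you want your write-up to be self-contained, that bootstrap is the step to spell out; the appeal to Lemma \ref{51} should be dropped.
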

The proof of this lemma is only a modification of Lemma 5.2 in \cite{lin2007system} and Theorem 1.1 in \cite{cheng1997asymptotic} with the term $\mathcal{K}$ being taken as $\mathcal{K}(x)=8(e^{-u}-1)$.  Then we have the modification,
\begin{equation*}
c_1e^{-\abs{x}^{\beta}}\leq \mathcal{K}(x)\leq C_2\abs{x}^m.
\end{equation*}
Since the proof would just follow that of \cite{cheng1997asymptotic,lin2007system}, the details are omitted here.

In order to extend the domain to the full-plane,  we consider the system in terms of a sequence, $\left\{u_n\right\}_{n=1}^{\infty}$, that satisfies,
\begin{align}\label{1eqn}
\begin{cases}
\Delta u_n=8(e^{u_n}-1)+4\pi\sum\limits_{j=1}^{N_+}\delta_{\hat{p}_j}\qquad\text{in }\Omega_n\\
u_n=0\qquad\text{on }\partial \Omega_n.
\end{cases}
\end{align}
where we consider the domain $\Omega_n=B_{R_n}$ and $R_n\rightarrow\infty$.

The main result of this section is stated below.  

\begin{lem}\label{53}
Let $\left\{u_n\right\}_{n=1}^{\infty}$ be a sequence of solutions to (\ref{1eqn}).  Then, there is a subsequence, $\left\{u_{n_m}\right\}_{m=1}^{\infty}$, which converges pointwise to a solution, $u$, of (\ref{1eqr}) satisfying the topological boundary condition,
\begin{equation}
\lim\limits_{\abs{x}\rightarrow\infty}u(x)=0.
\end{equation}
\end{lem}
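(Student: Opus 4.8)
The plan is to extract a convergent subsequence from $\{u_n\}$ using uniform bounds, show the limit solves (\ref{1eqr}), and then verify the topological boundary condition using the monotonicity from Lemma \ref{51} together with the integral constraint from Lemma \ref{52}. First I would establish that the sequence $\{u_n\}$ is locally uniformly bounded. Each $u_n$ satisfies the maximum principle argument as in Proposition \ref{epsilonbound} (the delta sources and the $-8$ term give $\Delta u_n \geq 0$ away from the vortices after adjusting, or one compares directly): $u_n < 0$ in $\Omega_n$. For a lower bound, I would compare $u_n$ with a fixed subsolution independent of $n$: since $\{u_n\}$ is monotone in $n$ in the sense that enlarging the domain only decreases the solution (by the maximum principle applied to the difference, using that $e^t$ is increasing), the sequence is monotone decreasing and bounded below pointwise on each fixed ball by, e.g., the fundamental-solution representation of the $\delta$ sources plus a bounded correction, giving a uniform $L^1_{loc}$ bound. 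Then elliptic estimates (as invoked via \cite{gilbarg2015elliptic} in the previous subsection, and via Proposition \ref{a3} for the $\mathcal{M}$-bound on $\Delta u_n$) give, away from the vortex set $\{\hat p_j\}$, uniform $C^{0}_{loc}$ bounds and hence a subsequence $u_{n_m} \to u$ pointwise (in fact in $C^0$ on compact subsets of $\mathbb{R}^2\setminus\{\hat p_j\}$) and weakly in $L^1_{loc}$.

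Next I would pass to the limit in the Green's-function (integral) representation of (\ref{1eqn}) on a fixed large ball, exactly as in the ``Return to original system'' subsection: writing $u_n$ via the Green's function of $B_{R_n}$ for the source $8(e^{u_n}-1) + 4\pi\sum\delta_{\hat p_j}$, the pointwise convergence $u_{n_m}\to u$ together with the uniform $L^1$ bound on $8(e^{u_n}-1)$ (from Proposition \ref{a3}, since $\|\mu\|_{\mathcal M} = 4\pi N_+$) lets me apply dominated/monotone convergence to conclude that $u$ satisfies (\ref{1eqr}) on all of $\mathbb{R}^2$, with $u < 0$ and $\int_{\mathbb{R}^2}(1-e^u)\,dx \leq \tfrac{1}{2}\|\mu\|_{\mathcal M}/8\cdot(\text{const}) < \infty$; in particular the hypotheses of Lemma \ref{52} hold.

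Then I would apply Lemma \ref{52}: $u$ has the asymptotic form $u(x) = -\lambda\ln|x| + O(1)$ with $\lambda = \tfrac{4}{\pi}\int_{\mathbb{R}^2}(1-e^u)\,dx - 2N_+$, and $u$ is topological iff $\lambda = 0$. So the crux is to show $\lambda = 0$, equivalently $\int_{\mathbb{R}^2}(1-e^u)\,dx = \tfrac{\pi}{2}N_+$. I would get this by a quantization/flux argument: integrating (\ref{1eqn}) over $\Omega_n$ gives $\int_{\partial\Omega_n}\partial_\nu u_n = 8\int_{\Omega_n}(e^{u_n}-1) + 4\pi N_+$, and I need to control the boundary flux as $n\to\infty$. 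Here the monotonicity from Lemma \ref{51} is essential: each $u_n$ is radially increasing outside $B_{R_0}$, so $\partial_\nu u_n \geq 0$ on $\partial\Omega_n$, forcing $8\int_{\Omega_n}(1-e^{u_n}) \leq 4\pi N_+$, i.e. $\int_{\mathbb{R}^2}(1-e^u)\,dx \leq \tfrac{\pi}{2}N_+$, which rules out $\lambda > 0$. To rule out $\lambda < 0$ (the nontopological case where $u\to-\infty$), I would argue that the monotone decreasing-in-$n$ limit cannot be nontopological: if $u\to-\infty$, then $e^u \to 0$ and the "mass" $\int(1-e^u)$ would be too small, but one can build a fixed lower barrier — a topological subsolution on $\mathbb{R}^2$ obtained, say, from Lemma 5.2-type constructions or from the $n$-th solution restricted and the fact that $u_n \geq$ (solution on a fixed smaller ball, extended by its boundary value) — below which $u$ cannot drop, pinning $\lambda = 0$.

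The main obstacle will be this last step: showing $\lambda$ cannot be negative, i.e. excluding the nontopological alternative for the limit. The flux/monotonicity inequality gives one direction for free, but the reverse requires either a uniform-in-$n$ lower barrier with the correct decay (so that the limit inherits $u(x)\to 0$) or a direct argument that the decreasing limit of the finite-ball solutions is the maximal solution, which is known (for these Chern-Simons-type equations) to be the topological one. I expect the cleanest route is to exhibit an explicit or semi-explicit fixed subsolution $\underline u \leq u_n$ for all large $n$ with $\underline u(x) \to 0$ as $|x|\to\infty$ (e.g. built from the free equation $\Delta \underline u = 8(e^{\underline u}-1)$ away from the vortices plus logarithmic pieces near them), so that $u \geq \underline u$ forces $\lambda \leq 0$ from decay of $\underline u$; combined with $\lambda \geq 0$ from the flux bound this yields $\lambda = 0$ and the stated boundary condition $\lim_{|x|\to\infty} u(x) = 0$.
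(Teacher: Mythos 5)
Your outline gets the quantization half of the argument right, but it leaves the genuinely hard half unproved. The flux computation is sound: integrating (\ref{1eqn}) over $B_{R_n}$ and using $u_n\leq 0$ with $u_n=0$ on $\partial B_{R_n}$ (so $\partial_\nu u_n\geq 0$) gives the uniform bound $8\int_{B_{R_n}}(1-e^{u_n})\,dx\leq 4\pi N_+$, and once one knows the limit $u$ is an honest $L^1_{loc}$ solution with finite mass, Lemma \ref{52} plus this bound forces $\lambda\leq 0$; moreover $\lambda<0$ is impossible outright for a solution with $u<0$, since $u=-\lambda\ln\abs{x}+O(1)$ would then tend to $+\infty$. (So your framing of the danger as ``$\lambda<0$, the nontopological case'' is off: with $u<0$ the nontopological alternative is $\lambda>0$, which your flux bound already excludes.) The real residual danger, which your proposal does not dispose of, is degeneracy: since your sequence is monotone decreasing in $n$, the pointwise limit could a priori be $-\infty$ (locally uniformly away from the vortices), in which case there is no solution to feed into Lemma \ref{52} at all. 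Your stated lower bound ``by the fundamental-solution representation of the $\delta$ sources plus a bounded correction'' is not an argument -- the delta sources only control the local logarithmic singularities, not a uniform downward drift of the whole solution as $R_n\to\infty$ -- and the alternative you lean on, a fixed global subsolution $\underline u\leq u_n$ with $\underline u\to 0$ at infinity, is left unconstructed; building such a barrier is essentially equivalent to the existence statement being proved (and the superposition of single-vortex solutions goes the wrong way: for $a,b\leq 0$ one has $e^{a+b}-1\geq(e^a-1)+(e^b-1)$, so sums of single-vortex solutions are supersolutions, not subsolutions). The appeal to ``the decreasing limit is the maximal solution, known to be topological'' is circular in this context.

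This is precisely the point the paper's proof is built around: it argues by contradiction, assuming $u_n\to-\infty$, extracts points $x_n$ with $u_n(x_n)=-c$ and $\mathrm{dist}(x_n,\partial B_{R_n})\to\infty$ (Claim 1, via the averaged flux identity), translates to $U_n(x)=u_n(x+x_n)$, and shows $U_n$ converges to an entire vortex-free solution $U$ of $\Delta U=8(e^U-1)$ with $U\leq 0$, $U(0)=-c$, finite mass, and monotone along a direction by Lemma \ref{51}; Lemma \ref{52} then forces $U\to-\infty$, contradicting the monotonicity and $U(0)=-c$. To close your version you would need either this blow-up argument or a Brezis--Merle/Harnack-type dichotomy showing that if $u_n\to-\infty$ at one point then $u_n\to-\infty$ locally uniformly off the vortex set, which combined with your uniform mass bound $\int_{B_{R_n}}(1-e^{u_n})\,dx\leq\frac{\pi}{2}N_+$ yields a contradiction on any ball of area exceeding $\frac{\pi}{2}N_+$. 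As written, the proposal has a genuine gap at exactly this non-degeneracy step.
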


\begin{proof}
To begin, we define a background function $w_0$ as in (\ref{backuv}) and
see that $w_0$ is bounded for almost every $x\in\mathbb{R}^2$.  We then use this background function to express the sequence of functions $u_n$ as 
\begin{equation}
u_n=w_0+v_n,
\end{equation}
and proceed to prove, by contradiction, that the sequence of functions $\left\{v_n\right\}_{n=1}^{\infty}$ is uniformly bounded.  As we will see, the contradiction will arise as a result of the nontopological boundary conditions stated in Lemma \ref{52}.    

Now for the assumption.  To this end, we suppose that $\left\{v_n\right\}_{n=1}^{\infty}$ is an unbounded sequence of functions and we recall that $u_n<0$.  Since $v_n=u_n-w_0$, we see that by assuming the sequence of functions, $\left\{v_n(x)\right\}$ is unbounded, we mean $v_n\rightarrow-\infty$ and we may assume it does so uniformly. By this assumption of unboundedness, we may further assume that there exists a sequence $\left\{x_n\right\}_{n=1}^{\infty}$ in $\mathbb{R}^2$ so that
\begin{equation}\label{dun}
v_n(x_n)\rightarrow-\infty \qquad\text{ as }\qquad n\rightarrow\infty.
\end{equation}
We now make the following claim.  

\textit{Claim 1.} There exists a sequence $\left\{x_n\right\}_{n=1}^{\infty}$ in $\mathbb{R}^2$ such that for some constant $c>0$,
\begin{equation}
u_n(x_n)=-c\qquad\text{ and }\qquad\text{dist}(x_n,\partial B_{R_n})\rightarrow\infty.
\end{equation}
Prior to establishing claim 1, we denote by $\bar{u}_n(r)$, the average value of $u_n$ on $B_r$,
\begin{equation}
\bar{u}_n(r)=\frac{1}{2\pi}\int\displaylimits_0^{2\pi}u_n(re^{i\theta})d\theta.
\end{equation}
Assume that claim 1 is false.  That is, we assume that there exists a $\kappa>0$ such that if $u_n(x)\geq-c$ then $\text{dist}(x,\partial B_{R_n})\leq\kappa$.  Recall that $R_0$ is the smallest radius such that $B_{R_0}$ contains all the point vortices and take $n$ large enough to assume that $R_n\geq R_0+\kappa$, or $B_{R_0+\kappa}\subset B_{R_n}$.

It is clear that since $\Omega_n=B_{R_n}$, and $u_n=0$ on $\partial\Omega_n$,
\begin{equation}\label{mvt1}
\bar{u}_n(R_n-\kappa)\leq -c\qquad\text{and}\qquad\bar{u}_n(R_n)=0.
\end{equation}
Then, by the mean value theorem and (\ref{mvt1}), there exists an $r_n$ such that
\begin{equation}
R_n-\kappa<r_n<R_n
\end{equation}
and
\begin{equation}
\bar{u}_n'(r_n)=\frac{\bar{u}_n(R_n)-\bar{u}_n(R_n-\kappa)}{R_n-(R_n-\kappa)}\geq\frac{c}{\kappa}.
\end{equation}
We then recall that,
\begin{equation}\label{useful1}
\bar{u}_n'(r)=\frac{1}{2\pi r}\int\displaylimits_{B_r}\Delta u_ndx\qquad\forall r>R_0,
\end{equation}
and see by setting $r=r_n$,
\begin{equation}\label{barn}
\bar{u}_n'(r_n)=\frac{1}{2\pi r_n}\int\displaylimits_{B_{r_n}}\Delta u_ndx.
\end{equation}
We multiply (\ref{barn}) by $r_n$ to get,
\begin{align}\label{contradict1}
\frac{r_n\cdot c}{\kappa}\leq r_n\bar{u}_n(r_n)=\frac{4}{\pi}\int\displaylimits_{B_{R_n}}(e^{u_n}-1)dx+2N_+\leq 2N_+.
\end{align}
We now take $n\rightarrow\infty$ in (\ref{contradict1}) and see that since $N_+$ is a finite number and $r_n\rightarrow\infty$ as $n\rightarrow\infty$ the following provides us with a contradiction
\begin{equation}
\frac{r_n\cdot c}{\kappa}\leq 2N_+.
\end{equation}
Therefore, the assumption is indeed false and we have established claim 1.  

Now we recall that as $n\rightarrow\infty$,  the functions $u_n\rightarrow-\infty$ uniformly on any compact subset of $\mathbb{R}^2$.  Therefore, the sequence defined in (\ref{dun}) satisfies $\abs{x_n}\rightarrow\infty$ and we set 
\begin{equation}\label{bigu}
U_n(x)=u_n(x+x_n)
\end{equation}
and make our second claim.\\

\textit{Claim 2}: The sequence of functions $U_n$ defined in (\ref{bigu}) converge to a solution, $U$ of the equation,
\begin{equation}\label{fullU}
\begin{cases}
\Delta U=8(e^{U}-1),\qquad\text{in }\mathbb{R}^2\\
U\leq 0\qquad\text{and}\qquad U(0)=-c.
\end{cases}
\end{equation}
To prove this claim, we see that 
\begin{equation*}
U_n(0)=u_n(x_n)=-c,
\end{equation*}
where $U_n$ is defined in the ball $B_{\rho_n}$ where $\rho_n\rightarrow\infty$ and $B_{\rho_n}$ does not contain any $\hat{p}_j$, $j=1,\ldots,N_+$.  This translation of $u_n$ yields the following differential equation without any $\delta_{\hat{p}_j}$ terms,
\begin{equation}\label{singleU}
\begin{cases}
\Delta U_n=8(e^{U_n}-1),\qquad\text{in }B_{\rho_n}\\
U_n\leq 0\qquad\text{and}\qquad U_n(0)=-c.
\end{cases}
\end{equation}
By Lemma \ref{51}, we may assume that $U_n(x)$ increases along the positive $x_2$-axis. We then use (\ref{useful1}) and integrate the single equation in (\ref{singleU}) to obtain,
\begin{equation}\label{fromthis}
r\bar{U}_n'(r)=\frac{4}{\pi}\int\displaylimits_{B_r}(e^{U_n}-1)dx.
\end{equation}
From (\ref{fromthis}), along with the fact that $U_n\leq 0$, we obtain,
\begin{equation}
\abs{\bar{U}_n'(r)}\leq\frac{4}{\pi r}\int\displaylimits_{B_r}\abs{e^{U_n}-1}dx\leq 4r.
\end{equation}
We then use the average value of $U_n$,
\begin{equation}
\abs{\bar{U}_n(r)}=\frac{1}{2\pi}\int\displaylimits_0^{2\pi}\abs{U_n(re^{i\theta})}d\theta,
\end{equation}
along with the mean value theorem to obtain the following bound for $\bar{U}_n$,
\begin{equation}
\abs{\bar{U}_n(r)}\leq \frac{1}{2}+4r^2.
\end{equation}
We see that $\left\{U_n\right\}_{n=1}^{\infty}$ has a uniform $L^1$ bound on $\partial B_r$.  We then see that by elliptic estimates \cite{gilbarg2015elliptic}, the sequence $\left\{U_n\right\}$ is uniformly bounded over any compact subset of $\mathbb{R}^2$.  Therefore, we can extract a subsequence (if necessary) so that $U_n\rightarrow U$ a solution of,
\begin{equation}
\begin{cases}
\Delta U=8(e^U-1)\qquad\text{in }\mathbb{R}^2\\
U\leq 0\qquad U(0)=-c.
\end{cases}
\end{equation}
We have thus established claim 2.

Before proceeding, we recall that $u_n$ satisfies system in $\Omega_n$ (\ref{1eqn}). Since $u_n$ increases in any radial direction, we see that $\frac{\partial u_n}{\partial\text{n}}>0$ on $\partial B_{R_n}$.  With this in mind, we let $\hat{w}_0=\ln\abs{x-\hat{p}_j}^2$ when $x$ is in a small neighborhood of each $\hat{p}_j$.  In this construction of $\hat{w}_0$, we create a function with support in $B_{R_0}$ and is smooth away from the point vortices $\hat{p}_j$.  We now make the decomposition,
\begin{equation}
u_n=\hat{w}_0+V_n,
\end{equation}
to yield
\begin{equation}\label{VN}
\begin{cases}
\Delta V_n=8(e^{V_n+\hat{w}_0}-1)+g(x)\qquad\text{in }B_{R_n}\\
V_n=0\qquad \frac{\partial V_n}{\partial\text{n}}>0\quad\text{on }\partial B_{R_n}.
\end{cases}
\end{equation}
We then use integration by parts in (\ref{VN}) to obtain
\begin{equation}\label{VN2}
0<\int\displaylimits_{\partial B_{R_n}}\frac{\partial V_n}{\partial\text{n}}dS=8\int\displaylimits_{B_{R_n}}(e^{V_n+\hat{w}_0}-1)dx+\int\displaylimits_{B_{R_n}}g(x)dx.
\end{equation}
From (\ref{VN2}), it is clear that 
\begin{equation}\label{uniformbound}
8\int\displaylimits_{B_{R_n}}\bigg(1-e^{u_n}\bigg)dx\leq\int\displaylimits_{B_{R_n}}\abs{g(x)}dx\leq\int\displaylimits_{\mathbb{R}^2}\abs{g(x)}dx
\end{equation}
where the fixed function, $g(x)$, satisfies
\begin{equation}
\int\displaylimits_{\mathbb{R}^2}\abs{g(x)}dx<\infty.
\end{equation}
We then replace $u_n$ with $U_n$ in (\ref{uniformbound}) and let $n\rightarrow\infty$ to obtain

\begin{equation}
8\int\displaylimits_{B_{R_n}}\bigg(1-e^{U_n}\bigg)dx<\infty.
\end{equation}
We are then able to call upon Lemma \ref{52} and see that
\begin{equation}
\lim\limits_{\abs{x}\rightarrow\infty}\frac{U(x)}{\ln\abs{x}}=-\lambda,\qquad \lambda=\frac{4}{\pi}\int\displaylimits_{\mathbb{R}^2}(1-e^U)dx>0.
\end{equation}
We recall that $U_n$ is nondecreasing and thus $U$ is nondecreasing. Since $\lambda \neq 0$, we must have $\displaystyle\lim\limits_{\abs{x}\rightarrow\infty}U(x)=-\infty$ from Lemma \ref{52}.  This is a contradiction since $U_n$ and thus $U$ is nondecreasing and $U(0)=-c$.  Therefore, the assumption that $v_n$ (and consequently  $u_n$) being uniformly unbounded is false.  We may now extract a subsequence of $\left\{u_n\right\}$ converging to $u$ and satisfies $u\rightarrow 0$ as $\abs{x}\rightarrow\infty$. 
\end{proof}

\subsection{Variational solutions of the system in the limit $\Omega\rightarrow\mathbb{R}^2$}

In this section, we take the domain limit of the solutions obtained in section 2.2 in order to establish existence of solutions over the entire plane.  We state the main result of this section as the following theorem.

\begin{thm}\label{limitboundary}
On the full-plane, $\Omega=\mathbb{R}^2$, the system (\ref{maineq}) has a solution pair $(u,v)$ satisfying the boundary condition
\begin{equation}\label{boundcondition}
u,v\rightarrow -\ln 2\qquad\text{as}\qquad\abs{x}\rightarrow\infty.
\end{equation}
\end{thm}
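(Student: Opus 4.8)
The plan is to pass to the limit $\Omega = B_{R_n} \to \mathbb{R}^2$ in the bounded-domain solutions $(u_\varepsilon, v_\varepsilon)$ already constructed, following a strategy parallel to the single-equation limit in Lemma \ref{53} but now tracking the coupled pair. First I would fix, for each $n$, a solution $(u_n, v_n)$ of the system (\ref{s2'}) on $\Omega_n = B_{R_n}$ with the boundary values $u_n = v_n = -\ln 2$ on $\partial B_{R_n}$, obtained from Theorem \ref{tm21} and the $\varepsilon \to 0$ limit carried out in section 2.1. By Proposition \ref{epsilonbound} (and its persistence under $\varepsilon \to 0$) these satisfy $u_n, v_n < -\ln 2$ in $B_{R_n}$, so the shifted functions $\tilde u_n = u_n + \ln 2$, $\tilde v_n = v_n + \ln 2$ are negative. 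The key quantities to control are the masses $\int_{B_{R_n}} (1 - e^{u_n})\,dx$ and $\int_{B_{R_n}} (1 - e^{v_n})\,dx$; as in (\ref{uniformbound}), introducing the background function $\hat w_0$ supported near the vortices and integrating by parts over $B_{R_n}$, the positivity of the outward normal derivative (which holds because, by the monotonicity established in Lemma \ref{51} applied to the sum $w_n = \tfrac12(u_n+v_n)$ — a subsolution of the single equation (\ref{wsys}) — each of $u_n, v_n$ increases along radial directions outside $B_{R_0}$) gives a uniform-in-$n$ bound on both masses.

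With these $L^1$ bounds in hand I would next establish local uniform bounds. The sum $w_n = \tfrac12(u_n + v_n)$ is a subsolution of (\ref{wsys}), so it is squeezed between the single-equation solution of Lemma \ref{53} and $-\ln 2$, giving a uniform lower bound on $u_n + v_n$ on compact sets away from the vortices; combined with $u_n, v_n < -\ln 2$ this yields uniform two-sided bounds on each of $u_n, v_n$ on compact subsets of $\mathbb{R}^2 \setminus \{p_j, q_j\}$. Elliptic estimates \cite{gilbarg2015elliptic} then provide a subsequence converging in $C^0_{\mathrm{loc}}$ (and in $W^{2,r}_{\mathrm{loc}}$ away from the vortices) to a pair $(u,v)$ which, via the Green's representation for (\ref{s2'}) with $\varepsilon$ already sent to $0$, solves (\ref{maineq}) on all of $\mathbb{R}^2$ in the distributional sense, with the correct Dirac sources. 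One also records that $\tilde u, \tilde v \in L^1(\mathbb{R}^2)$ from the uniform mass bounds and Fatou's lemma.

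It remains to verify the topological boundary condition (\ref{boundcondition}), and this is the main obstacle. The strategy mirrors the contradiction argument in Lemma \ref{53}: I would argue that $\tfrac12(u+v) \to -\ln 2$ by comparison with the single-equation solution $w$ of (\ref{wsys})'s limit, which satisfies $w \to -\ln 2$ by Lemma \ref{53}, together with $u, v \le -\ln 2$ — since $u + v \ge 2w$ forces $\liminf (u+v)/2 \ge -\ln 2$, while $u, v < -\ln 2$ forces $\limsup (u+v)/2 \le -\ln 2$, pinning $u + v \to -2\ln 2$ and hence each of $u, v \to -\ln 2$. The subtlety is that this comparison requires $\tfrac12(u_n + v_n) \ge w_n$ uniformly and that $w_n \to w$ with $w$ genuinely attaining the topological limit; the latter is exactly the content of Lemma \ref{53}, but one must check that the $\Omega$-limit of $\tfrac12(u_n+v_n)$ and the $\Omega$-limit $w$ of the single-equation sequence can be compared along a common subsequence. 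I would handle this by noting that $w_n$ (built from $\tfrac12(u_n+v_n)$ via the monotone iteration of section 2.1) and $\tfrac12(u_n + v_n)$ satisfy $w_n \ge \tfrac12(u_n+v_n)$ for every $n$ (inequality (\ref{wineq})), so passing to the limit preserves $w \ge \tfrac12(u+v)$, and then invoking the radial monotonicity of $u, v$ (inherited in the limit) together with the finiteness of $\int_{\mathbb{R}^2}(1-e^u)\,dx$ to rule out the nontopological alternative $u \to -\infty$ exactly as in the final paragraph of the proof of Lemma \ref{53}.
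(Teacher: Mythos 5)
There is a genuine gap, and it sits at the one step that carries the whole theorem: the verification of the boundary condition (\ref{boundcondition}). Your comparison runs in the wrong direction. Because $\tfrac{1}{2}(\tilde u_n+\tilde v_n)$ is a \emph{sub}solution of (\ref{wsys2}) and $w_n$ is produced by the monotone iteration starting from it, the inequalities (\ref{wineq}), (\ref{wuv}) read $w_n\ge\tfrac{1}{2}(\tilde u_n+\tilde v_n)$: the average lies \emph{below} the auxiliary scalar solution. In your pinching argument you invoke ``$u+v\ge 2w$ forces $\liminf\tfrac{1}{2}(u+v)\ge-\ln 2$,'' which is the reverse of this; one sentence later you yourself record $w\ge\tfrac{1}{2}(u+v)$ from (\ref{wineq}), contradicting it. With the correct direction you only obtain $\limsup$ control at infinity, which you already have from $u,v<-\ln 2$; there is no lower bound, so the limit pair could a priori be nontopological and (\ref{boundcondition}) does not follow. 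The same reversed reading appears earlier, where you claim the subsolution property ``squeezes'' $u_n+v_n$ from below to give uniform local lower bounds: a subsolution comparison never bounds from below. For contrast, the paper's proof uses the same exhaustion by balls and the same auxiliary $w_n$ from Lemma \ref{53}, but closes the argument by a different mechanism, bounding each shifted component from below by a large multiple $M'w_n$ and sending $\abs{x}\to\infty$ with $w\to 0$; your proposal contains no substitute for this lower bound.

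The auxiliary claims also lean on results that do not apply. Lemma \ref{51} (and the dichotomy of Lemma \ref{52}) concerns solutions of the \emph{scalar} equation; it applies to $w_n$, not to a subsolution of (\ref{wsys}), and even radial monotonicity of the sum would not transfer to $u_n$ and $v_n$ separately. Hence neither your derivation of the mass bounds from ``positivity of $\partial_\nu u_n$ and $\partial_\nu v_n$'' nor your fallback (``radial monotonicity of $u,v$ inherited in the limit, then conclude as in Lemma \ref{53}'') is justified as written. The mass bounds themselves are recoverable by a more elementary route: integrate each equation of (\ref{f1'}) over $B_{R_n}$; since for $-4\le\abs{K}<0$ one has $k_{11}\ge 0$, $k_{12}>0$ and $k_{11}+k_{12}=2$, each $\tilde u_n,\tilde v_n$ is superharmonic in the annulus near $\partial B_{R_n}$, vanishes there and is negative inside, so its boundary flux is nonnegative, giving $\int_{B_{R_n}}\left[(1-e^{\tilde u_n})+(1-e^{\tilde v_n})\right]dx\le\pi N_+$. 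But this repair only supplies the $L^1$ control and local compactness; it does not produce the missing lower bound at infinity, so the proof of the topological boundary condition remains open in your proposal.
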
  
\begin{proof}
We accomplish this by taking a sequence of open balls containing all the $p_j$ and $q_j$ terms.  To this end, we let $\left\{R_n\right\}_{n=1}^{\infty}$ be a sequence such that,
\begin{equation*}
R_n>\max\left\{\abs{p_j},\abs{q_{j'}}\right\}
\end{equation*}
where $j=1,\ldots,N_1$ and $j'=1,\ldots,N_2$.  We also require the sequence to satisfy
\begin{equation*}
R_n\rightarrow\infty \text{ as }n\rightarrow\infty.
\end{equation*}
By the work done in the previous sections, we see that if $\Omega=B_{R_n}$, then the system (\ref{s1}) has a solution.  We will denote this solution by $(u_n,v_n)$.  To make the boundary conditions homogenous, we consider the shifts 
\begin{equation}
u_n=\tilde{u}_n+\ln2\qquad\text{and}\qquad v_n=\tilde{v}_n+\ln2,
\end{equation}
Now we have $\tilde{u}_n=\tilde{v}_n=0$ on $\partial B_{R_n}$ and the following system
\begin{equation}\label{f1'}
\begin{cases}
\Delta \tilde{u}_n=2k_{11}e^{\tilde{u_n}}+2k_{12}e^{\tilde{v}_n}-4+4\pi\sum\limits_{j=1}^{N_1}\delta p_{j}&\text{in }B_{R_n}\\
\Delta \tilde{v}_n=2k_{12}e^{\tilde{u}_n}+2k_{11}e^{\tilde{v}_n}-4+4\pi\sum\limits_{j=1}^{N_2}\delta q_{j}&\text{in }B_{R_n}\\
\tilde{u}_n=\tilde{v}_n=0\qquad\text{on }\partial B_{R_n}.
\end{cases}
\end{equation}
We add the two equations in (\ref{f1'}) together and see that the average of $\tilde{u}_n$ and $\tilde{v}_n$ is a subsolution of
\begin{equation}\label{wsys2}
\begin{cases}
\Delta w=8(e^{w}-1)+4\pi\sum\limits_{j=1}^{N_+}\delta_{\hat{p}_j}\\
w=0\qquad\text{on }\partial B_{R_n},
\end{cases}
\end{equation}
where we have simplified the summations by writing
\begin{equation}
\sum\limits_{j=1}^{N_1}\delta_{p_j}+\sum\limits_{j=1}^{N_2}\delta_{q_j}=\sum\limits_{j=1}^{N_+}\delta_{\hat{p}_j}.
\end{equation}

As in \cite{sy3}, we begin with the subsolution, $\frac{1}{2}(\tilde{u}_n+\tilde{v}_n)$, and supersolution, $0$.  Then, we iterate in a monotone increasing fashion to obtain a solution of (\ref{wsys2}).  In this situation, we have,
\begin{equation}\label{wuv}
0>w_n\geq \frac{1}{2}(\tilde{u}_n+\tilde{v}_n),\text{in }B_{R_n}.
\end{equation}
Here, we use Lemma \ref{53} and assume that the sequence $\left\{w_n\right\}_{n=1}^{\infty}$ converges pointwise to $w$ which is a solution of,
\begin{equation}\label{wsys3}
\begin{cases}
\Delta w=8(e^{w}-1)+4\pi\sum\limits_{j+1}^{N_+}\delta_{\hat{p}_j}\\
\lim\limits_{\abs{x}\rightarrow\infty}w(x)=0,\qquad w<0\text{ a.e. }.
\end{cases}
\end{equation}

In order to establish the limit for $\tilde{u}_n$ and $\tilde{v}_n$, we take another subsequence (if necessary) and then for some $\tilde{u}$ and $\tilde{v}$, we have
\begin{equation}
u_n\rightarrow \tilde{u}\qquad\text{and}\qquad v_n\rightarrow \tilde{v}.
\end{equation}
This convergence is pointwise on $\mathbb{R}^2$.  We see that both of the functions $\tilde{u}$ and $\tilde{v}$ are negative and by taking the limit in (\ref{wuv}), we obtain,
\begin{equation}
0>w\geq\frac{1}{2}(\tilde{u}+\tilde{v}).
\end{equation}
Furthermore, both $\tilde{u}_n$ and $\tilde{v}_n$ are bounded in $\mathbb{R}^2$.  We write this as
\begin{equation}
\abs{\tilde{u}_n}\leq M_u,\qquad \abs{\tilde{v}_n}\leq M_v,\qquad\text{ a.e.}.
\end{equation}
Then, we can let $M=\max\left\{M_{\tilde{u}},M_{\tilde{v}}\right\}$ and see that,
\begin{equation}
-\frac{1}{2}M\leq \frac{1}{2}(\tilde{u}_n+\tilde{v}_n)<w_n<0.
\end{equation}
Moreover, we see that if we multiply $w_n$ by a large enough constant, $M'$, we obtain,
\begin{equation}
M'w_n\leq \tilde{u}_n<0,\qquad M'w_n\leq \tilde{v}_n< 0.
\end{equation}
Then, by taking the limit as $\abs{x}\rightarrow\infty$, we achieve,
\begin{equation}
\lim\limits_{\abs{x}\rightarrow\infty}\tilde{u}_n=\lim\limits_{\abs{x}\rightarrow\infty}\tilde{v}_n=0
\end{equation}
which implies that $\tilde{u},\tilde{v}\rightarrow 0$ as $\abs{x}\rightarrow\infty$.  In terms of the functions $u$ and $v$, we have obtained
\begin{equation}
u,v\rightarrow -\ln 2\qquad\text{as}\qquad\abs{x}\rightarrow\infty.
\end{equation}
Therefore, the boundary condition (\ref{boundcondition}) is satisfied on $\mathbb{R}^2$.
\end{proof}

\subsection{Asymptotic analysis}
In this section, we discuss the behavior of solutions $u$ and $v$ to the system (\ref{maineq}) over $\mathbb{R}^2$.  We now prove Theorem \ref{asym}

Before proceeding, we recall a well known proposition presented by Bellman in \cite{bellman2008stability}.
\begin{prop}\label{phiprop}
Let $\alpha,\lambda,t_0>0$ and let $\Phi:[t_0,\infty)\rightarrow \mathbb{R}$ be a continuous function such that
\begin{equation*}
\lim\limits_{t\rightarrow \infty}\Phi(t)=0\qquad\text{and}\qquad
\int\displaylimits_{t_0}^{\infty}\abs{\Phi(t)}dt<\infty.
\end{equation*}
Then the equation
\begin{equation}
\begin{cases}
w''+\frac{1}{t}w'-\left(\lambda+\Phi(t)\right)w=0\qquad\text{in }(t_0,\infty)\\
w(t_0)=\alpha\qquad\text{ and }\qquad\lim\limits_{t\rightarrow\infty}w(t)=0
\end{cases}
\end{equation}
has a unique solution $w_0$.  Furthermore, there exist positive constants $C_0$ and $C_1$ satisfying
\begin{equation}
C_0\leq\frac{w_0(t)}{W_0(t)}\leq C_1\qquad \forall t\geq t_0
\end{equation}
where
\begin{equation}
W_0=\frac{e^{-\sqrt{\lambda}t}}{t^{\frac{1}{2}}}.
\end{equation}
\end{prop}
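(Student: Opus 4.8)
The plan is to put the equation in Liouville normal form, build the recessive (decaying) solution by a contraction argument of Levinson type, and then read off uniqueness and the two–sided bound. First I would remove the first–order term with the substitution $w(t)=t^{-1/2}v(t)$. A direct computation turns the equation into
\[
v''=\Big(\lambda+\widetilde\Phi(t)\Big)v\qquad\text{on }(t_0,\infty),\qquad \widetilde\Phi(t):=\Phi(t)-\frac{1}{4t^{2}},
\]
and since $t_0>0$, $\Phi(t)\to0$, and $\Phi\in L^1([t_0,\infty))$, the new potential $\widetilde\Phi$ again tends to $0$ and lies in $L^1([t_0,\infty))$. The data become $v(t_0)=\alpha t_0^{1/2}$ and $t^{-1/2}v(t)\to0$, and because $w_0=t^{-1/2}v$ while $W_0(t)=t^{-1/2}e^{-\sqrt\lambda t}$, it suffices to produce a unique solution of this equation with $v(t)$ comparable to $e^{-\sqrt\lambda t}$ on all of $[t_0,\infty)$. (Equivalently, the original equation is a perturbed modified Bessel equation of order zero, whose recessive solution is $K_0(\sqrt\lambda\,t)$, and $W_0$ is precisely the large–$t$ asymptotics of $K_0(\sqrt\lambda\,t)$.)

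For Step 2 I would construct this recessive solution. Treating $\widetilde\Phi v$ as a source and using the homogeneous solutions $e^{\pm\sqrt\lambda t}$ (Wronskian $-2\sqrt\lambda$), variation of parameters recasts the equation as the integral equation
\[
v(t)=e^{-\sqrt\lambda t}+\frac{1}{\sqrt\lambda}\int_t^{\infty}\sinh\!\big(\sqrt\lambda(s-t)\big)\,\widetilde\Phi(s)\,v(s)\,ds,
\]
and, writing $v(t)=e^{-\sqrt\lambda t}u(t)$,
\[
u(t)=1+\frac{1}{2\sqrt\lambda}\int_t^{\infty}\big(1-e^{-2\sqrt\lambda(s-t)}\big)\,\widetilde\Phi(s)\,u(s)\,ds .
\]
Since $0\le 1-e^{-2\sqrt\lambda(s-t)}\le 1$ for $s\ge t$, choosing $T\ge t_0$ with $\tfrac{1}{2\sqrt\lambda}\int_T^{\infty}|\widetilde\Phi|$ small makes the right–hand side a contraction on the bounded continuous functions on $[T,\infty)$; its fixed point is the unique bounded solution there, satisfies $u(t)\to1$, and obeys $\tfrac12\le u(t)\le\tfrac32$ on $[T,\infty)$ once $T$ is large enough. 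Because the ODE is linear with continuous coefficients, the corresponding $v_0:=e^{-\sqrt\lambda t}u$ extends to a solution on $[t_0,\infty)$, with $e^{-\sqrt\lambda t}/2\le v_0(t)\le 3e^{-\sqrt\lambda t}/2$ on $[T,\infty)$.

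Finally, for uniqueness and the global bound: the solution space is two–dimensional, and by reduction of order a solution independent of $v_0$ is $v_1=v_0\int_T^{t}v_0^{-2}$, which grows like $e^{\sqrt\lambda t}$ since $v_0$ decays like $e^{-\sqrt\lambda t}$; hence every solution that is not a multiple of $v_0$ is exponentially large, so the decaying solutions are exactly $\{c\,v_0\}$, and $v(t_0)=\alpha t_0^{1/2}$ selects the single value $c=\alpha t_0^{1/2}/v_0(t_0)$, which is the asserted existence and uniqueness of $w_0$. It remains to see that $v_0$ is positive (so in particular $v_0(t_0)\ne0$) and comparable to $e^{-\sqrt\lambda t}$ on the compact part $[t_0,T]$ too; combined with Step 2 this yields constants $C_0,C_1>0$ depending only on $\alpha,\lambda,t_0,\Phi$ with $C_0\le w_0(t)/W_0(t)\le C_1$ for all $t\ge t_0$. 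I expect this last point to be the main obstacle, since on $[t_0,T]$ the perturbation $\widetilde\Phi$ is no longer small: I would exclude zeros of $v_0$ there by a Sturm/Riccati comparison on $[t_0,T]$ (where $\widetilde\Phi$ is bounded, so $v_0$ can have only finitely many oscillations) together with the fact that any non–recessive behaviour contradicts $v_0(t)\to0$, or else simply invoke Bellman's original treatment in \cite{bellman2008stability}. Transferring back through $w_0=t^{-1/2}v_0$ then completes the argument.
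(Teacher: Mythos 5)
The paper does not actually prove this proposition: it is quoted as a known result from Bellman \cite{bellman2008stability}, so your reconstruction cannot be compared to an argument in the text. Your Steps 1--2 are correct and are the standard Levinson-type route: the substitution $w=t^{-1/2}v$ does give $v''=(\lambda+\widetilde\Phi)v$ with $\widetilde\Phi=\Phi-\tfrac{1}{4t^2}\in L^1$, the integral equation and the contraction on $[T,\infty)$ correctly produce a recessive solution $v_0=e^{-\sqrt{\lambda}t}u$ with $u\to1$, and the reduction-of-order argument correctly shows that the solutions with $w\to0$ are exactly the multiples of $t^{-1/2}v_0$.

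The step you flag at the end, however, is a genuine gap, and under the hypotheses as literally stated it cannot be closed: nothing in ``$\Phi$ continuous, $\Phi\to0$, $\Phi\in L^1$'' prevents $\lambda+\widetilde\Phi$ from being very negative on a compact piece of $[t_0,T]$. Take $\Phi$ equal to a large negative constant $-M$ on $[t_0,t_0+1]$ (smoothly cut off and vanishing for large $t$); by Sturm comparison with $v''+(M-\lambda)v=0$ the backward extension of $v_0$ must oscillate and acquire zeros in $[t_0,t_0+1]$ once $M$ is large. Then $w_0$ changes sign, so no positive $C_0$ with $C_0\leq w_0/W_0$ can exist, and by tuning $M$ one can even arrange $v_0(t_0)=0$, destroying solvability of the boundary-value problem with $w(t_0)=\alpha>0$. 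So your proposed fix (``finitely many oscillations'' plus recessiveness) does not exclude a sign change or a zero at $t_0$; what is really needed is a sign hypothesis such as $\Phi\geq0$ (or $\lambda+\Phi\geq0$), which does hold in the application in this paper since $\Phi$ there is defined as $4$ times a minimum of nonnegative quantities. Under that extra hypothesis the missing step is easy in the original variables: if $w>0$ near infinity, $w\to0$, and $w(t_1)=0$ at a largest zero, then $(tw')'=t(\lambda+\Phi)w\geq0$ on $(t_1,\infty)$, and since $w$ increases somewhere just to the right of $t_1$ one gets $tw'(t)\geq c>0$ eventually, hence $w(t)\geq c\ln t+O(1)\to\infty$, contradicting $w\to0$; positivity on $[t_0,T]$ plus your asymptotics on $[T,\infty)$ then give the two-sided bound. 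In short: either add the sign condition on $\Phi$ (as the application implicitly provides) and finish as above, or do what the paper does and simply cite Bellman's treatment.
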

We are now ready to prove Theorem \ref{asym}.  
\begin{proof}
First, we consider the shift
\begin{equation}\label{shift}
\tilde{u}=u+\ln2\qquad\text{and}\qquad \tilde{v}=v+\ln 2.
\end{equation}
Upon insertion of (\ref{shift}) into (\ref{maineq}) we obtain the following system.
%
\begin{equation}\label{maineq4}
\begin{cases}
\Delta \tilde{u}=2k_{11}e^{\tilde{u}}+2k_{12}e^{\tilde{v}}-4+4\pi\sum\limits_{j=1}^{N_1}\delta_{p_j}(x)&\text{in }\mathbb{R}^2\\
\Delta \tilde{v}=2k_{12}e^{\tilde{v}}+2k_{11}e^{\tilde{v}}-4+4\pi\sum\limits_{j=1}^{N_2}\delta_{q_j}(x)&\text{in }\mathbb{R}^2.\\
\tilde{u},\tilde{v}\rightarrow 0\qquad\text{ as }\abs{x}\rightarrow\infty.
\end{cases}
\end{equation}
We now establish the estimate for $\tilde{u}$ and see that a similar argument will provide the estimate for $\tilde{v}$.  Here, we consider a ball, $B_R$, and $\varepsilon>0$ such that $p_j,q_j\in B_{R-\varepsilon}$
We then define a new set, $A_R=\mathbb{R}^2\setminus\overline{B_{R-\varepsilon}}$, and obtain
\begin{equation}
\Delta \tilde{u}=2k_{11}e^{\tilde{u}}+2k_{12}e^{\tilde{v}}-4\qquad\text{in }A_R.
\end{equation}
We then use Kato's inequality  \cite{kato1972schrodinger},
$\Delta \abs{u}\geq\text{sgn}(u)\Delta u\qquad\text{in }\mathcal{D}'(A_R)$, along with the fact that $k_{12}=2-k_{11}$ to get
\begin{equation}
\Delta\abs{\tilde{u}}\geq\text{sgn}(\tilde{u})\Delta \tilde{u}
=4(1-e^{\tilde{v}})+2k_{11}(e^{\tilde{v}}-1)-2k_{11}(e^{\tilde{u}}-1)
\end{equation}
which can be summarized nicely as
\begin{equation}\label{subharm}
\Delta\abs{\tilde{u}}\geq(4-2k_{11})\abs{e^{\tilde{v}}-1}+2k_{11}\abs{e^{\tilde{u}}-1}\qquad\text{in }\mathcal{D}'(A_R).
\end{equation}
Since  $-4\leq\abs{K}<0$, we see that the right hand side of (\ref{subharm}) is positive.   Therefore, $\tilde{u}$ is subharmonic in $A_R$.  In view of this, and using the fact that $\tilde{u}\in L^1(\mathbb{R}^2)$, we obtain for any given $x\in\mathbb{R}^2\setminus B_R$, 
\begin{equation}\label{subharm2}
\abs{\tilde{u}(x)}\leq\frac{1}{\pi r^2}\int\displaylimits_{B_R(y)}\abs{\tilde{u}}dy,\qquad\text{ for all } 0<r\leq\abs{x}-R+\varepsilon.
\end{equation}
If we take $r=\abs{x}-R+\varepsilon$ in (\ref{subharm2}) we obtain,
\begin{equation}
\nonumber\abs{\tilde{u}}\nonumber\leq\frac{1}{\pi(\abs{x}-R+\varepsilon)^2}\int\displaylimits_{B_{\abs{x}-R+\varepsilon}}\abs{\tilde{u}}dy\nonumber\leq\frac{1}{\pi(\abs{x}-R+\varepsilon)^2}\int\displaylimits_{\mathbb{R}^2}\abs{\tilde{u}}dy\leq\frac{C}{\abs{x}^2}.
\end{equation}
We obtain a similar result for $\tilde{v}$.  Therefore, we have established the estimate,
\begin{equation}\label{uvest}
\abs{u(x)+\ln2}+\abs{v(x)+\ln2}\leq\frac{C}{\abs{x}^2}\qquad \text{ for all }x\in\mathbb{R}^2\setminus B_R.
\end{equation}
Next, we define for every $r\geq R$, 
 \begin{equation}\label{phi1}
\Phi(r)=4\min_{\abs{x}=r}\left\{\frac{k_{12}}{2}\frac{\abs{e^{\tilde{v}}-1}}{\abs{\tilde{v}}}+\frac{k_{11}}{2}\frac{\abs{e^{\tilde{u}}-1}}{\abs{\tilde{u}}}\right\}.
\end{equation}
From (\ref{uvest}), we see that both $\tilde{u}(x)$ and $\tilde{v}(x)$ are uniformly bounded on $\mathbb{R}^2\setminus B_R$.  Therefore, it follows from elliptic estimates  that both $\tilde{u}$ and $\tilde{v}$ are continuous.  This implies that $\Phi$ is continuous.  

We now show that $\Phi\rightarrow0$ as $\abs{x}=r\rightarrow\infty$.  To this end, we observe that for $-L\leq s,t\leq L$, 
\begin{equation}
\abs{\frac{k_{12}}{2}\frac{\abs{e^s-1}} {\abs{s}}+\frac{k_{11}}{2}\frac{\abs{e^t-1}}{\abs{t}}-1}
=\abs{\frac{k_{12}}{2}\left(\frac{\abs{e^s-1}}{\abs{s}}-1\right)+\frac{k_{11}}{2}\left(\frac{\abs{e^t-1}}{\abs{t}}-1\right)}\\
\leq C\left(\abs{s}+\abs{t}\right)
\end{equation}
where we have used the convention ${\abs{e^s-1}}/{s}=1$ if $s=0$ and the identity ${k_{11}}/{2}+{k_{12}}/{2}=1.$ It is now clear that
\begin{equation}\label{phib}
\abs{\Phi(r)}\leq C\max_{\abs{x}=r}\left\{\abs{\tilde{u}(x)}+\abs{\tilde{v}(x}\right\}\leq \frac{C}{r^2}.
\end{equation}
From (\ref{phib}), we see that 
\begin{equation}
\lim\limits_{r\rightarrow\infty}\Phi(r)=0\qquad\text{ and }\qquad\int\displaylimits_R^{\infty}\abs{\Phi(r)}dr<\infty.
\end{equation}

Now, we have a function, $\Phi$, that satisfies the conditions of Proposition \ref{phiprop}. Let $w_0$ be the unique radial solution of 
\begin{equation}
\begin{cases}
-\Delta w_0+\left(4+\Phi(x)\right)w_0=0\qquad\text{in }\mathbb{R}^2\setminus \overline{B_R},\\
w_0=M\qquad\text{on }\partial B_R\\
\lim\limits_{\abs{x}\rightarrow\infty}w_0(x)=0.
\end{cases}
\end{equation}
Where
\begin{equation*}
M=\max\left\{M_u,M_v\right\},\quad
M_{u}=\max_{x\in\mathbb{R}^2\setminus B_R}\abs{\tilde{u}(x)},\quad \text{and}\qquad M_{v}=\max_{x\in\mathbb{R}^2\setminus B_R}\abs{\tilde{v}(x)}
\end{equation*}
and $\Phi$ is as defined in (\ref{phi1}).  We make the important observation that
\begin{equation}\label{uvw}
\abs{\tilde{u}},\abs{\tilde{v}}\leq w_0\qquad\text{in }\mathbb{R}^2\setminus B_R. 
\end{equation}
 Now given $\varepsilon>0$, we take $R'>R$ large enough so that 
\begin{equation}
\abs{\tilde{u}}\leq\varepsilon \qquad\text{ for all }x\in\mathbb{R}^2\setminus B_{R'}.
\end{equation}

We let $Z=\abs{\tilde{u}}-w_0-\varepsilon$ and see that by Kato's inequality \cite{kato1972schrodinger} we have
\begin{equation}\label{230}
\begin{cases}
\Delta Z+(4+\Phi)w_0\geq(4-2k_{11})\abs{e^{\tilde{v}}-1}+2k_{11}\abs{e^{\tilde{u}}-1}\qquad\text{in }B_{R'}\setminus \overline{B_R}\\
Z\leq 0\qquad\text{on }\partial B_R\cup\partial B_{R'}.
\end{cases}
\end{equation}
It will be very useful in what follows for us to write (\ref{230}) as
\begin{equation}
\begin{cases}
-\Delta Z-(4+\Phi)w_0+2k_{12}\abs{e^{\tilde{v}}-1}+2k_{11}\abs{e^{\tilde{u}}-1}\leq 0\qquad\text{in }B_{R'}\setminus \overline{B_R}\\
Z\leq 0\qquad\text{on }\partial B_R\cup\partial B_{R'}.
\end{cases}
\end{equation}

We use the fact that $Z=0$ on the boundary, along with integration by parts, to obtain
\begin{equation}
-\int\displaylimits_{B_{R'}\setminus\overline{B_R}}Z\Delta\eta dx\leq\int\displaylimits_{B_{R'}\setminus\overline{B_R}}\left\{(4+\Phi)w_0-2k_{12}\abs{e^{\tilde{v}}-1}-2k_{11}\abs{e^{\tilde{u}}-1}\right\}\eta dx
\end{equation}
for every $\eta\in C_0^2(\overline{B_{R'}}\setminus B_R)$ with $\eta\geq 0$ in $B_{R'}\setminus\overline{B_R}$. Now by Proposition B.5 in \cite{brezis2007nonlinear}, which is a variant of Kato's inequality for functions without compact support, we see that
\begin{align}
-\int\displaylimits_{B_{R'}\setminus\overline{B_R}} Z^+\Delta\eta dx&\nonumber\leq\int\displaylimits_{[\abs{u}\geq w_0+\varepsilon]}\left\{(4+\Phi)w_0-2k_{12}\abs{e^{\tilde{v}}-1}-2k_{11}\abs{e^{\tilde{u}}-1}\right\}\eta dx\\
&\nonumber\leq\int\displaylimits_{[\abs{u}\geq w_0+\varepsilon]}\left\{(4+\Phi)-2k_{12}\frac{\abs{e^{\tilde{v}}-1}}{\abs{\tilde{v}}}-2k_{11}\frac{\abs{e^{\tilde{u}}-1}}{\abs{\tilde{u}}}\right\}w_0\eta dx\\&\leq 0
\end{align}

We note that $w_0$ and $\eta$ are both nonnegative.  Also, the term inside of the brackets is nonpositive.  Therefore, we conclude that $Z^+=\max(0,Z)\leq 0$ which implies that $Z\leq 0$..  From this we see that
\begin{equation}\label{uw1}
\abs{\tilde{u}}\leq w_0+\varepsilon\qquad\text{in }B_{R'}\setminus\overline{B_R},
\end{equation}
and take the limit as $R'\rightarrow\infty$ to see that $B_{R'}\rightarrow\mathbb{R}^2$. Then (\ref{uw1}) becomes
\begin{equation}\label{uw2}
\abs{\tilde{u}}\leq w_0+\varepsilon\qquad\text{in }\mathbb{R}^2\setminus\overline{B_R}.
\end{equation}
Finally, we use Proposition \ref{phiprop} with $W_0={e^{-2t}}{t^{-\frac{1}{2}}}$ 
along with the bound in (\ref{uvw}) to obtain
\begin{equation}
\abs{\tilde{u}(x)}\leq \frac{C}{2}\frac{e^{-2\abs{x}}}{\abs{x}^{\frac{1}{2}}}\qquad\text{and}\qquad
\abs{\tilde{v}(x)}\leq \frac{C}{2}\frac{e^{-2\abs{x}}}{\abs{x}^{\frac{1}{2}}}.
\end{equation}
Therefore, we have obtained

\begin{equation}
\abs{\tilde{u}(x)}+\abs{\tilde{v}(x)}\leq C\frac{e^{-2\abs{x}}}{\abs{x}^{\frac{1}{2}}},
\end{equation}
which can be written as
\begin{equation}
\abs{u(x)+\ln2}+\abs{v(x)+\ln2}\leq C\frac{e^{-2\abs{x}}}{\abs{x}^{\frac{1}{2}}}.
\end{equation}

We have thus established the first result in Theorem \ref{asym}.  To establish the second result , we use a result developed in \cite{bethuel1993asymptotics,lin2007system}.
\begin{lem}\label{lemlem}
Let $u,f\in L^{\infty}(B_1)$ be such that
\begin{equation*}
-\Delta u=f\qquad\text{in }\mathcal{D}'(B_1).
\end{equation*}
Then,
\begin{equation}
\|\nabla u\|_{L^{\infty}\left(B_{1/2}\right)}^2\leq C(\|u\|_{L^{\infty}(B_1)}+\|f\|_{L^{\infty}(B_1)})\|u\|_{L^{\infty}(B_1)}.
\end{equation}

\end{lem}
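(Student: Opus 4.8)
The plan is to prove this by the classical device of splitting $u$, on a small ball centered at the point where we want to control $\nabla u$, into the Newtonian potential of $f$ plus a harmonic correction, and then optimizing over the radius of that ball. First I would observe that, since $-\Delta u = f$ with $f\in L^\infty(B_1)$, Calder\'on--Zygmund theory gives $u\in W^{2,p}_{\mathrm{loc}}(B_1)$ for every $p<\infty$, hence $u\in C^{1,\alpha}_{\mathrm{loc}}(B_1)$, so $\nabla u$ has well-defined pointwise values and it suffices to bound $|\nabla u(x_0)|$ for an arbitrary $x_0\in B_{1/2}$. If $\|u\|_{L^\infty(B_1)}=0$ then $u\equiv 0$ and both sides of the inequality vanish, so we may assume $\|u\|_{L^\infty(B_1)}>0$.

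Fix $x_0\in B_{1/2}$ and $r\in(0,\tfrac12]$, so that $B_r(x_0)\subset B_1$. Write $u=v_r+w_r$ in $B_r(x_0)$, where $v_r$ solves $-\Delta v_r=f$ in $B_r(x_0)$ with $v_r=0$ on $\partial B_r(x_0)$, i.e. $v_r(x)=\int_{B_r(x_0)}G_{B_r(x_0)}(x,y)f(y)\,dy$, and $w_r=u-v_r$ is harmonic in $B_r(x_0)$. Using the scaling $G_{B_r(x_0)}(x,y)=G_{B_1(0)}\!\big((x-x_0)/r,(y-x_0)/r\big)$ in two dimensions together with the uniform bounds $\int_{B_1}|G_{B_1}(z,\cdot)|<\infty$ and $\int_{B_1}|\nabla_z G_{B_1}(z,\cdot)|<\infty$, one gets $\|v_r\|_{L^\infty(B_r(x_0))}\le C r^2\|f\|_{L^\infty(B_1)}$ and $|\nabla v_r(x_0)|\le \big(\int_{B_r(x_0)}|\nabla_x G_{B_r(x_0)}(x_0,y)|\,dy\big)\|f\|_{L^\infty(B_1)}\le C r\|f\|_{L^\infty(B_1)}$. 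For the harmonic part, the interior derivative estimate for harmonic functions at the center gives $|\nabla w_r(x_0)|\le \tfrac{C}{r}\|w_r\|_{L^\infty(B_r(x_0))}\le \tfrac{C}{r}\big(\|u\|_{L^\infty(B_1)}+\|v_r\|_{L^\infty(B_r(x_0))}\big)\le \tfrac{C}{r}\|u\|_{L^\infty(B_1)}+Cr\|f\|_{L^\infty(B_1)}$. Adding the two contributions yields
\[
|\nabla u(x_0)|\le C\Big(r\,\|f\|_{L^\infty(B_1)}+\tfrac1r\,\|u\|_{L^\infty(B_1)}\Big).
\]

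It remains to optimize in $r$. Set $r^2=\tfrac14\,\|u\|_{L^\infty(B_1)}\big/\big(\|u\|_{L^\infty(B_1)}+\|f\|_{L^\infty(B_1)}\big)\in(0,\tfrac14]$, an admissible choice since then $r\le\tfrac12$. Squaring the displayed bound and inserting this $r$, both resulting terms are controlled by $C\,\|u\|_{L^\infty(B_1)}\big(\|u\|_{L^\infty(B_1)}+\|f\|_{L^\infty(B_1)}\big)$, so that
\[
|\nabla u(x_0)|^2\le C\big(\|u\|_{L^\infty(B_1)}+\|f\|_{L^\infty(B_1)}\big)\|u\|_{L^\infty(B_1)},
\]
and taking the supremum over $x_0\in B_{1/2}$ completes the proof. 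The only mildly delicate point --- the main obstacle --- is the bookkeeping of the $r$-dependence of the Green's function estimates on $B_r(x_0)$ (the $\sim r^2$ sup bound and the $\sim r$ gradient bound for the potential), which follows from the two-dimensional scaling of $G_{B_r}$ plus the integrability of $|\nabla_x G_{B_1}|$; everything else is a routine assembly of the interior estimate for harmonic functions and the radius optimization. Alternatively, the same conclusion can be reached by a rescaling argument starting from the plain interior bound $\|\nabla u\|_{L^\infty(B_{1/2})}\le C(\|u\|_{L^\infty(B_1)}+\|f\|_{L^\infty(B_1)})$, but the direct splitting above is the most transparent route and is the one already used in \cite{bethuel1993asymptotics,lin2007system}.
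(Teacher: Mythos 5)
Your proof is correct. The paper itself does not prove Lemma \ref{lemlem} --- it simply quotes it from \cite{bethuel1993asymptotics,lin2007system} --- and your argument (splitting $u$ on $B_r(x_0)$ into the Green potential of $f$ plus a harmonic remainder, using the scaled bounds $\|v_r\|_{L^\infty}\leq Cr^2\|f\|_{L^\infty}$, $\abs{\nabla v_r(x_0)}\leq Cr\|f\|_{L^\infty}$ and the interior gradient estimate for the harmonic part, then optimizing the radius $r$) is precisely the standard proof given in those references, so there is nothing to add.
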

We apply Lemma \ref{lemlem} to $\tilde{u}$ and $\tilde{v}$ on balls $B_1(x)$ for $\abs{x}\geq R+1$ to obtain
\begin{equation}
\abs{\nabla \tilde{u}(x)}+\abs{\nabla \tilde{v}(x)}\leq C\frac{e^{-2(\abs{x}-1)}}{(\abs{x}-1)^{\frac{1}{2}}}\leq C\frac{e^{-2\abs{x}}}{\abs{x}^{\frac{1}{2}}},
\end{equation}
from which we obtain
\begin{equation}
\abs{\nabla u(x)}+\abs{\nabla v(x)}\leq C\frac{e^{-2\abs{x}}}{\abs{x}^{\frac{1}{2}}}.
\end{equation}
The proof of Theorem \ref{asym} is complete.

\end{proof}

\section{Solutions over a doubly-periodic domain}
In this section, we establish existence of solutions to (\ref{maineq}) for any $\abs{K}<0$ over a doubly-periodic domain, denoted by $\Omega$, known as the 2-torus.  In what follows, we will make the identification of the quotient space $\mathbb{R}^2/\mathbb{Z}^2$ with $\Omega$.  In other words, we may consider $\Omega$ to be a doubly-periodic grid in $\mathbb{R}^2$. With this domain in mind, and for some  numbers $\tau_1,\tau_2>0$, the system given by (\ref{maineq}) will have boundary conditions $u(x_1+\tau_1,x_2+\tau_2)=u(x_1,x_2),\quad v(x_1+\tau_1,x_2+\tau_2)=v(x_1,x_2).$

Again, we regularize our system.  As in  \cite{aubin1982nonlinear,medina}, we find two functions $u_0$ and $v_0$ which are unique up to an additive constant such that
\begin{equation}
\Delta u_0=-\frac{4\pi N_1}{\abs{\Omega}}+4\pi\sum\limits_{j=1}^{N_1}\delta_{p_j}\qquad\text{and}\qquad
\Delta v_0=-\frac{4\pi N_2}{\abs{\Omega}}+4\pi\sum\limits_{j=1}^{N_2}\delta_{q_j}.
\end{equation}
We let $u$ and $v$ be solutions to (\ref{maineq}) over $\Omega$ and write $u_1=u-u_0$, $v_1=v-v_0$.  Then, we transform (\ref{maineq}) into
\begin{equation}\label{doubleEQ1}
\begin{cases}
\Delta u_1=4k_{11}e^{u_0+u_1}+4k_{12}e^{v_0+v_1}-4+\frac{4\pi N_1}{\abs{\Omega}}&\text{in }\Omega\\
\Delta v_1=4k_{12}e^{u_0+u_1}+4k_{11}e^{v_0+v_1}-4+\frac{4\pi N_2}{\abs{\Omega}}&\text{in }\Omega\\
u_1(x_1+\tau_1,x_2+\tau_2)=u_1(x_1,x_2),\quad v_1(x_1+\tau_1,x_2+\tau_2)=v_1(x_1,x_2)
\end{cases}
\end{equation}

The work in the rest of this section is as follows.  First, we establish necessary conditions for existence of solutions to (\ref{doubleEQ1}).  Then, we transform the system so that we may establish a variational principle.  We will see that the functional obtained will be indefinite.  As a result, we will have to use the method of partial coercivity to obtain a critical point of the functional.
\subsection{Necessary conditions}
In this section, we will establish the necessary conditions for existence of solutions to (\ref{doubleEQ1}).  We first integrate both of the equations in (\ref{doubleEQ1}) over the doubly-periodic domain, $\Omega$  to obtain

\begin{equation}\label{nec1}
\begin{cases}
k_{11}\displaystyle\int\displaylimits_{\Omega}e^{u_0+u_1}dx+k_{12}\int\displaylimits_{\Omega}e^{v_0+v_1}dx=\abs{\Omega}-{\pi N_1}\\
k_{12}\displaystyle\int\displaylimits_{\Omega}e^{u_0+u_1}dx+k_{11}\int\displaylimits_{\Omega}e^{v_0+v_1}dx=\abs{\Omega}-{\pi N_2}.
\end{cases}
\end{equation}
Then, we use  $\abs{K}=4\frac{q}{p}$ and solve for $\int\displaylimits_{\Omega}e^{u_0+u_1}dx$ and $\int\displaylimits_{\Omega}e^{v_0+v_1}dx$ in (\ref{nec1}).  To this end we obtain
\begin{equation}\label{eu1}
\int\displaylimits_{\Omega}e^{u_0+u_1}dx=\frac{\abs{\Omega}}{2}-\frac{\pi}{4}\left[\left(N_1-N_2\right)\frac{p}{q}+\left(N_1+N_2\right)\right]
\end{equation}
and
\begin{equation}\label{ev1}
\int\displaylimits_{\Omega}e^{v_0+v_1}dx=\frac{\abs{\Omega}}{2}-\frac{\pi}{4}\left[\left(N_2-N_1\right)\frac{p}{q}+\left(N_1+N_2\right)\right].
\end{equation}
We let $N_-=N_1-N_2$ and $N_+=N_1+N_2$ and recall that since $\abs{K}<0$ and $p>0$ in the derivation of the system, we have that $q<0$.  We also see that the left hand sides of (\ref{eu1}) and (\ref{ev1}) are positive.  With this in mind, we are able to obtain the following necessary requirement for the solutions to (\ref{doubleEQ1}) to exist and the first part of Theorem \ref{necessary} has been established.

%

\subsection{Partial Coercivity and Minimization}
We let $\xi=u_1+v_1$ and $\zeta=u_1-v_1$, and we are able to rewrite the system given by (\ref{doubleEQ1}) as
\begin{equation}\label{doubleEQQ}
\begin{cases}
\Delta\xi=8e^{u_0+\frac{1}{2}(\xi+\zeta)}+8e^{v_0+\frac{1}{2}(\xi-\zeta)}-8+\frac{4\pi N_+}{\abs{\Omega}}\\
\Delta\zeta=2\abs{K}e^{u_0+\frac{1}{2}(\xi+\zeta)}-2\abs{K}e^{v_0+\frac{1}{2}(\xi-\zeta)}+\frac{4\pi N_-}{\abs{\Omega}}.
\end{cases}
\end{equation}
As we did when the system was considered over bounded domains, we multiply the first equation by ${\abs{K}}/{4}$ to obtain
\begin{equation}\label{doubleEQ2}
\begin{cases}
\frac{\abs{K}}{4}\Delta\xi=2\abs{K}e^{u_0+\frac{1}{2}(\xi+\zeta)}+2\abs{K}e^{v_0+\frac{1}{2}(\xi-\zeta)}-2\abs{K}+\frac{\pi\abs{K} N_+}{\abs{\Omega}}\\
\Delta\zeta=2\abs{K}e^{u_0+\frac{1}{2}(\xi+\zeta)}-2\abs{K}e^{v_0+\frac{1}{2}(\xi-\zeta)}+\frac{4\pi N_-}{\abs{\Omega}}.
\end{cases}
\end{equation}
We see that the equations in (\ref{doubleEQ2}) are the Euler-Lagrange equations of the functional
\begin{multline}\label{idouble}
I(\xi,\zeta)=\int\displaylimits_{\Omega}\bigg[\frac{\abs{K}}{8}\abs{\nabla\xi}^2+\frac{1}{2}\abs{\nabla\zeta}^2+4\abs{K}e^{u_0+\frac{1}{2}(\xi+\zeta)}+4\abs{K}e^{v_0+\frac{1}{2}(\xi-\zeta)}\\-2\abs{K}\xi+\frac{\pi\abs{K}N_+}{\Omega}\xi+\frac{4\pi N_-}{\abs{\Omega}}\zeta \bigg]dx.
\end{multline}
We then consider the following minimization problem,
\begin{equation}\label{doublemin}
\min\left\{I(\xi,\zeta)|(\xi,\zeta)\in\mathcal{A}\right\},
\end{equation}
where the admissible class, $\mathcal{A}$, is defined by
\begin{equation}\label{doubleadmis}
\mathcal{A}=\left\{(\xi,\zeta)|\xi,\zeta\in W^{1,2}(\Omega)\text{ and }\xi,\zeta\text{ satisfy (ii)}\right\}.
\end{equation}
\begin{align*}
\text{(ii)}\quad&\int\displaylimits_{\Omega}\bigg[\nabla\xi\cdot\nabla w+8we^{u_0+
\frac{1}{2}(\xi+\zeta)}+8we^{v_0+\frac{1}{2}(\xi-\zeta)}-8w+\frac{4\pi N_+}{\abs{\Omega}}w\bigg]dx=0,\\
&\forall w\in W^{1,2}(\Omega).
\end{align*}
Before establishing existence of a solution to the minimization problem (\ref{doublemin}), we observe that upon integration over the doubly-periodic domain, $\Omega$, the equations in (\ref{doubleEQ2}) yield the natural constraints
\begin{equation}\label{doublecon1}
\int\displaylimits_{\Omega}e^{u_0+\frac{1}{2}(\xi+\zeta)}dx=\frac{\abs{\Omega}}{2}-\frac{\pi}{4}\left(\frac{p}{q}N_-+N_+\right)=\alpha
\end{equation}
and
\begin{equation}\label{doublecon2}
\int\displaylimits_{\Omega}e^{v_0+\frac{1}{2}(\xi-\zeta)}dx=\frac{\abs{\Omega}}{2}-\frac{\pi}{4}\left(-\frac{p}{q}N_-+N_+\right)=\beta.
\end{equation}
In order for the constraints to make sense, we see that we must have $\alpha,\beta>0$.  That is, the size of the domain must satisfy, 
\begin{equation}
\abs{\Omega}>\frac{\pi}{2}\left(\abs{\frac{p}{q}}\abs{N_-}+N_+\right),
\end{equation}
and the sufficient condition in Lemma \ref{necessary} has been established.  

Now, we let $\mathscr{H}=W^{1,2}(\Omega)$ and decompose the space as $\mathscr{H}=\tilde{\mathscr{H}}\oplus\mathbb{R}$ where $\tilde{\mathscr{H}}$ is the closed subspace given by
\begin{equation}
\tilde{\mathscr{H}}=\left\{u\in\mathscr{H}\Big|\int\displaylimits_{\Omega}udx=0\right\}.
\end{equation}
With this decomposition, we are able to write $\xi=\tilde{\xi}+\bar{\xi}$ and $\zeta=\tilde{\zeta}+\bar{\zeta}$ where $\tilde{\xi},\tilde{\zeta}\in\tilde{\mathscr{H}}$ and $\bar{\xi},\bar{\zeta}\in\mathbb{R}$.  In this view, we rewrite (\ref{doublecon1}) and (\ref{doublecon2}) as
\begin{equation}\label{doublecon3}
\int\displaylimits_{\Omega}e^{u_0+\frac{1}{2}(\tilde{\xi}+\tilde{\zeta})+\frac{1}{2}\bar{\xi}+\frac{1}{2}\bar{\zeta}}dx=\alpha\qquad\text{and}\qquad
\int\displaylimits_{\Omega}e^{v_0+\frac{1}{2}(\tilde{\xi}-\tilde{\zeta})+\frac{1}{2}\bar{\xi}-\frac{1}{2}\bar{\zeta}}dx=\beta.
\end{equation}
From (\ref{doublecon3}), we see that
\begin{align}\label{doublecon4}
e^{\frac{1}{2}\bar{\xi}+\frac{1}{2}\bar{\zeta}}=\alpha\left(\int\displaylimits_{\Omega}e^{u_0+\frac{1}{2}(\tilde{\xi}+\tilde{\zeta})}dx \right)^{-1}\qquad\text{and}\qquad
e^{\frac{1}{2}\bar{\xi}-\frac{1}{2}\bar{\zeta}}=\beta\left(\int\displaylimits_{\Omega}e^{v_0+\frac{1}{2}(\tilde{\xi}-\tilde{\zeta})}dx \right)^{-1}.
\end{align}
Upon solving for $\bar{\xi}$ and $\bar{\zeta}$ in (\ref{doublecon4}), we obtain
\begin{align}
\begin{split}
\bar{\xi}&=\ln\left(\alpha\beta\right)-\ln\left( \int\displaylimits_{\Omega}e^{u_0+\frac{1}{2}(\tilde{\xi}+\tilde{\zeta})}dx\right) -\ln\left( \int\displaylimits_{\Omega}e^{v_0+\frac{1}{2}(\tilde{\xi}-\tilde{\zeta})}dx \right)\\
\bar{\zeta}&=\ln\left(\frac{\alpha}{\beta}\right)+\ln\left( \int\displaylimits_{\Omega}e^{u_0+\frac{1}{2}(\tilde{\xi}+\tilde{\zeta})}dx \right) -\ln\left(\int\displaylimits_{\Omega}e^{v_0+\frac{1}{2}(\tilde{\xi}-\tilde{\zeta})}dx \right).
\end{split}
\end{align}
We then use the Trudinger-Moser inequality, along with Jensen's inequality, to obtain the following estimates
\begin{equation}\label{alphabetabar}
-\ln\left(\alpha\beta\right)-C_1\leq \bar{\xi}\leq \ln\left(\alpha\beta\right)+C_2\qquad\text{and}\qquad
-\ln\left(\frac{\alpha}{\beta}\right)-C_1'\leq \bar{\zeta}\leq \ln\left(\frac{\alpha}{\beta}\right)+C_2'.
\end{equation}
Since the functional given by (\ref{idouble}) is indefinite, we will see it necessary to separate the functional into the two parts denoted below
\begin{equation}\label{doubleij}
I(\xi,\zeta)=\int\displaylimits_{\Omega}\bigg[\frac{1}{2}\abs{\nabla\zeta}^2+\frac{4\pi N_-}{\abs{\Omega}}\zeta \bigg]dx- J_{\zeta}(\xi),
\end{equation}
where
\begin{equation}\label{jzeta}
J_{\zeta}(\xi)=\int\displaylimits_{\Omega}\bigg[-\frac{\abs{K}}{8}\abs{\nabla\xi}^2-4\abs{K}e^{u_0+\frac{1}{2}(\xi+\zeta)}-4\abs{K}e^{v_0+\frac{1}{2}(\xi-\zeta)}+2\abs{K}\xi-\frac{\pi\abs{K}N_+}{\Omega}\xi\bigg] dx.
\end{equation}
We will now show that the functional $J_{\zeta}(\xi)$ is coercive, bounded below, and weakly lower semicontinuous.
\begin{lem}\label{jminlem}
Definition (ii) is well posed.  In other words, for any $\zeta\in W^{1,2}(\Omega)$, with $\Omega$ being a doubly-periodic domain, there exists a unique $\xi\in W^{1,2}(\Omega)$ satisfying (ii).  Furthermore, $\xi$ is the global minimizer of the functional $J_{\zeta}(\xi)$ given by (\ref{jzeta}).
\end{lem}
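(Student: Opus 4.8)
The plan is to obtain $\xi$ by the direct method applied to $J_\zeta$ over all of $W^{1,2}(\Omega)$; condition (ii) will then reappear as the Euler--Lagrange equation of $J_\zeta$. The one genuinely new feature compared with the bounded-domain lemma of Section 3 is the presence of the constant (zero) mode, which I treat via the splitting $\mathscr{H}=\tilde{\mathscr{H}}\oplus\mathbb{R}$ already introduced: write $\xi=\tilde\xi+\bar\xi$ with $\int_\Omega\tilde\xi\,dx=0$. Since $\nabla\xi=\nabla\tilde\xi$ and $\int_\Omega\tilde\xi\,dx=0$, the functional \eqref{jzeta} becomes
\[
J_\zeta(\xi)=-\frac{\abs{K}}{8}\|\nabla\tilde\xi\|_{L^2(\Omega)}^2-4\abs{K}\,e^{\bar\xi/2}\left(\int_\Omega e^{u_0+\frac12(\tilde\xi+\zeta)}dx+\int_\Omega e^{v_0+\frac12(\tilde\xi-\zeta)}dx\right)+\abs{K}\Bigl(2-\frac{\pi N_+}{\abs{\Omega}}\Bigr)\abs{\Omega}\,\bar\xi ,
\]
and I recall that $\abs{K}<0$, so the coefficients $-\abs{K}/8$ and $-4\abs{K}$ are strictly positive and (by $e^{u_0},e^{v_0}\in L^\infty$ and the Trudinger--Moser inequality) $J_\zeta$ is finite on $W^{1,2}(\Omega)$.

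\textbf{Boundedness below and coercivity.} The gradient term is a positive multiple of $\|\nabla\tilde\xi\|_{L^2(\Omega)}^2$ and the exponential term is nonnegative; by Jensen's inequality together with $\int_\Omega\tilde\xi\,dx=0$ it is in fact bounded below by a constant $\gamma(\zeta)>0$ independent of $\tilde\xi$. The necessary condition derived above, $\abs{\Omega}>\frac{\pi}{2}\bigl(\abs{p/q}\abs{N_-}+N_+\bigr)\ge\frac{\pi}{2}N_+$, gives $2-\pi N_+/\abs{\Omega}>0$, so (using $\abs{K}<0$) the coefficient of $\bar\xi$ equals $-c$ for some $c>0$. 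Hence
\[
J_\zeta(\xi)\ \ge\ -\frac{\abs{K}}{8}\|\nabla\tilde\xi\|_{L^2(\Omega)}^2+\Bigl(-4\abs{K}\,\gamma(\zeta)\,e^{\bar\xi/2}-c\,\bar\xi\Bigr),
\]
and the function of $\bar\xi$ in parentheses is continuous on $\mathbb{R}$ and tends to $+\infty$ as $\bar\xi\to\pm\infty$, hence is bounded below and coercive in $\bar\xi$. Combined with the Poincar\'e--Wirtinger inequality $\|\tilde\xi\|_{L^2(\Omega)}\le C\|\nabla\tilde\xi\|_{L^2(\Omega)}$ on $\tilde{\mathscr{H}}$ and the identity $\|\xi\|_{W^{1,2}(\Omega)}^2=\|\tilde\xi\|_{L^2(\Omega)}^2+\abs{\Omega}\,\bar\xi^2+\|\nabla\tilde\xi\|_{L^2(\Omega)}^2$, this shows $J_\zeta$ is bounded below and coercive on $W^{1,2}(\Omega)$.

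\textbf{Existence, uniqueness, and identification with (ii).} If $\xi_n\rightharpoonup\xi$ weakly in $W^{1,2}(\Omega)$, the gradient term is weakly lower semicontinuous and the linear term is weakly continuous; for the exponential terms I would use the compact embedding $W^{1,2}(\Omega)\hookrightarrow\hookrightarrow L^2(\Omega)$ on the torus (giving a.e.\ convergence along a subsequence) together with the Trudinger--Moser inequality \eqref{doubleTM} applied to $2\bigl(u_0+\tfrac12\xi_n+\tfrac12\zeta\bigr)$, after factoring out the bounded weight $e^{2u_0}$ and using H\"older's inequality, to get a uniform $L^2(\Omega)$-bound and hence equi-integrability of $e^{u_0+\frac12(\xi_n+\zeta)}$; Vitali's theorem then yields convergence of the exponential integrals. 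Thus $J_\zeta$ is weakly lower semicontinuous, and the direct method produces a minimizer $\xi\in W^{1,2}(\Omega)$. Uniqueness follows from convexity: the gradient term is convex, the exponential terms are convex and enter with the positive coefficient $-4\abs{K}$, and the remaining term is affine, so $J_\zeta$ is convex, and if it had two distinct minimizers the strict convexity of $t\mapsto e^{u_0+\frac12(\xi+\zeta)}$ along the connecting segment would contradict minimality. Finally, the Trudinger--Moser inequality makes the exponential functionals $C^1$ on $W^{1,2}(\Omega)$, so $J_\zeta\in C^1$ and the minimizer satisfies $\langle J_\zeta'(\xi),w\rangle=0$ for every $w\in W^{1,2}(\Omega)$; computing this first variation and multiplying through by $-4/\abs{K}$ reproduces (ii) verbatim. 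Conversely, any $\xi$ satisfying (ii) is a critical point of the strictly convex $J_\zeta$, hence the unique minimizer, which gives well-posedness of (ii).

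\textbf{Main obstacle.} Unlike the bounded-domain case, $W^{1,2}(\Omega)$ on the torus contains the constants, so the gradient term alone provides no coercivity; coercivity must be extracted from the interplay between the exponential terms (blowing up as $\bar\xi\to+\infty$) and the linear term, whose sign is favorable precisely because of the necessary condition $\abs{\Omega}>\frac{\pi}{2}N_+$. Producing a lower bound that controls the \emph{full} $W^{1,2}$-norm, in particular the $\bar\xi^2$ contribution, is the crux of the argument; the secondary technical point is the weak continuity of the exponential functionals, which rests on Trudinger--Moser together with Rellich compactness on the torus.
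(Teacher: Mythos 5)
Your argument is correct, and its overall skeleton matches the paper's: split $\xi=\tilde\xi+\bar\xi$ in $\mathscr{H}=\tilde{\mathscr{H}}\oplus\mathbb{R}$, run the direct method on $J_\zeta$, get uniqueness from convexity (strict because of the exponential terms), and recover (ii) as the first variation of $J_\zeta$ multiplied by $-4/\abs{K}$. The genuine difference is in the crux step, control of the zero mode. The paper bounds $J_\zeta$ below by discarding the positive exponential terms, kills the $\tilde\xi$ part of the linear term using $\int_\Omega\tilde\xi\,dx=0$, and then declares $\bar\xi$ bounded by invoking the a priori estimates (\ref{alphabetabar}), which were derived from the natural constraints (\ref{doublecon1})--(\ref{doublecon2}); this yields the lower bound (\ref{jzetamin4}), coercive only in $\|\nabla\xi\|_{L^2(\Omega)}$. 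You instead keep the exponential terms, factor out $e^{\bar\xi/2}$, bound the remaining integrals below via Jensen and $\int_\Omega\tilde\xi\,dx=0$, and extract coercivity in $\bar\xi$ from the competition between $-4\abs{K}\gamma(\zeta)e^{\bar\xi/2}$ and the linear term $-c\,\bar\xi$, whose favorable sign is exactly the domain-size condition $\abs{\Omega}>\frac{\pi}{2}\left(\abs{\frac{p}{q}}\abs{N_-}+N_+\right)$. What your route buys: it is self-contained on all of $W^{1,2}(\Omega)$, does not presuppose that arbitrary competitors in the minimization satisfy the integral constraints (which is what the appeal to (\ref{alphabetabar}) implicitly assumes), gives coercivity in the full $W^{1,2}$-norm including the $\bar\xi$ direction, and shows precisely where the necessary condition on $\abs{\Omega}$ enters. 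What the paper's route buys is brevity: once (\ref{alphabetabar}) is granted, the lower bound is immediate. Your remaining points are sound, though you could shorten them: weak lower semicontinuity of the exponential terms follows already from Rellich plus Fatou, since their coefficient $-4\abs{K}$ is positive (the Vitali/Trudinger--Moser argument is only needed for continuity, e.g.\ in the later admissibility lemma), and your converse observation that any solution of (ii) is a critical point of the strictly convex $J_\zeta$, hence the unique global minimizer, is exactly what makes (ii) well posed.
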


\begin{proof}
We first use the decomposition of $\xi=\tilde{\xi}+\bar{\xi}$ and $\zeta=\tilde{\zeta}+\bar{\zeta}$ in (\ref{jzeta}) to obtain
\begin{multline}\label{jzetamin1}
J_{\zeta}(\xi)=-\frac{\abs{K}}{8}\|\nabla\xi\|_{L^2(\Omega)}^2-4\abs{K}e^{\frac{\bar{\xi}}{2}}e^{\frac{\bar{\zeta}}{2}}\int\displaylimits_{\Omega}e^{u_0+\frac{1}{2}(\tilde{\xi}+\tilde{\zeta})}dx-4\abs{K}e^{\frac{\bar{\xi}}{2}}e^{\frac{\bar{\zeta}}{2}}\int\displaylimits_{\Omega}e^{v_0+\frac{1}{2}(\tilde{\xi}-\tilde{\zeta})}dx+2\abs{K}\int\displaylimits_{\Omega}\tilde{\xi} dx\\-\frac{\pi\abs{K}N_+}{\abs{\Omega}}\int\displaylimits_{\Omega}\tilde{\xi}dx+2\abs{K}\abs{\Omega}\bar{\xi}-\pi\abs{K}N_+\bar{\xi}.
\end{multline}
We now use the fact that $-4\abs{K}>0$ and the exponential terms in (\ref{jzetamin1}) are positive to obtain,
\begin{equation}\label{jzetamin2}
J_{\zeta}(\xi)\geq-\frac{\abs{K}}{8}\|\nabla\xi\|_{L^2(\Omega)}^2+\left(2\abs{K}-\frac{\pi\abs{K}N_+}{\abs{\Omega}}\right)\int\displaylimits_{\Omega}\tilde{\xi} dx+\left(2\abs{K}\abs{\Omega}-\pi\abs{K}N_+\right)\bar{\xi}.
\end{equation}
From the estimates obtained in (\ref{alphabetabar}), we know that the term $\bar{\xi}$ is bounded.  Therefore, we may further estimate the inequality in (\ref{jzetamin2}) to be
\begin{equation}\label{jzetamin3}
J_{\zeta}(\xi)\geq-\frac{\abs{K}}{8}\|\nabla\xi\|_{L^2(\Omega)}^2+\left(2\abs{K}-\frac{\pi\abs{K}N_+}{\abs{\Omega}}\right)\int\displaylimits_{\Omega}\tilde{\xi} dx-C(\bar{\xi},\abs{K},\abs{\Omega},N_+),
\end{equation}
where $C(\bar{\xi},\abs{K},\abs{\Omega},N_+)>0.$  Finally, we recall that $\tilde{\xi}$ satisfies $\int\displaylimits_{\Omega}\tilde{\xi}dx=0$, so that from (\ref{jzetamin3}) we obtain
\begin{equation}\label{jzetamin4}
J_{\zeta}(\xi)\geq-\frac{\abs{K}}{8}\|\nabla\xi\|_{L^2(\Omega)}^2-C(\bar{\xi},\abs{K},\abs{\Omega},N_+).
\end{equation}
Since $\abs{K}<0$, we see that the coefficient of $\|\nabla\xi\|$ is positive.  Therefore, we have established that $J_{\zeta}(\xi)$ is bounded below and coercive. We also note that it is easy to see that $J_{\zeta}(\xi)$ is weakly lower semicontinuous.

Therefore, for any $\zeta\in W^{1,2}(\Omega)$, we find a minimizer, $\xi$ of $J_{\zeta}(\cdot)$.  By the convexity of $J_{\zeta}(\cdot)$ in $\xi$, we see that the minimizer, $\xi$, is unique.  Moreover, since the first variation of the functional (\ref{jzeta}) is actually the left-hand side condition (ii), we see that condition (ii) is valid.
\end{proof}

\begin{thm}
The functional, $I(\xi,\zeta)$ given by (\ref{idouble}) is partially coercive.
\end{thm}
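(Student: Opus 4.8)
The plan is to exploit the splitting (\ref{doubleij}), which isolates the indefinite part of $I$ as $-J_\zeta(\xi)$ and ``freezes'' it along the graph $\xi=\xi(\zeta)$, where $\xi(\zeta)\in W^{1,2}(\Omega)$ is the unique minimizer of $J_\zeta$ furnished by Lemma \ref{jminlem}. Partial coercivity will mean that
\begin{equation*}
I(\xi(\zeta),\zeta)=\tfrac12\|\nabla\zeta\|_{L^2(\Omega)}^2+\tfrac{4\pi N_-}{\abs{\Omega}}\int_\Omega\zeta\,dx-J_\zeta(\xi(\zeta))\longrightarrow+\infty\qquad\text{as}\qquad\|\zeta\|_{W^{1,2}(\Omega)}\to\infty.
\end{equation*}
The first (and easy) step is to control the linear term. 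Decomposing $\zeta=\tilde\zeta+\bar\zeta$ with $\tilde\zeta\in\tilde{\mathscr{H}}$ and $\bar\zeta\in\mathbb{R}$ gives $\tfrac{4\pi N_-}{\abs{\Omega}}\int_\Omega\zeta\,dx=4\pi N_-\bar\zeta$, which by the two-sided bound (\ref{alphabetabar}) on $\bar\zeta$ is a bounded quantity; moreover, once $\bar\zeta$ is controlled, the Poincar\'e inequality on $\tilde{\mathscr{H}}$ yields $\|\zeta\|_{W^{1,2}(\Omega)}^2\leq C\|\nabla\zeta\|_{L^2(\Omega)}^2+C'$. Thus it suffices to establish a lower bound of the form $I(\xi(\zeta),\zeta)\geq \tfrac12\|\nabla\zeta\|_{L^2(\Omega)}^2-C''$, i.e.\ a uniform upper bound $J_\zeta(\xi(\zeta))\leq C$.

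For that uniform bound I would combine two observations. Since $\Omega$ is a torus, the constant $w\equiv1$ is admissible in condition (ii), which the minimizer $\xi(\zeta)$ satisfies; inserting $w\equiv1$ collapses the two exponential terms in (\ref{jzeta}) to the fixed value $-4\abs{K}(\alpha+\beta)$, where $\alpha+\beta=\abs{\Omega}-\tfrac{\pi N_+}2$. Next, because $\xi(\zeta)$ minimizes $J_\zeta$ over all of $W^{1,2}(\Omega)$, testing against constant competitors gives $J_\zeta(\xi(\zeta))\leq J_\zeta(c)=-4\abs{K}e^{c/2}S(\zeta)+2\abs{K}(\alpha+\beta)c$ for every $c\in\mathbb{R}$, with $S(\zeta)=\int_\Omega e^{u_0+\frac12\zeta}\,dx+\int_\Omega e^{v_0-\frac12\zeta}\,dx$; the optimal choice is $c=2\ln\frac{\alpha+\beta}{S(\zeta)}$. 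Feeding this back, and invoking the Trudinger--Moser and Jensen inequalities exactly as in the derivation of (\ref{alphabetabar}) to bound $\ln S(\zeta)$ and the zero mode $\bar\xi(\zeta)$, one arrives at $J_\zeta(\xi(\zeta))\leq C$. Combined with the first step, $I(\xi(\zeta),\zeta)\geq \tfrac12\|\nabla\zeta\|_{L^2(\Omega)}^2-C''\geq c_0\|\zeta\|_{W^{1,2}(\Omega)}^2-C'''$, which is partial coercivity.

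The main obstacle is precisely this last estimate on $J_\zeta$ along its minimizing graph: one must ensure that the growth of $\ln S(\zeta)$ --- equivalently of $\|\nabla\xi(\zeta)\|_{L^2(\Omega)}$ --- measured against $\|\nabla\zeta\|_{L^2(\Omega)}$ stays subordinate to the leading term $\tfrac12\|\nabla\zeta\|_{L^2(\Omega)}^2$ rather than competing with it. This is exactly where the natural constraints (\ref{doublecon1})--(\ref{doublecon2}) and the consequent bounds (\ref{alphabetabar}) on $\bar\xi,\bar\zeta$ enter: they are what render the intermediate estimates uniform in $\zeta$ and allow the argument to run parallel to the bounded-domain partial coercivity lemma, with the constant test function $w\equiv1$ available on the torus replacing the $W^{1,2}_0(\Omega)$-specific estimate used there. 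The remaining ingredients --- weak lower semicontinuity and the convexity of $J_\zeta$ in $\xi$ --- are already at our disposal from Lemma \ref{jminlem}.
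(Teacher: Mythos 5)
Your proof follows essentially the same strategy as the paper's, with one variant step worth comparing. The paper's argument is exactly your outline: use the splitting (\ref{doubleij}), bound the linear term by $4\pi N_-\bar{\zeta}$ together with the bound (\ref{alphabetabar}) on $\bar{\zeta}$, and bound $J_{\zeta}(\xi)$ from above by evaluating $J_{\zeta}$ at a fixed competitor --- the paper simply takes the competitor $0$, writing $J_{\zeta}(\xi)\leq J_{\zeta}(0)=-4\abs{K}\int_{\Omega}\bigl[e^{u_0+\frac{1}{2}\zeta}+e^{v_0-\frac{1}{2}\zeta}\bigr]dx$ and then controlling this exponential mass via the boundedness of $\bar{\zeta}$ and the Trudinger--Moser inequality applied to the mean-zero part $\tilde{\zeta}$. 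Your variant, optimizing over constant competitors to get $J_{\zeta}(\xi)\leq-4\abs{K}(\alpha+\beta)\bigl[1+\ln\bigl(S(\zeta)/(\alpha+\beta)\bigr)\bigr]$, is correctly computed (the optimal $c=2\ln\frac{\alpha+\beta}{S(\zeta)}$ is right) and is quantitatively sharper, since it requires control only of $\ln S(\zeta)$ rather than of $S(\zeta)$ itself. Be aware, though, that Trudinger--Moser alone gives $\ln S(\zeta)\leq C+C\|\nabla\zeta\|_{L^2(\Omega)}^2+\tfrac{1}{2}\abs{\bar{\zeta}}$, not a constant, so without a smallness condition on $-\abs{K}(\alpha+\beta)$ this bound still leaves a term competing with $\tfrac{1}{2}\|\nabla\zeta\|_{L^2(\Omega)}^2$; the uniform bound $J_{\zeta}(\xi)\leq C$ that you assert is exactly the same Trudinger--Moser step the paper invokes when it declares $J_{\zeta}(0)\leq C'$ (uniform control of $\int_{\Omega}\bigl[e^{\frac{1}{2}\tilde{\zeta}}+e^{-\frac{1}{2}\tilde{\zeta}}\bigr]dx$ for mean-zero $\tilde{\zeta}$, combined with (\ref{alphabetabar})). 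Granted that step, your argument closes and is complete at the same level of rigor as the paper's. Finally, the $w\equiv1$ identity (that the exponential part of $J_{\zeta}$ at the constrained minimizer equals $-4\abs{K}(\alpha+\beta)$) is a nice observation the paper does not use, but it is not needed here: the constant-competitor estimate alone does the work, precisely because it avoids the gradient term $-\tfrac{\abs{K}}{8}\|\nabla\xi\|_{L^2(\Omega)}^2$.
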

\begin{proof}
First, we note that since for any $\zeta$ the functional $J_{\zeta}(\xi)$ is minimized uniquely by $\xi$, we have
\begin{equation}\label{jzetazero}
 J_{\zeta}(\xi)\leq J_{\zeta}(0)=-4\abs{K}\int\displaylimits_{\Omega}\bigg[e^{u_0+\frac{1}{2}\zeta}+e^{v_0-\frac{1}{2}\zeta}\bigg]dx
\leq C\int\displaylimits_{\Omega}\bigg[e^{\frac{1}{2}\tilde{\zeta}}+e^{-\frac{1}{2}\tilde{\zeta}}\bigg]dx
\leq2C\int\displaylimits_{\Omega}e^{\frac{1}{4}\tilde{\zeta}^2}dx\\
\leq C'.
\end{equation}
Here we have used the Trudinger-Moser inequality along with the fact that $\int\displaylimits_{\Omega}\tilde{\zeta}dx=0$.  This allows us to obtain
\begin{align}\label{doublecoercive}
\nonumber I(\xi,\zeta)&=\int\displaylimits_{\Omega}\frac{1}{2}\abs{\nabla\zeta}^2+\frac{4\pi N_-}{\abs{\Omega}}\zeta dx- J_{\zeta}(\xi)\\
&\nonumber\geq\frac{1}{2}\|\nabla\zeta\|_{L^2(\Omega)}^2+4\pi N_-\bar{\zeta}-J_{\zeta}(0)\\
&\geq\frac{1}{2}\|\nabla\zeta\|_{L^2(\Omega)}^2 -C(N_-,\alpha,\beta).
\end{align}
We now see that the functional given by (\ref{idouble}) is partially coercive and bounded below.  Furthermore, it is easy to see that the functional (\ref{idouble}) is weakly lower semicontinuous.   Therefore, a minimizer exists.

%

\end{proof}

\begin{lem}\label{weakdoublemin}
A minimizing sequence $\left\{(\xi_n,\zeta_n)\right\}_{n=1}^{\infty}$ of (\ref{doublemin}) satisfies $\xi_n\rightharpoonup\xi$ weakly in $W^{1,2}(\Omega)$ and $\zeta_n\rightharpoonup\zeta$ weakly in $W^{1,2}(\Omega)$ as $n\rightarrow\infty$.
\end{lem}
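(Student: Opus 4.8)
The plan is to follow the same strategy as in the proof of Lemma~\ref{weakmin} for the bounded domain, with the extra ingredient that the zero-average decomposition $\mathscr{H}=\tilde{\mathscr{H}}\oplus\mathbb{R}$ must be used throughout, since on the torus the Poincar\'e inequality only controls the zero-mean component of a function.

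First I would observe that, since $(\xi_n,\zeta_n)\in\mathcal{A}$, condition (ii) holds for each $n$; because (ii) is precisely the first-variation equation of the strictly convex functional $J_{\zeta_n}$, Lemma~\ref{jminlem} shows that $\xi_n$ is the unique global minimizer of $J_{\zeta_n}(\cdot)$. In particular $J_{\zeta_n}(\xi_n)\le J_{\zeta_n}(0)$, and the chain of estimates leading to (\ref{jzetazero}) gives $J_{\zeta_n}(\xi_n)\le C'$ uniformly in $n$. Moreover, membership in $\mathcal{A}$ forces the exact integral constraints (\ref{doublecon1})--(\ref{doublecon2}) on $(\xi_n,\zeta_n)$, so the bounds (\ref{alphabetabar}) apply and yield $\abs{\bar{\xi}_n}+\abs{\bar{\zeta}_n}\le C$ for all $n$.

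Next I would exploit partial coercivity. Since a minimizing sequence is bounded above, say $I(\xi_n,\zeta_n)\le M_0$, the estimate (\ref{doublecoercive}) gives $M_0\ge\tfrac12\|\nabla\zeta_n\|_{L^2(\Omega)}^2-C$, so $\{\nabla\zeta_n\}$ is bounded in $L^2(\Omega)$; equivalently $\{\tilde{\zeta}_n\}$ is bounded in $\tilde{\mathscr{H}}$ by the Poincar\'e--Wirtinger inequality, and combined with the bound on $\bar{\zeta}_n$ this shows $\{\zeta_n\}$ is bounded in $W^{1,2}(\Omega)$. For $\xi_n$ I would combine $J_{\zeta_n}(\xi_n)\le C'$ with the lower bound (\ref{jzetamin4}), i.e.\ $J_{\zeta_n}(\xi_n)\ge-\tfrac{\abs{K}}{8}\|\nabla\xi_n\|_{L^2(\Omega)}^2-C(\bar{\xi}_n,\abs{K},\abs{\Omega},N_+)$, where the constant is uniformly bounded thanks to (\ref{alphabetabar}); since $-\abs{K}>0$ this forces an upper bound on $\|\nabla\xi_n\|_{L^2(\Omega)}^2$, hence $\{\tilde{\xi}_n\}$ is bounded in $\tilde{\mathscr{H}}$ and, together with the bound on $\bar{\xi}_n$, $\{\xi_n\}$ is bounded in $W^{1,2}(\Omega)$. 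By reflexivity of $W^{1,2}(\Omega)$, passing to a subsequence (not relabelled) gives $\xi_n\rightharpoonup\xi$ and $\zeta_n\rightharpoonup\zeta$ weakly in $W^{1,2}(\Omega)$, which is the assertion.

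The main obstacle, and the only genuine difference from the bounded-domain case, is that coercivity controls only the gradients, hence only the zero-mean parts $\tilde{\xi}_n,\tilde{\zeta}_n$; the constant parts $\bar{\xi}_n,\bar{\zeta}_n$ are not seen by the functional. The resolution is to notice that any element of $\mathcal{A}$ must satisfy the integral identities (\ref{doublecon1})--(\ref{doublecon2}), which is what pins down $\bar{\xi}_n$ and $\bar{\zeta}_n$ via (\ref{alphabetabar}); one should take care to verify that the constant $C(\bar{\xi}_n,\abs{K},\abs{\Omega},N_+)$ appearing in (\ref{jzetamin4}) remains bounded along the sequence, which is exactly what (\ref{alphabetabar}) provides.
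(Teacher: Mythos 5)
Your argument is essentially the paper's own proof: partial coercivity (\ref{doublecoercive}) bounds $\nabla\zeta_n$ in $L^2(\Omega)$, the uniform bound $J_{\zeta_n}(\xi_n)\le J_{\zeta_n}(0)\le C'$ combined with the lower bound (\ref{jzetamin4}) bounds $\nabla\xi_n$, and the average bounds (\ref{alphabetabar}) together with the Poincar\'e inequality give boundedness in $W^{1,2}(\Omega)$, whence weak convergence along a subsequence by reflexivity. One caveat: your claim that membership in $\mathcal{A}$ forces both (\ref{doublecon1}) and (\ref{doublecon2}) individually overstates condition (ii) --- taking $w\equiv 1$ there yields only the sum constraint $\alpha+\beta$ --- but since the paper invokes (\ref{alphabetabar}) at exactly this point without further justification, your handling of the mean components $\bar{\xi}_n,\bar{\zeta}_n$ is, if anything, more explicit than the paper's.
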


\begin{proof}
We let $\left\{(\xi_n,\zeta_n)\right\}_{n=1}^{\infty}$ be any minimizing sequence of (\ref{doublemin}) such that
\begin{equation*}
I(\xi_1,\zeta_1)\geq I(\xi_2,\zeta_2)\geq\cdots\geq I(\xi_n,\zeta_n)\geq\cdots.
\end{equation*}
We then define
\begin{equation}
\eta_0:=\inf\left\{I(\xi,\zeta)\mid (\xi,\zeta)\in\mathcal{A}\right\}=\lim\limits_{n\rightarrow\infty}I(\xi_n,\zeta_n).
\end{equation}
It is easy to see that by (\ref{doublecoercive}), the sequence $\left\{\zeta_n\right\}_{n=1}^{\infty}$ satisfies
\begin{equation}\label{doubbound3}
M_0\geq I(\xi_n,\zeta_n)\geq\frac{1}{2}\|\nabla\zeta_n\|_{L^2(\Omega)}^2-C(N_-,\alpha,\beta).
\end{equation}
where the upper bound, $M_0$, comes from the fact that a minimizing sequence is bounded.  By (\ref{doubbound3}), we see that $\nabla\zeta_n$ is bounded in $L^2(\Omega)$.  Furthermore, the Poincar\'e inequality on $W^{1,2}(\Omega)$ with $\int\displaylimits_{\Omega}fdx=0$ gives
\begin{equation}\label{dpoin1}
\|\zeta_n\|_{L^2(\Omega)}^2\leq C\|\nabla\zeta_n\|_{L^2(\Omega)}^2.
\end{equation}

When we insert (\ref{dpoin1}) into (\ref{doublecoercive}), we establish that $\left\{\zeta_n\right\}_{n=1}^{\infty}$ is bounded in $L^2(\Omega)$.  Notice that we get
\begin{equation}\label{doublein46}
M_0\geq C_2\|\zeta_n\|_{L^2(\Omega)}^2-C(N_-,\alpha,\beta).
\end{equation}
for some $C_2>0$.  Inequality (\ref{doublein46}) implies that $\left\{\zeta_n\right\}_{n=1}^{\infty}$ is bounded in $L^2(\Omega)$, and ultimately bounded in  $W^{1,2}(\Omega)$. 

 In a similar fashion, we use (\ref{jzetazero}) to obtain
\begin{equation}\label{doublein47}
M_0'\geq-\frac{\abs{K}}{8}\|\nabla\xi_n\|_{L^2(\Omega)}^2-C(\bar{\xi_n},\abs{K},\abs{\Omega},N_+)\geq C_3\|\xi_n\|_{L^2(\Omega)}^2-C(\bar{\xi_n},\abs{K},\abs{\Omega},N_+)
\end{equation}
where $C_3>0$.  Inequality (\ref{doublein47}) establishes the boundedness of $\left\{\xi_n\right\}_{n=1}^{\infty}$ in $W^{1,2}(\Omega)$.  We then conclude that (passing to a subsequence if necessary) $\xi_n\rightharpoonup \xi$ weakly in $W^{1,2}(\Omega)$.
\end{proof}
We now show that the pair of functions, $(\xi,\zeta)$ found in Lemma \ref{weakdoublemin} belong to the admissible class, $\mathcal{A}$, given by (\ref{doubleadmis}).

\begin{lem}\label{doubleadmislem}
The functions $\xi$ and $\zeta$ defined in Lemma \ref{weakdoublemin} satisfy $(\xi,\zeta)\in\mathcal{A}$
\end{lem}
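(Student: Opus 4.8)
The plan is to pass to the limit in condition (ii) along the minimizing sequence. Recall from Lemma~\ref{weakdoublemin} that $\xi_n\rightharpoonup\xi$ and $\zeta_n\rightharpoonup\zeta$ weakly in $W^{1,2}(\Omega)$, and that, since $(\xi_n,\zeta_n)\in\mathcal{A}$, each pair satisfies
\begin{equation*}
\int_\Omega\Bigl[\nabla\xi_n\cdot\nabla w+8we^{u_0+\frac12(\xi_n+\zeta_n)}+8we^{v_0+\frac12(\xi_n-\zeta_n)}-8w+\tfrac{4\pi N_+}{\abs{\Omega}}w\Bigr]dx=0,\qquad\forall w\in W^{1,2}(\Omega).
\end{equation*}
By the compact embedding $W^{1,2}(\Omega)\hookrightarrow L^2(\Omega)$ on the torus, after passing to a further subsequence we may assume $\xi_n\to\xi$ and $\zeta_n\to\zeta$ strongly in $L^2(\Omega)$ and a.e.\ in $\Omega$. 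For a fixed $w\in W^{1,2}(\Omega)$ we have $\int_\Omega\nabla\xi_n\cdot\nabla w\,dx\to\int_\Omega\nabla\xi\cdot\nabla w\,dx$ from the weak convergence, while the terms $-8w$ and $\tfrac{4\pi N_+}{\abs{\Omega}}w$ do not depend on $n$; hence everything reduces to the convergence of the two exponential integrals.

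For the exponential terms, I would first note that $u_0,v_0$ are bounded above on $\Omega$ (indeed $e^{u_0},e^{v_0}\in L^\infty(\Omega)$, vanishing like $\abs{x-p_j}^2$ near the vortices), so that $e^{u_0},e^{v_0}\le C$. Using $e^{u_0}\bigl(e^{\frac12(\xi_n+\zeta_n)}-e^{\frac12(\xi+\zeta)}\bigr)$ together with the elementary bound $\abs{e^a-e^b}\le\bigl(e^{\abs a}+e^{\abs b}\bigr)\abs{a-b}$, one gets
\begin{equation*}
\Bigl|\int_\Omega w\bigl(e^{u_0+\frac12(\xi_n+\zeta_n)}-e^{u_0+\frac12(\xi+\zeta)}\bigr)dx\Bigr|\le C\int_\Omega\abs{w}\,e^{\frac{\abs{\xi_n}+\abs{\xi}}{2}}\,e^{\frac{\abs{\zeta_n}+\abs{\zeta}}{2}}\,\abs{(\xi_n-\xi)+(\zeta_n-\zeta)}\,dx.
\end{equation*}
Applying the generalized H\"older inequality, placing $\abs{w}$ and the four exponential factors $e^{\abs{\xi_n}/2},e^{\abs{\xi}/2},e^{\abs{\zeta_n}/2},e^{\abs{\zeta}/2}$ in high $L^p$ spaces and $\abs{(\xi_n-\xi)+(\zeta_n-\zeta)}$ in $L^2(\Omega)$, with exponents chosen so the reciprocals sum to one: this is possible because $w\in W^{1,2}(\Omega)\subset L^p(\Omega)$ for every finite $p$ (two dimensions), and because by the Trudinger--Moser inequality together with the uniform bound on $\|\xi_n\|_{W^{1,2}(\Omega)},\|\zeta_n\|_{W^{1,2}(\Omega)}$ from Lemma~\ref{weakdoublemin}, each factor $e^{p\abs{\xi_n}/2}, e^{p\abs{\zeta_n}/2}$ lies in $L^1(\Omega)$ with a bound uniform in $n$. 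Consequently the right-hand side is dominated by $C'\bigl(\|\xi_n-\xi\|_{L^2(\Omega)}+\|\zeta_n-\zeta\|_{L^2(\Omega)}\bigr)\to0$. The second exponential integral is handled identically, with $v_0$ in place of $u_0$ and $\xi-\zeta$ in place of $\xi+\zeta$.

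Combining the three limits, I would let $n\to\infty$ in the displayed identity above to conclude that $(\xi,\zeta)$ satisfies condition (ii); since $\xi,\zeta\in W^{1,2}(\Omega)$ this yields $(\xi,\zeta)\in\mathcal{A}$. The only delicate point is the uniform control of the exponential nonlinearity, which rests entirely on the Trudinger--Moser inequality on the torus and the $W^{1,2}$ bounds of the minimizing sequence; apart from replacing $W^{1,2}_0(\Omega)$ (and its Poincar\'e inequality) by $W^{1,2}(\Omega)$ on the torus, this argument is the exact analogue of the corresponding lemma for bounded domains in Section~3.
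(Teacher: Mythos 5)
Your proposal is correct and follows essentially the same route as the paper: the paper also passes to the limit in condition (ii) using weak $W^{1,2}$ convergence, strong $L^2$ convergence along a subsequence, and controls the exponential terms via the mean value theorem, the generalized H\"older inequality, and the Trudinger--Moser inequality (deferring the details to the identical computation in Section 3). If anything, your treatment is slightly more careful than the paper's, since you keep the test function $w$ inside the integrand and place it in a high $L^p$ space, whereas the paper only displays the estimate for the exponential integrals without $w$.
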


\begin{proof}
Since $\xi_n,\zeta_n\in\mathcal{A}$, we know that they satisfy condition (ii).  We can write this as
\begin{multline}\label{double18}
\int\displaylimits_{\Omega}\bigg[\nabla \xi_n\cdot\nabla w+8we^{u_0+\frac{1}{2}(\xi_n+\zeta_n)}+8we^{v_0+\frac{1}{2}(\xi_n-\zeta_n)}-8w+\frac{4\pi N_+}{\abs{\Omega}}w\bigg]dx=0,\qquad\forall w\in W^{1,2}(\Omega).
\end{multline}
Now, we recall that $\xi_n\rightharpoonup\xi$ and $\zeta_n\rightharpoonup\zeta$ weakly in $W^{1,2}(\Omega)$.  From this weak convergence, we know that we can find subsequences, denoted by $\left\{\xi_{n_j}\right\}_{j=1}^{\infty}$ and $\left\{\zeta_{n_j}\right\}_{j=1}^{\infty}$, such that they converge strongly to $\xi$ and $\zeta$ in $L^2(\Omega)$.  Without loss of generality (passing to a subsequence if necessary), we will assume that $\xi_n\rightarrow\xi$, and $\zeta_n\rightarrow\zeta$ strongly in $L^2(\Omega)$ as $n\rightarrow\infty$.  As in section 3, we obtain convergence of the exponential integrals by way of the Trudinger-Moser inequality, the general H\"older inequality and the mean value theorem.  Details are omitted here as they are a repetition of the work in section 3.  Therefore, we have
\begin{equation}\label{doublee1}
\int\displaylimits_{\Omega}e^{u_0+\frac{1}{2}(\xi_n+\zeta_n)}dx\rightarrow\int\displaylimits_{\Omega}e^{u_0+\frac{1}{2}(\xi+\zeta)}dx,
\end{equation}
and
\begin{equation}\label{doublee2}
\int\displaylimits_{\Omega}e^{b_0+\frac{1}{2}(\xi_n-\zeta_n)}dx\rightarrow\int\displaylimits_{\Omega}e^{v_0+\frac{1}{2}(\xi-\zeta)}dx.
\end{equation}
%
We now take $n\rightarrow\infty$ in (\ref{double18}) and see that $(\xi,\zeta)\in\mathcal{A}$. 
\end{proof}

\begin{lem}\label{doublestrong}
The convergence of $\xi_n\rightarrow\xi$ is strong in $W^{1,2}(\Omega)$
\end{lem}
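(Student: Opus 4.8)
The plan is to mirror the bounded-domain argument that upgraded the weak convergence of $\{\xi_n\}$ to strong convergence, now carried out on the torus. Since $(\xi_n,\zeta_n)\in\mathcal{A}$ and, by Lemma \ref{doubleadmislem}, $(\xi,\zeta)\in\mathcal{A}$, both pairs satisfy condition (ii). Testing condition (ii) for $(\xi_n,\zeta_n)$ and for $(\xi,\zeta)$ against the admissible function $w=\xi_n-\xi\in W^{1,2}(\Omega)$ and subtracting the two identities, the constant terms $-8w$ and $\frac{4\pi N_+}{\abs{\Omega}}w$ cancel, leaving
\begin{multline*}
\int_{\Omega}\abs{\nabla(\xi_n-\xi)}^2\,dx
=-8\int_{\Omega}(\xi_n-\xi)e^{u_0}\left(e^{\frac12(\xi_n+\zeta_n)}-e^{\frac12(\xi+\zeta)}\right)dx\\
-8\int_{\Omega}(\xi_n-\xi)e^{v_0}\left(e^{\frac12(\xi_n-\zeta_n)}-e^{\frac12(\xi-\zeta)}\right)dx.
\end{multline*}

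The remaining task is to show the right-hand side tends to $0$. As in the estimates preceding (\ref{doublee1}) and (\ref{doublee2}), I would apply the mean value theorem to bound $\abs{e^{\frac12(\xi_n+\zeta_n)}-e^{\frac12(\xi+\zeta)}}\leq Ce^{\frac12(\abs{\xi_n}+\abs{\xi}+\abs{\zeta_n}+\abs{\zeta})}\abs{(\xi_n-\xi)+(\zeta_n-\zeta)}$, and similarly for the second difference, then use the generalized H\"older inequality to peel off the factor $\abs{(\xi_n-\xi)+(\zeta_n-\zeta)}$ together with the extra factor $\abs{\xi_n-\xi}$ in $L^2(\Omega)$. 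The surviving exponential factors are controlled by the Trudinger--Moser inequality on $W^{1,2}(\Omega)$: since $\{\xi_n\}$ and $\{\zeta_n\}$ are bounded in $W^{1,2}(\Omega)$ by Lemma \ref{weakdoublemin}, the moments $\int_{\Omega}e^{c\abs{\xi_n}}\,dx$ and $\int_{\Omega}e^{c\abs{\zeta_n}}\,dx$ are bounded uniformly in $n$ for any fixed $c>0$, while $u_0,v_0\in L^1(\Omega)$ contribute only fixed constants. Since $\xi_n\to\xi$ and $\zeta_n\to\zeta$ strongly in $L^2(\Omega)$, both integrals on the right vanish as $n\to\infty$.

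Hence $\nabla\xi_n\to\nabla\xi$ in $L^2(\Omega)$, and combined with $\xi_n\to\xi$ in $L^2(\Omega)$ this yields $\xi_n\to\xi$ strongly in $W^{1,2}(\Omega)$. The only point requiring care is keeping all the exponential bounds uniform in $n$, but this is exactly the mechanism already exploited in Lemma \ref{doubleadmislem}; the present proof differs only by carrying along the additional small factor $\|\xi_n-\xi\|_{L^2(\Omega)}$, so no genuinely new obstacle appears.
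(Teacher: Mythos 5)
Your proposal is correct and follows essentially the same route as the paper's own proof: you test condition (ii) for both $(\xi_n,\zeta_n)$ and the limit pair $(\xi,\zeta)$ with $w=\xi_n-\xi$, subtract so the constant terms cancel, and show the remaining exponential terms vanish using the mean value theorem, the generalized H\"older inequality, the Trudinger--Moser bound (uniform in $n$ since $\{\xi_n\},\{\zeta_n\}$ are bounded in $W^{1,2}(\Omega)$), and the strong $L^2$ convergence. The paper compresses this last step by citing the analogous section~3 estimates, so your write-up merely makes explicit what the paper leaves implicit.
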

\begin{proof}
First, we let $w=\xi_n-\xi$ in the equations given by (ii) and (\ref{double18}).  With this substitution, we end up with both
\begin{equation}\label{doublez1}
\int\displaylimits_{\Omega}\bigg[\nabla\xi\cdot\nabla (\xi_n-\xi)+8(\xi_n-\xi)\left(e^{u_0+\frac{1}{2}(\xi+\zeta)}+e^{v_0+\frac{1}{2}(\xi-\zeta)}-1\right)+\frac{4\pi N_+}{\abs{\Omega}}(\xi_n-\xi)\bigg]dx=0
\end{equation}
and
\begin{multline}\label{doublez2}
\int\displaylimits_{\Omega}\bigg[\nabla \xi_n\cdot\nabla (\xi_n-\xi)+8(\xi_n-\xi)\left(e^{u_0+\frac{1}{2}(\xi_n+\zeta_n)}+e^{v_0+\frac{1}{2}(\xi_n-\zeta_n)}-1\right)+\frac{4\pi N_+}{\abs{\Omega}}(\xi_n-\xi)\bigg]dx=0.
\end{multline}
Next, we subtract (\ref{doublez2}) from (\ref{doublez1}) to obtain,
\begin{equation}\label{doublez3}
\int\displaylimits_{\Omega}\bigg[-\abs{\nabla\xi_n-\nabla\xi}^2+8(\xi_n-\xi)\left\{e^{u_0}\left(e^{\frac{1}{2}(\xi+\zeta)}-e^{\frac{1}{2}(\xi_n+\zeta_n)}\right)+e^{v_0}\left(e^{\frac{1}{2}(\xi-\zeta)}-e^{\frac{1}{2}(\xi_n-\zeta_n)} \right)\right\}\bigg]dx=0.
\end{equation}
We are able to rewrite (\ref{doublez3}) and take the limit as $n\rightarrow\infty$ to observe that,
\begin{multline}\label{doublez4}
\int\displaylimits_{\Omega}\abs{\nabla\xi_n-\nabla\xi}^2dx=\int\displaylimits_{\Omega}\bigg[8(\xi_n-\xi)e^{u_0}\left(e^{\frac{1}{2}(\xi+\zeta)}-e^{\frac{1}{2}(\xi_n+\zeta_n)}\right)\\+8(\xi_n-\xi)e^{v_0}\left(e^{\frac{1}{2}(\xi-\zeta)}-e^{\frac{1}{2}(\xi_n-\zeta_n)} \right)\bigg]dx\rightarrow 0\text{ as }n\rightarrow\infty.
\end{multline}
Therefore, we have shown that $\nabla\xi_n\rightarrow\nabla\xi$ in $L^2(\Omega)$.  We may conclude now that $\xi_n\rightarrow\xi$ strongly in $W^{1,2}(\Omega)$.
\end{proof}

We now state the main theorem of this section,
\begin{thm}\label{doubleminsolution}
There exists a pair $(\xi,\zeta)\in\mathcal{A}$ that is a solution to the minimization problem (\ref{doublemin}).
\end{thm}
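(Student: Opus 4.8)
The plan is to assemble the lemmas already established, exactly paralleling the proof of Theorem \ref{tm21}. Let $\{(\xi_n,\zeta_n)\}_{n=1}^\infty\subset\mathcal{A}$ be a minimizing sequence, so that $I(\xi_n,\zeta_n)\to\eta_0:=\inf_{\mathcal{A}}I$. By Lemma \ref{weakdoublemin} and Lemma \ref{doublestrong}, after passing to a subsequence we may assume $\xi_n\to\xi$ strongly in $W^{1,2}(\Omega)$ and $\zeta_n\rightharpoonup\zeta$ weakly in $W^{1,2}(\Omega)$; by compact embedding $\zeta_n\to\zeta$ strongly in $L^2(\Omega)$. Lemma \ref{doubleadmislem} then gives $(\xi,\zeta)\in\mathcal{A}$, hence $I(\xi,\zeta)\geq\eta_0$. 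It remains only to prove the reverse inequality $\eta_0\geq I(\xi,\zeta)$, and the conclusion follows.

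For the reverse inequality, use the splitting (\ref{doubleij}), namely $I(\xi_n,\zeta_n)=\tfrac12\|\nabla\zeta_n\|_{L^2(\Omega)}^2+\frac{4\pi N_-}{\abs{\Omega}}\int_\Omega\zeta_n\,dx-J_{\zeta_n}(\xi_n)$. The first term is weakly lower semicontinuous, so $\liminf_n\tfrac12\|\nabla\zeta_n\|_{L^2(\Omega)}^2\geq\tfrac12\|\nabla\zeta\|_{L^2(\Omega)}^2$; the linear term converges since $\zeta_n\to\zeta$ in $L^2(\Omega)$. For the term $J_{\zeta_n}(\xi_n)$, the gradient part $-\tfrac{\abs{K}}{8}\|\nabla\xi_n\|_{L^2(\Omega)}^2$ converges to $-\tfrac{\abs{K}}{8}\|\nabla\xi\|_{L^2(\Omega)}^2$ by the strong convergence $\xi_n\to\xi$ in $W^{1,2}(\Omega)$; the exponential integrals converge by (\ref{doublee1})--(\ref{doublee2}), established via the Trudinger--Moser inequality, the general H\"older inequality, and the mean value theorem as in Lemma \ref{doubleadmislem}; and the remaining linear terms in $\xi_n$ converge by strong $L^2$ convergence. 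Hence $J_{\zeta_n}(\xi_n)\to J_\zeta(\xi)$, and therefore
\[
\eta_0=\lim_{n\to\infty}I(\xi_n,\zeta_n)\geq\tfrac12\|\nabla\zeta\|_{L^2(\Omega)}^2+\frac{4\pi N_-}{\abs{\Omega}}\int_\Omega\zeta\,dx-J_\zeta(\xi)=I(\xi,\zeta).
\]
Combined with $I(\xi,\zeta)\geq\eta_0$ this gives $I(\xi,\zeta)=\eta_0$, so $(\xi,\zeta)$ solves (\ref{doublemin}).

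The only step requiring care is the convergence $J_{\zeta_n}(\xi_n)\to J_\zeta(\xi)$, since both $\xi_n$ and $\zeta_n$ appear simultaneously inside the exponentials; but this is precisely the estimate performed in Lemma \ref{doubleadmislem}, and the bounds (\ref{alphabetabar}) on the zero modes $\bar\xi,\bar\zeta$ ensure no constant escapes to infinity. Everything else is a direct citation of the preceding lemmas, so I expect this (mild) point to be the main obstacle.
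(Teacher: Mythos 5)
Your proposal is correct and follows essentially the same route as the paper: split $I$ via (\ref{doubleij}), use weak lower semicontinuity of $\|\nabla\zeta_n\|_{L^2(\Omega)}^2$ together with the convergence $J_{\zeta_n}(\xi_n)\to J_{\zeta}(\xi)$ coming from the strong convergence of $\xi_n$ and the exponential-integral limits, then close the loop with $\eta_0\geq I(\xi,\zeta)\geq\eta_0$ since $(\xi,\zeta)\in\mathcal{A}$. The only difference is that you spell out the justification of $J_{\zeta_n}(\xi_n)\to J_{\zeta}(\xi)$, which the paper simply asserts as a consequence of the established convergences.
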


\begin{proof}
We recall that $\zeta_n\rightharpoonup\zeta$ weakly in $W^{1,2}(\Omega)$ and $\xi_n\rightarrow\xi$ strongly in $W^{1,2}(\Omega).$ As a result, we have
\begin{equation}\label{double419}
\lim\limits_{n\rightarrow\infty}J_{\zeta_n}(\xi_n)=J_{\zeta}(\xi).
\end{equation}
We use (\ref{doubleij}), (\ref{double419}), and recall that $\displaystyle\eta_0:=\inf\left\{I(\xi,\zeta)\mid (\xi,\zeta)\in\mathcal{A}\right\}=\lim\limits_{n\rightarrow\infty}I(\xi_n,\zeta_n)$ to obtain the estimate
\begin{equation}
\eta_0=\lim\limits_{n\rightarrow\infty}I(\xi_n,\zeta_n)\geq \frac{1}{2}\|\nabla\zeta\|_{L^2(\Omega)}^2+\int\displaylimits_{\Omega}\frac{4\pi N_-}{\abs{\Omega}}\zeta dx-J_{\zeta}(\xi)\geq I(\xi,\zeta)\geq\eta_0.
\end{equation}
Since we have already established that $(\xi,\zeta)\in\mathcal{A}$, we see that the pair solves the minimization problem given by (\ref{doublemin}).
\end{proof}

The final lemma of this section, establishing existence of a solution to the system (\ref{doubleEQ2}), is stated below.
\begin{lem}\label{doublesolution}
The pair defined in Lemma \ref{weakdoublemin} is a solution to the system given by (\ref{doubleEQ2}) and ultimately to (\ref{doubleEQ1}).
\end{lem}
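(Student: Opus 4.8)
The plan is to repeat, in the doubly-periodic setting, the argument carried out in the proof of Lemma~\ref{chainrule} for the bounded-domain case. The first observation is that condition (ii) defining the admissible class $\mathcal{A}$ is exactly the weak formulation of the first equation in (\ref{doubleEQQ}), hence of the first equation in (\ref{doubleEQ2}) after multiplication by $\abs{K}/4$. Since the pair $(\xi,\zeta)$ produced in Lemma~\ref{weakdoublemin} lies in $\mathcal{A}$ by Lemma~\ref{doubleadmislem} and solves the minimization problem (\ref{doublemin}) by Theorem~\ref{doubleminsolution}, the first equation of the system holds weakly. It therefore remains only to extract the second equation from the minimality of $(\xi,\zeta)$.

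To do this I would perform a constrained variation in the $\zeta$-slot. Fix an arbitrary test function $\hat{\zeta}\in W^{1,2}(\Omega)$ and set $\zeta_{\tau}=\zeta+\tau\hat{\zeta}$ for $\tau$ near $0$. By Lemma~\ref{jminlem}, for each such $\tau$ there is a unique minimizer $\xi_{\tau}\in W^{1,2}(\Omega)$ of $J_{\zeta_{\tau}}(\cdot)$, and $(\xi_{\tau},\zeta_{\tau})\in\mathcal{A}$ because the Euler--Lagrange equation of $J_{\zeta_{\tau}}$ at its minimizer is precisely condition (ii) with $\zeta$ replaced by $\zeta_{\tau}$. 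Using the implicit function theorem applied to the strictly convex coercive functional $J_{\zeta_{\tau}}$ — with the Trudinger--Moser inequality controlling the exponential terms and the uniform bounds (\ref{alphabetabar}) controlling $\bar{\xi}_{\tau}$ — one checks that $\tau\mapsto\xi_{\tau}$ is $C^{1}$ into $W^{1,2}(\Omega)$ with $\xi_{0}=\xi$; set $\hat{\xi}=\frac{d}{d\tau}\xi_{\tau}\big|_{\tau=0}$. Since $(\xi,\zeta)$ minimizes $I$ over $\mathcal{A}$ and the curve $\tau\mapsto(\xi_{\tau},\zeta_{\tau})$ stays in $\mathcal{A}$, the scalar function $\tau\mapsto I(\xi_{\tau},\zeta_{\tau})$ attains its minimum at $\tau=0$, so $\frac{d}{d\tau}I(\xi_{\tau},\zeta_{\tau})\big|_{\tau=0}=0$.

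Differentiating the explicit expression (\ref{idouble}) under the integral sign then yields an identity in which every term carrying the factor $\hat{\xi}$ assembles into exactly the first variation of $J_{\zeta}$ at $\xi$ in the direction $\hat{\xi}$, and this vanishes because $\xi$ is the global minimizer of $J_{\zeta}$ — equivalently, because condition (ii) holds with test function $w=\hat{\xi}$. What survives is
\begin{equation*}
\int_{\Omega}\left[\nabla\zeta\cdot\nabla\hat{\zeta}+2\abs{K}e^{u_0+\frac{1}{2}(\xi+\zeta)}\hat{\zeta}-2\abs{K}e^{v_0+\frac{1}{2}(\xi-\zeta)}\hat{\zeta}+\frac{4\pi N_-}{\abs{\Omega}}\hat{\zeta}\right]dx=0
\end{equation*}
for all $\hat{\zeta}\in W^{1,2}(\Omega)$, which is the weak form of the second equation in (\ref{doubleEQ2}). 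Thus $(\xi,\zeta)$ is a weak solution of (\ref{doubleEQ2}); dividing the first equation back by $\abs{K}/4$ recovers (\ref{doubleEQQ}), and the inverse substitution $u_1=\tfrac12(\xi+\zeta)$, $v_1=\tfrac12(\xi-\zeta)$ returns a weak solution of (\ref{doubleEQ1}). Standard elliptic regularity (away from the vortex points $p_j,q_j$, where the logarithmic singularities are already carried by $u_0,v_0$) upgrades this to a classical solution.

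I expect the main obstacle to be the rigorous justification of the differentiability of $\tau\mapsto\xi_{\tau}$ and of the interchange of $\frac{d}{d\tau}$ with the integral in $I(\xi_{\tau},\zeta_{\tau})$: the exponential nonlinearities are controlled only through Trudinger--Moser, so one must verify that the $W^{1,2}$-valued map $\tau\mapsto\xi_{\tau}$ is genuinely $C^{1}$ and that the difference quotients of the exponential terms converge in $L^{1}(\Omega)$ uniformly for $\tau$ near $0$. Once this bookkeeping is in place, the algebraic cancellation of the $\hat{\xi}$-terms is automatic, exactly as in the proof of Lemma~\ref{chainrule}.
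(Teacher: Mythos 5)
Your proposal is correct and follows essentially the same constrained-variation argument as the paper's own proof (which mirrors Lemma \ref{chainrule}): condition (ii) gives the $\xi$-equation, the curve $\zeta_{\tau}=\zeta+\tau\hat{\zeta}$ with $\xi_{\tau}$ the minimizer of $J_{\zeta_{\tau}}$ stays in $\mathcal{A}$, and differentiating $I(\xi_{\tau},\zeta_{\tau})$ at $\tau=0$ leaves exactly the weak form of the $\zeta$-equation after the $\hat{\xi}$-terms cancel by (ii). Your only deviations are cosmetic and favorable: you label the equations consistently with the remark following (\ref{doubleadmis}) (the paper's prose swaps ``first'' and ``second'' here), and you flag the $C^{1}$ dependence of $\xi_{\tau}$ on $\tau$, a point the paper merely asserts.
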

\begin{proof}
First, we point out that the weak form of the second equation in (\ref{doubleEQQ}) is indeed the condition (ii) in (\ref{doubleadmis}).  Therefore, we only need to verify the first equation in (\ref{doubleEQQ}).   We proceed in a similar manner to that in Lemma \ref{chainrule}.  That is, we let $\hat{\zeta}$ be any test function in $W^{1,2}(\Omega)$ and set $\zeta_{\tau}=\zeta+\tau\hat{\zeta}$.  Since we have shown that for any $\zeta$ there is a unique minimizer $\xi$ of $J_{\zeta}(\xi)$, we denote the unique minimizer of $J_{\zeta_{\tau}}(\cdot)$ by $\xi_{\tau}$ and again let 
\begin{equation}
\tilde{\xi}=\left(\frac{d}{d\tau}\xi_{\tau}\right)\Bigg|_{\tau=0}.
\end{equation}
where $\xi_{\tau}$ depends upon $\zeta_{\tau}$.  Since the pair $(\xi,\zeta)$ minimizes $I(\cdot,\cdot)$, we see that $I(\xi_{\tau},\zeta_{\tau})$ attains its minimum at $\tau=0$.  Therefore, we have
\begin{equation}\label{doubleid}
\frac{d}{d\tau}I(\xi_{\tau},\zeta_{\tau})\Big|_{\tau=0}=0.
\end{equation}
Then, in view of the formulation of $I(\xi,\zeta)$ given by (\ref{idouble}), we rewrite (\ref{doubleid}) as
\begin{align}\label{double67}
&\int\displaylimits_{\Omega}\bigg[\nabla\zeta\cdot\nabla\hat{\zeta}+2\abs{K}e^{u_0+\frac{1}{2}(\xi+\zeta)}\hat{\zeta}-2\abs{K}e^{v_0+\frac{1}{2}(\xi-\zeta)}\hat{\zeta}+\frac{4\pi N_-}{\abs{\Omega}}\hat{\zeta}\bigg] dx\\&=\int\displaylimits_{\Omega}\bigg[-\frac{\abs{K}}{4}\nabla\xi\cdot\nabla\hat{\xi}-2\abs{K}e^{u_0+\frac{1}{2}(\xi+\zeta)}\hat{\xi}-2\abs{K}e^{v_0+\frac{1}{2}(\xi-\zeta)}\hat{\xi}+2\abs{K}\hat{\xi}-\frac{\pi\abs{K}N_+}{\Omega}\hat{\xi}\bigg]dx.
\end{align}
By the condition (ii) in (\ref{doubleadmis}), we see that the right hand side of (\ref{double67}) is zero.  Therefore, we have obtained
\begin{equation}
\int\displaylimits_{\Omega}\bigg[\nabla\zeta\cdot\nabla\hat{\zeta}+2\abs{K}e^{u_0+\frac{1}{2}(\xi+\zeta)}\hat{\zeta}-2\abs{K}e^{v_0+\frac{1}{2}(\xi-\zeta)}\hat{\zeta}+\frac{4\pi N_-}{\abs{\Omega}}\hat{\zeta}\bigg] dx=0
\end{equation}
for an arbitrary test function $\hat{\zeta}\in W^{1,2}(\Omega)$.  It is clear that this is the weak formulation of the first equation in (\ref{doubleEQQ}) and we have fully verified the system (\ref{doubleEQQ}).  
\end{proof}

\section{Discussion}
In this paper, we studied the existence of solutions to the vortex equations governing the fractional quantum Hall effect in two-dimensional bilayered systems.  We were able to extend the work done by Medina \cite{medina}, where solutions were established over the full-plane, as well as a doubly-periodic domain, for a positive definite coupling matrix, $K$.    

In this work, we established existence of topological solutions to the system (\ref{maineq}) over the full-plane when the coupling matrix is indefinite and satisfies $-4\leq\abs{K}<0$.  For these topological solutions, which are necessarily of finite energy, we also obtained that as $\abs{x}\rightarrow\infty$, the solutions decay as ${e^{-2\abs{x}}}/{\abs{x}^{1/2}}.$

We remark that when $\abs{K}<-4$, we were able to establish existence over a bounded domain, but were unable to extend the existence over a bounded domain to that of a full-plane.  

In addition to existence of solutions over the full-plane, we were also able to obtain solutions when the domain was a torus, or a doubly-periodic domain.  We established the necessary and sufficient conditions for solutions to occur.  That is, the domain, $\Omega$, must satisfy
\begin{equation*}
\abs{\Omega}>\frac{\pi}{2}\left(\abs{\frac{p}{q}}\abs{N_-}+N_+\right),
\end{equation*}
which agrees with the work done by Medina in \cite{medina}.  This work completes the existence theory for solutions to the system given by (\ref{maineq}) over doubly-periodic domains.  When the work of Medina \cite{medina} is combined with this paper, we see that the full range of $\abs{K}$ is covered for doubly-periodic domains.  Of future interest would be to establish existence theory of non-topological solutions to the system given by (\ref{maineq}) over the full-plane. 

Since we worked with an indefinite matrix, $K$, we had the parameter $q<0$.   We recall that in section 2, we discussed the impact of this parameter on the charge, pseudospin, and Chern-Simons flux.  We also noted that the filling factor, $\nu=\pi/p$ is independent of this filling factor.  Therefore, we may consider here a special filling factor, the much studied $\nu=5/2$ \cite{pfaf4,pfaf3,pfaf1}.  Here, we assume that $-4\leq\abs{K}<0$ so that solutions exist over both the full-plane and doubly periodic domains.  It is easy to see that since $\abs{K}=4q/p$, the parameter $q$ must satisfy $-1\leq q<0$.  Furthermore, we obtain the following for the charges of the upper and lower layers respectively
\begin{equation}
Q=-\frac{\pi}{2}N_+\qquad\text{and}\qquad \tilde{Q}=-\frac{\pi^2}{5q}N_-.
\end{equation}  
We see that the charge $\tilde{Q}$ will vary depending upon $q$, we recall that $N_-=N_1-N_2$, and we make the following observations:
\begin{enumerate}
\item If $N_->0$, then $\tilde{Q}\geq \frac{\pi^2}{5}N_-$.
\item If $N_-=0$, then $\tilde{Q}=0$.
\item If $N_-<0$, then $\tilde{Q}<\frac{\pi^2}{5}N_-$.
\end{enumerate} 
Therefore, for a filling factor of $\nu=5/2$, solutions to (\ref{maineq}) exist over the full-plane for all pairs $(p,q)=\left(\frac{2\pi}{5},q\right)$ where $-1\leq q<0$.  

Regarding the doubly periodic domain, we see that for $\nu=5/2$, the size of the domain must satisfy
\begin{equation}
\abs{\Omega}>\frac{\pi}{2}\left(\frac{2\pi}{5\abs{q}}\abs{N_-}+N_+\right).
\end{equation}

A similar analysis could be carried out for a variety of other possible fractional filling factors such as $\nu=4/5, 5/7, 6/7$, or even $\nu=1/(2k+1)$ \cite{nonconv,bondsling,cw,dasilva2016fractional,PhysRevB.95.125302,bondsling3}.

\bibliography{Indefinite}{}
\bibliographystyle{abbrv}

\end{document}